  \providecommand\BibTeX{{%
    \normalfont B\kern-0.5em{\scshape i\kern-0.25em b}\kern-0.8em\TeX}}}
\newcommand\LP[1]{\textcolor{red}{#1}}
\renewcommand{\setminus}{-}
\newcommand{\eat}[1]{}
\newcommand{\OMIT}[1]{}
\newcommand{\LRA}{\Leftrightarrow}
\newcommand{\cS}{\mathcal{S}}
\newcommand{\so}{\mathsf{o}}
\newcommand{\sn}{\mathsf{n}}
\newcommand{\barN}{\bar{N}}
\newcommand{\barM}{\bar{M}}
\newcommand{\bara}{\bar{a}}
\newcommand{\brt}{\bar{t}}
\newcommand{\df}{:=}
\newcommand{\NULL}{\sqlkwblack{NULL}}
\newcommand{\sN}{\mathbf{Name}}
\newcommand{\Null}{\{\NULL\}}
\newcommand{\Num}{\mathbf{Num}}
\newcommand{\Const}{\mathbf{Val}}
\newcommand{\notnull}[1]{#1_{\neq{\scriptscriptstyle \NULL}}}
\renewcommand{\epsilon}{\varepsilon}
\newcommand{\trv}[1]{{\normalfont\textbf{#1}}}
\newcommand{\TV}{\trv{T}}
\newcommand{\vt}{{\normalfont\textbf{t}}}
\newcommand{\vf}{{\normalfont\textbf{f}}}
\newcommand{\vs}{{\normalfont\textbf{s}}}
\newcommand{\vu}{{\normalfont\textbf{u}}}
\newcommand{\Conn}{\Gamma}
\newcommand{\conn}{\gamma}
\newcommand{\true}{\vt}
\newcommand{\false}{\vf}
\newcommand{\unknwn}{\vu}
\newcommand{\tv}{\pmb{\tau}}
\newcommand{\App}{\texttt{Apply}} 
\newcommand{\Count}{\texttt{Count}} 
\newcommand{\Group}{\texttt{Group}}
\newcommand{\Name}{\texttt{Name}}
\newcommand{\wfe}{\mathit{WFE}}
\newcommand{\wfc}{\mathit{WFC}}
\newcommand{\Card}{\mathsf{Card}}
\newcommand{\type}{\mathsf{type}}
\newcommand{\isnul}{\mathtt{isnull}}
\newcommand{\isnotnul}{\mathtt{notnull}}
\newcommand{\isempty}{\mathtt{empty}}
\newcommand{\all}{\mathtt{all}}
\newcommand{\any}{\mathtt{any}}
\newcommand{\eq}{\doteq}
\newcommand{\noteq}{\,/\!\!\!\doteq}
\newcommand{\semsql}[1]{{\llbracket{#1}\rrbracket}}
\newcommand{\op}{\, \mathsf{op} \,}
\newcommand{\semvl}[1]{{\llbracket{#1}\rrbracket}}
\newcommand{\semstrvl}[1]{{\llbracket{#1}\rrbracket}}
\newcommand{\utof}{{\scriptscriptstyle \pmb{\textsc{2VL}}}}
\newcommand{\semtwovl}[1]{{\llbracket{#1}\rrbracket}^{\utof}}
\newcommand{\semthreevl}[1]{{\llbracket{#1}\rrbracket}}
\newcommand{\semtwovlgen}[1]{{\llbracket{#1}\rrbracket}^{\star}}
\newcommand{\semmvl}[1]{{\llbracket{#1}\rrbracket}^{\mvl}}
\newcommand{\semjointvl}[1]{{\llbracket{#1}\rrbracket}}
\newcommand{\twovl}{\mathsf{2VL}}
\newcommand{\tinytwovl}{\mathsf{\scriptscriptstyle 2VL}}
\newcommand{\syneq}{\pmb{=}}
\newcommand{\threevl}{\mathsf{3VL}}
\newcommand{\mvl}{\mathsf{MVL}}
\newcommand{\bag}[1]{ \{ {#1} \} }
\newcommand{\ground}{\normalfont\textbf{gr}}
\newcommand{\semtwovlgrnd}[1]{{\llbracket{#1}\rrbracket}^{\ground}}
\newcommand{\semtwovlsyneq}[1]{{\llbracket{#1}\rrbracket}^{\pmb{=}}}
\newcommand{\arty}{\mathsf{arity}}
\newtheorem{theorem}{Theorem}
\newtheorem{proposition}{Proposition}
\newtheorem{lemma}{Lemma}
\newcounter{example}
\newenvironment{example}[1][]{\refstepcounter{example}\par\medskip
	\noindent \textbf{Example~\theexample. #1} \rmfamily}{\medskip}
\newtheorem{corollary}{Corollary}
\newtheorem{definition}{Definition}
\newcommand{\ttpcr}{\renewcommand{\ttdefault}{pcr}\ttfamily}
\newcommand{\sqlkw}[1]{%
	\text{\normalfont\small\ttpcr\bfseries\color{black} #1}}
\newcommand{\sqlkwblack}[1]{%
	\text{\normalfont\small\ttpcr\bfseries\color{black} #1}}
\bfseries\color{black},%
\bfseries\color{black},%
\newcommand{\tcond}{\trt{\theta}}
\newcommand{\fcond}{\trf{\theta}}
\newcommand{\tcon}[1]{\left(#1\right)^{\true}}
\newcommand{\fcon}[1]{\left(#1\right)^{\false}}
\newcommand{\sqlra}{\ensuremath{\textsc{RA}_\textsc{sql}}}
\newcommand{\sqlrarec}{\ensuremath{\textsc{RA}^{\textsc{rec}}_\textsc{sql}}}
\newcommand{\nulable}[1]{\mathtt{nullable}(#1)}
\newcommand{\src}[1]{\mathtt{sourceAttr}(#1)}
\newcommand{\name}[1]{\mathtt{name}(#1)}
\newcommand{\isunknownable}[1]{\mathtt{unkable}(#1)}
\newcommand{\trt}[1]{\textsf{tr}^{\true}(#1)}
\newcommand{\trf}[1]{\textsf{tr}^{\false}(#1)}
\newcommand{\trtfrom}[2]{\textsf{tr}_{#1}^{\true}(#2)}
\newcommand{\trffrom}[2]{\textsf{tr}_{#1}^{\false}(#2)}
\newcommand{\trtosql}[1]{{\small\textsf{toSQL}}(#1)}
\newcommand{\trfromsql}[1]{{\small\textsf{fromSQL}}(#1)}
\newcommand{\trtosqlfrom}[2]{{\small\textsf{toSQL}}_{#1}(#2)}
 \newenvironment{repeatresult}[2]
 {\vskip0.5em\par\textsc{#1 #2.}\em}
 {\vskip1em}
\begin{document}

\title{SQL Nulls and Two-Valued Logic}

\author{Leonid Libkin}
\affiliation{%
  \institution{The University of Edinburgh \& RelationalAI}
  \city{Edinburgh \& Paris}
  \country{UK \& France}}
  \orcid{0000-0002-6698-2735}
\email{l@libk.in}

\newcommand{\ligmaff}[1][2.1pt]{%
  {\fontdimen2\font=#1 LIGM, Universit\'e Gustave Eiffel, CNRS}}
  
\author{Liat Peterfreund}
\affiliation{%
  \institution{The Hebrew University of Jerusalem}
\city{Jerusalem}
\country{Israel}}
\orcid{0000-0002-4788-0944}
\email{liatpeter@cs.huji.ac.il}

\renewcommand{\shortauthors}{Leonid Libkin \& Liat Peterfreund}
\begin{abstract}
The design of SQL is based on a three-valued logic (3VL), rather than
the familiar two-valued Boolean logic (2VL). In addition to
\textit{true} and \textit{false}, 
3VL adds \textit{unknown} to handle nulls.  Viewed as indispensable
for SQL expressiveness, it is often criticized for unintuitive
behavior of queries and for being a source of programmer mistakes.

We show that, contrary to the widely held view, SQL could have been
designed based on 2VL, without any loss of expressiveness. Similarly
to SQL's WHERE clause, which only keeps true tuples, we conflate false
and unknown for conditions involving nulls to obtain an equally
expressive 2VL-based version of SQL. This applies to the core of the
1999 SQL Standard.

Queries written under the 2VL semantics can be efficiently translated
into the 3VL SQL and thus executed on any existing RDBMS. We show that
2VL enables additional optimizations. To gauge its applicability, we
establish criteria under which 2VL and 3VL semantics coincide, and
analyze common benchmarks such as TPC-H and TPC-DS to show that most
of their queries are such. For queries that behave differently under
2VL and 3VL, we undertake a user study to show a consistent preference
for the 2VL semantics.

\end{abstract}

\begin{CCSXML}
<ccs2012>
<concept>
<concept_id>10002951.10002952.10003197.10010822</concept_id>
<concept_desc>Information systems~Relational database query languages</concept_desc>
<concept_significance>500</concept_significance>
</concept>
<concept>
<concept_id>10002951.10002952.10002953.10002955</concept_id>
<concept_desc>Information systems~Relational database model</concept_desc>
<concept_significance>300</concept_significance>
</concept>
<concept>
<concept_id>10003752.10003790</concept_id>
<concept_desc>Theory of computation~Logic</concept_desc>
<concept_significance>300</concept_significance>
</concept>
</ccs2012>
\end{CCSXML}

\ccsdesc[500]{Information systems~Relational database query languages}
\ccsdesc[300]{Information systems~Relational database model}
\ccsdesc[300]{Theory of computation~Logic}

\keywords{SQL; nulls; three-valued logic; Boolean logic; query equivalence; query optimization; user study}


\maketitle
\section{Introduction}
To process data with nulls, SQL uses a three-valued logic (3VL), with an additional truth value {\em unknown}. This is one of the most often criticized aspects of the language, and one that is very confusing to programmers \cite{BrassG06}. Database texts are full of damning statements about the treatment of nulls, such as the inability to explain them in a ``comprehensible'' manner \cite{datedarwen-sql}, their tendency to ``ruin everything'' \cite{celko} and outright recommendations to ``avoid nulls'' \cite{date2005}. The latter, however, is often not possible: in large volumes of data, incompleteness is hard to avoid. 

Issues related to null handling stem not just from the use of 3VL, but from multiple and disparate ways of using it. To illustrate:
\begin{itemize}
    \item Conditions, such as those in \sqlkw{WHERE},  are evaluated under 3VL, with any atomic condition involving a \NULL\ resulting in {\em unknown}. In the end, however, only {\em true} tuples are kept; that is, {\em false} and {\em unknown} are conflated.
    \item Constraints, such as \sqlkw{UNIQUE} and foreign keys, are too evaluated under 3VL, but then a constraint holds if it does not evaluate to {\em false}; that is, {\em true} and {\em unknown} are conflated.
    \item  SQL's \NULL\ can also be viewed as a syntactic constant,
    making two \NULL s equal; this is how grouping and set operations
    work.  
\end{itemize}

Not only is the SQL programmer forced to use a logic different from other languages they are familiar with, even that logic is applied in different ways in different scenarios. 

We now look at some examples where 3VL causes confusion even for very simple SQL queries. As a starter, consider the rewriting of \sqlkw{IN} subqueries into \sqlkw{EXISTS} ones. Queries

\begin{enumerate}[label=(Q\arabic*),ref=Q\arabic*]
\item \begin{sql}
SELECT R.A FROM R WHERE R.A NOT IN
  ( SELECT S.A FROM S )
\end{sql}
\label{sql:in}
\end{enumerate}
and 
\begin{enumerate}[label=(Q\arabic*),ref=Q\arabic*,resume]
\item \begin{sql}
SELECT R.A FROM R WHERE NOT EXISTS
  ( SELECT S.A FROM S WHERE S.A=R.A )
\end{sql}
\label{sql:exists}
\end{enumerate}
would regularly be presented as equivalent 
(see, e.g., \cite{sql2alg}). While equivalent if both \sqlkw{NOT}s are removed, these queries differ in SQL:
if $R=\{1,\NULL\}$ and $S=\{\NULL\}$, then \ref{sql:in} returns no tuples, while \ref{sql:exists} returns $\{1,\NULL\}$. Such presumed, but incorrect, equivalence is a trap many SQL programmers are not aware of (see \cite{celko,BrassG06}).

As another example, consider two queries given as an illustration of the HoTTSQL system for proving query equivalences \cite{hottsql}:

\begin{enumerate}[label=(Q\arabic*),ref=Q\arabic*,resume]
\item \begin{sql}
SELECT DISTINCT X.A FROM R X, R Y
  WHERE X.A=Y.A
\end{sql}
\label{sql:hott1}

\item \begin{sql}
SELECT DISTINCT R.A FROM R 
\end{sql}
\label{sql:hott2}
\end{enumerate}

\noindent Queries \ref{sql:hott1} and \ref{sql:hott2} are claimed to be equivalent in \cite{hottsql}, but this is not the case: if $R=\{\NULL\}$, then \ref{sql:hott1} returns 
an empty table while \ref{sql:hott2} returns $\NULL$. In fairness, the reason why they are equivalent in \cite{hottsql} is that HoTTSQL considers only databases {\em without nulls}. Nonetheless, this is illustrative of the subtleties surrounding SQL nulls: what  \cite{hottsql} chose as an ``easy'' example of equivalence involves two non-equivalent queries on a simple database containing $\NULL$. 

Over the years, two main lines of research emerged for dealing with these problems. One is to provide a more complex logic for handling incompleteness 
\cite{date-sigmodr, CGL16, Yue91, DBLP:conf/future/JiaFM92, Gessert90, Grant-IS97}. These proposals did not take off, because the underlying logic is even harder for programmers than 3VL. An alternative is to have a language with no nulls at all, and thus resort 
to the usual two-valued logic. This found more success, for example in
the ``3rd manifesto'' \cite{third-manifesto} and the Tutorial D
language, as well as in the  LogicBlox system \cite{logicblox} and its successor \cite{Relweb}, which
use the sixth normal form to eliminate nulls. But nulls do occur in
most SQL databases and thus must be handled; the world is not yet
ready to dismiss them completely. 

We thus pursue a different approach: a flavor of SQL with nulls, but based 
on Boolean logic. 
This goal is to have a flavor of SQL that can be offered as an
alternative to coexist along with the 3VL standard. To achieve this,
we  need to fulfill the following criteria.

\begin{enumerate}[(1)]
\item {\em  Do not make changes unless necessary}: On databases without nulls queries should be written exactly as before, and return the same results;
\item {\em Do not lose any queries; do not invent new ones}: The new version of SQL should have exactly the same expressiveness as its version based on 3VL;
\item {\em Do not make queries overly complicated}: For each SQL query using 3VL, 
the equivalent query in the two-valued
should not add joins, and be roughly of the same size.
\end{enumerate}

We pursue these goals along two different routes.  First, we provide 
theoretical evidence that our desiderata are fulfilled for the core of
SQL.  Second, we go beyond theoretical results and supplement
them by preliminary evidence of the utility of a version of SQL devoid
of 3VL.  

One may wonder why our goal is even achievable, considering almost 40
years of SQL practice firmly rooted in 3VL.  The reason to pursue this
line of work lies in two recent results that made steps in the right
direction, albeit for simpler
languages. First, \cite{DBLP:journals/pvldb/GuagliardoL17} showed that
in the most basic fragment of SQL capturing relational algebra
(selection-projection-join-union-difference), the truth value {\em
unknown} can be eliminated from conditions
in \sqlkw{WHERE}. Essentially, it rewrote conditions by
adding \sqlkw{IS NULL} or \sqlkw{IS NOT NULL}, in a way that they
could never evaluate to {\em unknown}.  Following
that, \cite{AIJ22} considered many-valued
first-order predicate calculi under set semantics, and showed that no
many-valued logic provides additional expressive power over Boolean
logic.

\paragraph*{Results}
We show that the elimination of {\em unknown} works for SQL, including its core features from the 1992 Standard 
(full relational algebra expanded with 
arithmetic functions and comparisons, 
aggregate functions and \sqlkw{GROUP BY};
comparisons of aggregates in \sqlkw{HAVING}, 
subqueries connected by \sqlkw{IN}, \sqlkw{EXISTS}, \sqlkw{ALL}, \sqlkw{ANY}, and 
set operations \sqlkw{UNION}, \sqlkw{INTERSECT}, \sqlkw{EXCEPT}, optionally with \sqlkw{ALL}), 
as well as \sqlkw{WITH RECURSIVE} added in the 1999 Standard
 \cite{datedarwen-sql,Melton-SQL99}.
 
The {\em unknown} appears when one evaluates a condition such as \lstinline{R.A=S.A}
in which one or more arguments are \NULL. 
Once a condition in \sqlkw{WHERE} is evaluated, 
only {\em true} tuples are kept, 
while {\em unknown} or {\em false} ones are dismissed. 
A minimal change that ensures elimination of 3VL then brings this  conflating {\em unknown} with {\em false} forward, 
before the exit from the  \sqlkw{WHERE} clause. One small variation
lies in treating conditions like $\NULL=\NULL$. One may still evaluate them to
{\em false}, or assume syntactic equality as in \sqlkw{GROUP BY} and
evaluate them to {\em true}. Either way, we obtain 
two-valued versions of SQL satisfying our desiderata. 

Replacing 3VL with 2VL  does not necessitate any changes to the underlying implementation of RDBMSs. A user can write a query under a two-valued Boolean semantics; the query is then translated into an equivalent one under the standard SQL semantics, which any of the existing engines can evaluate. 

We investigate the impact of this result from two different angles. The first concerns optimizations. By changing the semantics we change equivalences among queries.  We show that  the two-valued version of SQL recovers certain optimizations, in particular those often incorrectly assumed by programmers under 3VL. 

Our second question concerns real-life usability of two-valued SQL. To investigate this, we ask two questions:
\begin{enumerate}
    \item\label{item-one} Does it happen often that the choice of logic, 3VL or 2VL, has no impact on the query output?
    \item\label{item-two} If there is a difference between 3VL and 2VL, which one would users prefer?
\end{enumerate}

Regarding (\ref{item-one}), we observe that 
for many queries, there is actually no difference between outputs  while using 2VL or 3VL. We provide sufficient conditions for this to happen, and then analyze queries in commonly used benchmarks, TPC-H \cite{tpch} and TPC-DS \cite{tpcds}, to show that a huge majority of queries fall in that category, giving us the two-valued SQL essentially for free. This is not very surprising since these benchmarks were written by  experienced programmers who know how to avoid semantic pitfalls. 

When there is a difference, the only way to know what users prefer is to ask the users. We thus designed an introductory user survey asking about preference for 2VL or 3VL in both query outputs and query equivalence. As with every user survey, there is a tradeoff between the costs of the running a survey and reliability of its results. This being the first survey of the kind, we wanted to get an initial indication of what users think; starting the project we had no idea whether they would love the idea of 2VL, or reject it outright, or fall somewhere in between. The survey of roughly 80\% practitioners and 20\% academics showed that by -- on average -- the margin of 2-to-1 users prefer 2VL. This should not be viewed as the final word but rather as an initial confirmation of the feasibility of the approach, and an invitation for a more detailed user study before potential proposals for language changes.

Finally, we show an extension of our results: no other reasonable (essentially, avoiding paradoxical behaviour) many-valued
logic in place of 3VL could give a more expressive language than SQL.

\paragraph*{The choice of language}

To prove results formally, we need a language closely resembling SQL
and yet having a formal semantics one can reason about.  Our choice is
an extended relational algebra (RA) similar 
to an algebra into which RDBMS implementations translate SQL. It 
expands the standard textbook RA with bag semantics, duplicate elimination, and several new features. 
Selection conditions, in addition to the standard comparisons such as $=$ and $<$, 
include tests for nulls (as SQL's \sqlkw{IS NULL}) and both \sqlkw{IN} or \sqlkw{EXISTS} {\em subqueries}.  We also add conditions  $\brt \omega  \any(E)$ and $\brt \omega \all(E)$ with the semantics of SQL's  \sqlkw{ANY} and \sqlkw{ALL} (they check whether $\bar t \omega \bar t'$ holds for some, respectively all, $\bar t'$ in the result of $E$, where $\omega$ is one of the standard comparisons). Selection conditions are evaluated according to SQL's 3VL.
The algebra has aggregate functions and a grouping operation. It allows function application to attributes, to mimic expressions in the \sqlkw{SELECT} clause. It also  has an iterator operation whose semantics captures SQL recursion. 

\paragraph*{Related work}

The idea of using Boolean logic for nulls predates SQL;
it actually appeared in QUEL  (the language of Ingres that appeared in 1976  \cite{quel-original}; see details in the latest manual \cite{quel}). Afterwards, however, the main direction was in making the logic of nulls more rather than less complicated, with proposals ranging from three to six values \cite{date-sigmodr,CGL16,Yue91,DBLP:conf/future/JiaFM92,Gessert90} or producing more complex classifications of nulls, e.g., \cite{Grant-IS97,zaniolo-nulls}. Elaborate many-valued logics for handling incomplete and inconsistent data were also considered in the AI literature; see, e.g., \cite{Ginsberg88,fitting91,AAZ11}. Proposals for eliminating nulls have appeared in \cite{third-manifesto,logicblox,Relweb}. 

There is a large body of work on achieving correctness of query results on databases with nulls where correctness assumes the standard notion of certain answers \cite{IL84}. Among such works are \cite{kennedy-glavic-sigmod19,tods16,pods16,italians-approx}. They assumed either SQL's 3VL, or the Boolean logic of marked nulls \cite{IL84}, and showed how query evaluation could be modified to achieve correctness, but they did not question the underlying logic of nulls. Our work is orthogonal to that: we are concerned with finding a logic that makes it more natural for programmers to write queries; once this is achieved, one will need to modify the evaluation schemes to produce subsets of certain answers if one so desires. For connections between real SQL nulls and theoretical models, such as marked or Codd nulls, see \cite{GUAGLIARDO201946}.
 
Some papers looked into handling nulls and incomplete data in
bag-based data models as employed by
SQL \cite{ijcai2017-ConsoleGL,obda-bags,de-bags}, but none focused on
the underlying logic of nulls.

Finally, our user survey can be seen as complementary to the extensive
survey on the use of nulls \cite{vldb22}; that survey asked multiple
questions but not on replacing SQL's logic of nulls. 

\paragraph*{Organization}

Section~\ref{sec:syn} presents the syntax and the semantics of the language. Section~\ref{sec:twoval} shows how to eliminate {\em unknown} to achieve our desiderata. Section~\ref{sec:restore} studies optimizations under 2VL. 
Section~\ref{sec:coincide} discusses conditions under which 2VL and 3VL semantics are equivalent. 
Section~\ref{sec:survey} studies applicability of our results. Extensions are given in 
Section~\ref{sec:mvl}. 

\section{Query Language: \protect\sqlra}
\label{sec:syn}
\newsavebox{\synt}
\sbox{\synt}{%
        \parbox{0.9\textwidth-2\fboxsep-2\fboxrule}{%
\begin{center}
    \begin{tabular}{ll}
       {\bf Terms:}  &$t \df n \,  | \, c \, |\, \NULL\,
|\, N\, |\, f(t_1,\cdots,t_k)\,\,\,\, \,  n \in
\Num,\,\, c\in \Const,\, \, N \in \sN,\,\, f\in \Omega$  \\ & \\
  {\bf Expressions:}  & 
  \begin{tabular}[t]{ rll} 
		$E:=$ & $R$ & (base relation) \\ 
		 & $\pi_{t_1[\shortrightarrow\! N_1'], \cdots,
	t_m[\shortrightarrow\!   N_m']}(E)$ & (generalized projection
	w/ optional renaming) \\ 
		 & $\sigma_{\theta}(E)$ & (selection)  \\ 
 		 & $E\times E$ & (product)  \\ 
 		 & $E\cup E$ & (union)  \\ 
 		 & $E \cap E$ & (intersection)  \\ 
 		 & $E \setminus E$ & (difference)  \\ 
 	 	 & $\epsilon(E)$ & (duplicate elimination)  \\ 
 	 	 & $\Group_{\barN,
 	 	 \langle F_1(N_1)[\shortrightarrow\!   N'_1],\cdots, F_m(N_m)[\shortrightarrow\!   N'_m]
 	 	 \rangle}(E)$ & (grouping/aggregation w/ optional renaming)\\
	\end{tabular}\\ &\\
 {\bf Atomic conditions:} &
$ac \df  \true \, | \, \false \, | \,
 \isnul(t)  \, | \,
  \brt\, \omega \, \brt'\, | \, 
 \brt\in E \, | \,  
 \isempty(E) 
\, | \, 
\brt\, \omega\, \any(E)\, | \, 
\brt\, \omega\, \all(E) $  
\,\,\,\,\,\,\,\,\,\,\,
$ \omega \in \{\eq,\noteq, <, >, \leq, \geq\}$\\&\\
{\bf Conditions:}&
$
\theta \df ac \, |  \, \theta \vee \theta  \, | \, \neg \theta \, |  \, \theta \wedge \theta \,
$\\
   \end{tabular}
   \\
\end{center}
        }%
}%
\newcommand{\syntfig}{%
        \begin{figure*}
                \centering{
                \fbox{\usebox{\synt}}}
                \caption{Syntax of \sqlra}
               \label{fig:syntax}
        \end{figure*}
}%
{\syntfig}

Given the idiosyncrasies of SQL's
syntax, it is not an ideal language -- syntactically -- to reason
about. We know, however, that its queries are all translatable into an
extended RA; indeed, this is what is done inside every
RDBMS, and multiple such translations are described in the  literature
\cite{CG85,NPS91,sql2alg,DBLP:journals/pvldb/GuagliardoL17,benzaken}. 
Our language \sqlra\  is
close to what real-life SQL queries are translated into.

\subsection{Data Model}
The usual presentation of RA assumes a countably
infinite domain of values. To handle languages with
aggregation, we need to distinguish columns of numerical types. As
not to over-complicate the model, we assume two types: a numerical and
non-numerical one (we call it {\em ordinary}).
This is without any loss of generality since
the treatment of nulls as values of all types is the same, except
numerical nulls that behave differently with respect to aggregation\footnote{For example, the value of $1+\NULL$ depends on whether it occurs as an arithmetic expression (in which case it is $\NULL$) or aggregation over a column (when it is $1$).}.

Assume the following pairwise disjoint countable infinite sets:
\begin{itemize}
\item $\sN$ of attribute \emph{names}, and
	\item $\Num$ of \emph{numerical values}, and
	\item $\Const$ of \emph{(ordinary) values}.
\end{itemize}

Each name has a {\em type}:  either $\so$ (ordinary)
 or $\sn$ (numerical). If $N\in\sN$, then $\type(N) \in \{\so,\sn\}$ defines the type of elements in 
 column $N$.
Furthermore, $\type(e)= \sn $ if $e\in \Num$ and 
$\type(e) = \so$ if $e\in\Const$.
 We use a  fresh symbol $\NULL$ to denote the null value.

\emph{Typed} records and relations are defined as follows.
Let $\tau \df \tau_1 \cdots \tau_n$ be a word over the alphabet $\{ \so, \sn \}$. 
A $\tau$-{\em record} $\bara$ with \emph{arity} $n$ is a tuple $(a_1,\cdots, a_n)$ where
$a_i \in \Num \cup \Null$ whenever $\tau_i = \sn$, and
	$a_i \in \Const \cup \Null$ whenever $\tau_i = \so$. 

For an $n$-ary relation symbol $R$ in the schema we write 
$\ell(R) = N_1\cdots N_n \in \sN^n$ for the sequence of its attribute
names.  
The {\em type} of 
$R$ is the sequence $\type(R)=\type(N_1)  \cdots \type(N_n)$. 
A \emph{relation} $\mathbf{R}$ over $R$ is a \textit{bag}
of $\type(R)$-records
(a record may appear
more than once). 
 We  write  $\bara \in_k \mathbf{R}$ if $\bara$ occurs 
 exactly $k>0$ times in $\mathbf{R}$. 

A \emph{relation schema} $\cS$ is a set of relation symbols and their
types, i.e., a set of pairs $(R, \type(R))$. 
A \emph{ database} $D$ over a relation schema $\cS$ associates with each
$(R,\type(R)) \in \cS$ a relation of $\type(R)$-records.

\subsection{Syntax}
A \emph{term} is defined recursively as 
either a numerical value in $\Num$, or an ordinary
value in $\Const$, or $\NULL$, or a name in $\sN$, or an element of
the form $f(t_1,\cdots, t_k)$ where $f:\Num^k\to\Num$ is a  \emph{$k$-ary numerical
function} (e.g., addition or multiplication) and  $t_1,\cdots, t_k$ are
terms
that evaluate to values of numerical type. 

 An \emph{aggregate function} is a function $F$ that maps
 bags of numerical values into a numerical value.
 For example, SQL's
 aggregates \sqlkw{COUNT}, \sqlkw{AVG}, \sqlkw{SUM}, \sqlkw{MIN}, \sqlkw{MAX} are
 such. 

The algebra is parameterized by a collection 
$\Omega$ of numerical and aggregate
functions. 
We assume the standard comparison predicates $\eq,\noteq, <,>, \leq, \geq$ on numerical values are available
\footnote{We focus, without loss of generality, on these predicates; our results apply also for predicates of higher arities, e.g., \sqlkw{BETWEEN} in SQL.}.

Given a schema $\cS$ and such a collection 
$\Omega$, the syntax of $\sqlra$ expressions and conditions 
over $\cS \cup \Omega$ is given in Fig.~\ref{fig:syntax}, where
$R$ ranges over relation symbols in $\cS$, each $t_i$ is a term, each $N_i,N'_i$ is a name, each
$\barN$ is a tuple of names, and each $F_i$ is an aggregate function. 
In the generalized projection and in the grouping/aggregation, the parts in the squared brackets (i.e., $[\shortrightarrow\!   N_i]$ and $[\shortrightarrow\!   N'_i]$)  are optional renamings.

The \emph{size} of an expression is defined 
the
size of its parse tree. We assume that comparisons between tuples are spelled out as Boolean combinations of atomic comparisons: e.g., $(x_1,x_2)=(y_1,y_2)$ is $x_1=y_1 \wedge x_2=y_2$ and $(x_1,x_2) < (y_1,y_2)$ is $x_1 < y_1 \vee (x_1=y_2 \wedge x_2<y_2)$. 

In what follows, we restrict our attention to expressions with
well-defined semantics (e.g., we forbid aggregation over non-numeri\-cal
columns or functions applied to arguments of wrong types). 

\subsection{Semantics}
\label{sec:semp}

To make reading easier, we present the semantics by recourse to SQL, but full and formal definitions exist too and are found in the full version~\cite{2VLarxiv}. 
The semantics function
$$\semvl{E}_{D,\eta}$$
defines the result of the evaluation of expression $E$ on database $D$
under the {\em environment} $\eta$. The {environment} $\eta$ is a partial mapping from the set $\sN$ of names to the union
$\Const \cup \Num \cup \Null$. It provides values
of parameters of the query. This is necessary to give semantics of subqueries that can refer to attributes from the outer query. 

\OMIT{ 
Indeed, to give a semantics of a query
with subqueries, we need to define the semantics of subqueries as
well. Consider for example the  subquery \lstinline{SELECT S.A FROM S}
\lstinline{WHERE S.A=R.A} of query $Q_2$ from the
Introduction. Here \lstinline{R.A} is a parameter, and to compute the
query we need to provide its value. Thus, an 
environment
$\eta$ is a partial mapping from the set $\sN$ of names to the union
$\Const \cup \Num \cup \Null$.  
}

Given an expression $E$ of \sqlra\ and a database $D$, the value of $E$
in $D$ is defined as $\semvl{E}_{D,\emptyset}$  where $\emptyset$ is the
empty mapping (i.e., the top level expression has no parameters). 

Similarly to SQL queries, each \sqlra\ expression $E$ produces tables over a list of attributes; this list will be denoted by $\ell(E)$.

\subsubsection{Base relations and generalized projections}

A base relation $R$ is SQL's $\sqlkw{SELECT * FROM R}$.
Generalized projection captures SQL's $\sqlkw{SELECT}$ clause. 
Each term $t_i$ is evaluated, optionally renamed, and added as
a column to the result. 
 For example
\begin{sql}
	SELECT A, B, A+2 AS C, A*B AS D FROM R
\end{sql}
is written as 
$\pi_{A, B, \textsf{add2}(A)\shortrightarrow\!   C, \textsf{mult}(A,B)\shortrightarrow\!   D}(R)$, 
where $\textsf{add2}(x)\df x+2$ and $\textsf{mult}(x,y)\df x\cdot
y$. 
Projection follows SQL's bag semantics with
terms evaluated along with their multiplicity. 
For instance if $(2,3)\in_2 R$ and $(1,6)\in_3 R$ then the result of 
$\pi_{ \textsf{mult}(A,B)}(R)$ contains the tuple $(6)$ with multiplicity $5$.

\subsubsection{Conditions and selections}
SQL uses three-valued logic and its conditions are evaluated to either true (\vt), or false (\vf), or unknown (\vu).
Logical connectives are used to compose conditions, and truth values are propagated according to Kleene logic
below. 
\begin{center} \hspace*{-1mm}\begin{tabular}{c|ccc}
		$\land$ & \vt & \vf & \vu \\
		\hline %
		\vt & \vt & \vf & \vu \\
		\vf & \vf & \vf & \vf \\
		\vu & \vu & \vf & \vu
	\end{tabular} \hspace*{4mm}
	\begin{tabular}{c|ccc}
		$\lor$ & \vt & \vf & \vu \\
		\hline %
		\vt & \vt & \vt & \vt \\
		\vf & \vt & \vf & \vu \\
		\vu & \vt & \vu & \vu
	\end{tabular}
	\hspace{4mm}
	\begin{tabular}{c|c}
		& $\lnot$\\
		\hline
		\vt  & \vf  \\
		\vf  & \vt  \\
		\vu  & \vu  \\
	\end{tabular}\end{center}

\OMIT{
For each predicate $P\in\Omega$ we assume
its meaning is well defined when its arguments are not \NULL\ (e.g.,
$\leq$ on numbers). Then this is the meaning that is used when all
arguments are not \NULL, and if one is \NULL, then the value is
unknown (\vu). 
}
Atomic condition $\isnul(t)$ is evaluated to $\vt$ if the term $t$ is \NULL, and to $\vf$ otherwise.
Comparisons $t \, \omega \, t'$ are defined naturally when the arguments are not \NULL. If at least one argument is \NULL, then the value is
unknown (\vu). 
Formally: 
\newsavebox{\semsq}
\sbox{\semsq}{%
	\parbox{\columnwidth-2\fboxsep-2\fboxrule}{%
		\begin{align*}
		\semvl{t \,\omega\, t'}_{D,\eta} & \df
		\left
		\{
		\begin{matrix}
		\true  &   \semvl{t}_{\eta},\semvl{t'}_{\eta}\ne \NULL
		\text{ and } \semvl{t}_{\eta} \omega \semvl{t'}_{\eta}   \\ 
		\unknwn & \semvl{t}_{\eta} = \NULL \text{ or } \semvl{t'}_{\eta} = \NULL
		\\
		\false  &\text{otherwise}
		\end{matrix}
		\right.
		\end{align*}
	}%
}%

\newcommand{\figsem}{%
	\centering
	\fbox{{\usebox{\semsq}}}
}%

\medskip
\noindent
{\figsem}
\medskip

The condition $\brt \eq \brt'$, where $\brt=(t_1,\ldots,t_m)$ and $\brt'= (t_1',\ldots,t_m')$ that compares 
 tuples of terms
is the abbreviation of the 
conjunction $ \bigwedge_{i=1}^m t_i \eq t_i'$, and the comparison $\brt\noteq\brt'$ abbreviates 
$ \bigvee_{i=1}^m t_i \noteq t_i'$. Comparisons $<,\leq,\geq,>$ of tuples are defined lexicographically.

The
condition $\bar t \in E$, not typically included in RA, tests whether
a tuple belongs to the result of a query, and corresponds to
SQL's \sqlkw{IN} subqueries. 
If $E$ evaluates to the bag containing $\brt_1,\cdots, \brt_n$ then $\bar t \in E$ stands for the disjunction $\vee_{i=1}^n \brt \eq \brt_i$.
Other predicates not typically included in RA
presentation, though included here for direct correspondence with SQL,
are \sqlkw{ALL} and \sqlkw{ANY} comparisons. 
The condition $\brt \, \omega\, \any(E)$ checks whether 
there exists a tuple
$\brt'$ in $E$ so that $\brt\, \omega\, \brt'$ holds, where $\omega$ is one of the allowed comparisons. 
Likewise, 
$\brt \, \omega\, \all(E)$ checks whether 
$\brt\, \omega\, \brt'$ holds for every tuple
$\brt'$ in $E$ (in particular, if $E$ returns no tuples, this condition
is true). 
\OMIT{ 
That is, $\brt \, \omega\, \any(E)$ stands for the disjunction $\bigvee_{i=1}^n \brt \eq \brt_i$, and  $\brt \, \omega\, \all(E)$ stands for the conjunction $\bigwedge_{i=1}^n \brt \eq \brt_i$
where $\brt_1,\cdots, \brt_n$ are the tuples in the result of evaluating $E$.
}
If $\omega$ is $\eq$ or \mbox{$\noteq$}, conditions with $\any$ and
$\all$ are applicable at either ordinary or numerical type; if
$\omega$ is one of $<,\leq, >,\geq$, then all the components of $\brt$ and all attributes of $E$ 
are of numerical type.

The condition $\isempty(E)$ checks if the
result of $E$ is empty, and corresponds to SQL's \sqlkw{EXISTS}
subqueries. 
\OMIT{
One might be tempted to say that these are expressible via joins in
traditional relational algebra. There is a good reason to include them
directly. First, we want to stay as close to SQL as possible. Even
more importantly, these conditions behave {\em differently} in the
presence of nulls, and their expressibility via joins would require
complex conditions checking which attributes values are \NULL.
Indeed,
as we shall see, they behave differently with respect to treatment of
nulls;
in fact \sqlkw{EXISTS} subqueries follow the two-valued logic,
whereas \sqlkw{IN} subqueries are based on the three-valued logic. 
}
Note that \sqlkw{EXISTS} subqueries can be evaluated to $\vt$ or $\vf$,
whereas \sqlkw{IN} subqueries can also be evaluated to $\vu$.
The semantics of composite conditions is defined by the 3VL truth-tables. 

Selection evaluates the condition $\theta$ for each tuple,
and keeps tuples for which $\theta$  is \vt\ (i.e., not \vf\ nor \vu). 
Operations of generalized projection and
selection correspond to sequential scans in query plans (with
filtering in the case of selection).

\subsubsection{Bag operations and grouping/aggregations}

The operation $\epsilon$ is SQL's \sqlkw{DISTINCT}: it eliminates
duplicates and keeps one copy of each record.
Operations union, intersection, and difference, 
have the standard meaning under
the bag semantics, and correspond, respectively, to SQL's \sqlkw{UNION ALL}, \sqlkw{INTERSECT ALL}, and \sqlkw{EXCEPT ALL}. 
Dropping the keyword \sqlkw{ALL} amounts to using set semantics for both arguments and operations. 

Cartesian product has the standard meaning and corresponds to listing relations in the \sqlkw{FROM} clause.


We use SQL's semantics of functions: if one of its arguments
is \NULL, then the result is null (e.g., $3+2$ is 5, but
$\NULL+2$ is $\NULL$).


Finally, we describe the operator
$\Group_{\barN,\langle \barM \rangle} (E)$.
The tuple $\barN$ lists
attributes in \sqlkw{GROUP BY}, and the $i$'th coordinate  of $\barM$ is of the form  $F_i(N_i)$ where $F_i$ is an aggregate over the numerical columns $N_i$
optionally renamed $N'_i$ if $[\shortrightarrow\! N'_i]$ is present. 
For example 
\begin{sql} 
SELECT A, COUNT(B) AS C, SUM(B) FROM R GROUP BY A 
\end{sql} 
will be expressed by 
$\Group_{A,\langle F_{\texttt{count}}(B)[\shortrightarrow\!   C], F_{\texttt{sum}}(B)\rangle} (R)$
where \linebreak $F_{\texttt{count}}(\{a_1,\ldots,a_n\})\df n$ and
$F_{\texttt{sum}}(\{a_1,\ldots,a_n\})\df a_1+ \cdots +a_n$.  Note that
$\barN$ could be empty; this corresponds to computing aggregates
over the entire table, without grouping, for example, as in \sqlkw{SELECT
COUNT} \lstinline{(B)}, \sqlkw{SUM} \lstinline{(B)} \sqlkw{FROM} 
\lstinline{R}. 


\begin{example} 
\label{queries-ex}
We start by showing how queries $Q_1$--$Q_4$ from the
introduction are expressible in \sqlra:
$$\begin{array}{rcl}
Q_1 & = & \sigma_{\neg(R.A \in S)}(R)
\\
Q_2 & = & \sigma_{\isempty(\sigma_{R.A=S.A}(S))}(R)
\\
Q_3 & = & \epsilon\Big(\pi_{X.A}\big(\sigma_{X.A=Y.A}\big(\rho_{R.A\to X.A}(R) \times 
\rho_{R.A\to Y.A}(R)\big)\big)\Big)
\\
Q_4 & = & \epsilon\big(\pi_{R.A } (R)\big)
\end{array}
$$

\noindent
A more complex example is a query $Q_5$ below; it is a slightly
simplified (to fit in one column) query 22 from TPC-H 
\cite{tpch}: 

\begin{sql}
SELECT c_nationkey, COUNT(c_custkey)
FROM customer 
WHERE c_acctbal > 
 (SELECT avg(c_acctbal)
  FROM customer WHERE c_acctbal > 0.0 AND
  c_custkey NOT IN (SELECT o_custkey FROM orders) )
GROUP BY c_nationkey
\end{sql}

Below we use abbreviations $C$ for \lstinline{customer} and
$O$ for \lstinline{orders}, and abbreviations for attributes like $c\_n$
for \lstinline{c_nationkey} etc. 
The \sqlkw{NOT IN} condition in the
subquery is then translated as  $\neg (c\_c \in \pi_{o\_c}(O))$, the whole condition is translated as
$\theta \df 
(c\_a > 0) \wedge 
\neg (c\_c \in \pi_{o\_c}(O))$
and the aggregate subquery becomes 
$$Q_{agg}\df\Group_{\emptyset,\langle F_{\texttt{avg}}(c\_a)\rangle} \Big(\pi_{c\_a}(\sigma_{\theta}(C))\Big)\,.$$
Notice that there is no grouping for this aggregate, hence the empty
set of grouping attributes. 
Then the condition in the \sqlkw{WHERE} clause of the query is 
$\theta' \df c\_a > {\any}(Q_{agg})$
which is then applied to $C$, i.e., 
$\sigma_{ c\_a > {\any}(Q_{agg})}(C)$, and finally 
grouping by $c\_n$ and counting of $c\_a$ are performed over it,
giving us
$$\Group_{c\_n,\langle F_{\texttt{count}}(c\_c)\rangle }\big( \sigma_{ c\_a > {\any}(Q_{agg})}(C)  \big)\,.$$
Putting everything together, we have the final \sqlra\ expression:
$$\Group_{c\_n,\langle F_{\texttt{count}}(c\_c)\rangle }\Big( \sigma_{ c\_a > {\any}\Big(
\Group_{\emptyset,\langle F_{\texttt{avg}}(c\_a)\rangle} \big(\pi_{c\_a}(\sigma_{\theta}(C))\big)
\Big)}(C)  \Big)\,.$$
\end{example}

\vspace*{-7mm}

\subsubsection*{{\bf ADDING RECURSION}}
We now incorporate recursive queries, a
feature added in 
the SQL 1999 standard with its \sqlkw{WITH RECURSIVE} construct. 
While extensions of relational algebra with various
kinds of recursion 
exist (e.g., transitive closure \cite{agrawal88} or fixed-point
operator \cite{JGGL20}), we stay
closer to SQL as it is. Specifically, SQL uses a special type of iteration
-- in fact two
kinds depending on the syntactic shape of the query \cite{postgres}. 


\subsubsection*{Syntax of \sqlrarec}
\newcommand{\ecup}{\sqcup}

Recall that $\cup$ stands for bag union, i.e., multiplicities of tuples
are added up, as in SQL's \sqlkw{UNION ALL}. We also need the
operation $B_1 \ecup B_2$ defined as $\epsilon(B_1\cup B_2)$, i.e.,
union in which a single copy of each tuple is kept. This corresponds
to SQL's \sqlkw{UNION}. 

	An \sqlrarec expression is defined with the grammar of \sqlra\
	in Fig.~\ref{fig:syntax}) with the addition of the 
	constructor 
	$
	\mu R.E
	$  
	where $R$ is a fresh relation symbol (i.e., not in the
	schema) and $E$ is an expression  of the form  
	$E_1 \cup E_2$ or  $E_1\ecup  E_2$ where both $E_1$
	and $E_2$ are \sqlrarec expressions and $E_2$ may contain a
	reference to 
	$R$.

In SQL, various restrictions are imposed on query
$E_2$, such as the linearity of recursion (at most one
reference to $R$ within $E_2$), restrictions on the use of recursively
defined relations in subqueries, on the use of aggregation, etc. These
eliminate many of the common cases of non-terminating queries.  Here
we shall not impose these restrictions, as our result is {\em more
general}: passing from 3VL to two-valued logic is possible even if
such restrictions were not in place. 
Note that different RDBMSs use
different restrictions on recursive queries (and sometimes even
different syntax); hence showing this more general result will ensure
that it applies to all of them.
%

	\subsubsection*{Semantics of \sqlrarec}
Similarly to the syntactic definition, we distinguish between the two cases. 

\OMIT{
\begin{enumerate}
\item $i=0$
\item $RES_i, WT_i = \semsql{E_1}_D$
\item If $WT_i = \emptyset$ return $RES_i$, otherwise
\item set $i \df i+1$
\item  $WT_i \df \semsql{E_2}_{D \cup WT_{i-1}}$
\item $RES_{i} \df RES_{i-1} \cup WT_i$
\end{enumerate}
}

For $\mu R. E_1 \cup E_2$, the semantics $\semsql{\mu R. E_1 \cup
E_2}_{D,\eta}$ is defined  
by the following iterative process:
\begin{enumerate}
\item $RES_0, R_0 \df \semsql{E_1}_{D,\eta}$
\item  $R_{i+1} \df \semsql{E_2}_{D \cup R_{i},\eta}$, \ \ 
$RES_{i+1} \df RES_{i} \cup R_{i+1}$
\end{enumerate}
with the condition that if $R_i = \emptyset$, then the iteration  stops
 and  $RES_i$ is returned.

For $\mu R. E_1 \ecup E_2$, the semantics is defined 
by a different iteration 
\begin{enumerate}
\item $RES_0, R_0 \df \semsql{\epsilon(E_1)}_{D,\eta}$
\item  $R_{i+1}\! \df\! \semsql{\epsilon(E_2)}_{D \cup R_{i},\eta}\! -\! RES_i$, \   
$RES_{i+1}\! \df\! RES_{i} \cup R_{i+1}$
\end{enumerate}
with the same stopping condition as before. 

Note that while for queries not involving recursion only the environment changes during the computation, for recursion the relation that is iterated over ($R$ above) changes as well, and each new iteration is evaluated on a modified database. 

Note that since $R_{i+1}$
does not contain any tuples from $RES_i$, we have $RES_i\cup
R_{i+1}=RES_i\ecup R_i$ above. That is, either $\cup$ or $\ecup$ could
be used in rule (2) of this iterative process. 

\section{Eliminating unknown} 
\label{sec:twoval}
To replace 3VL with Boolean logic, we need to
eliminate the unknown truth value. 
In SQL, \vu\ arises in \sqlkw{WHERE}
which corresponds to conditions in \sqlra. It 
appears as the 
result of evaluation of comparison predicates  such as
$=, \leq, \noteq$ etc. 
Consequently, it also arises in \sqlkw{IN}, \sqlkw{ANY}
and \sqlkw{ALL} conditions for subqueries.   
Indeed, to 
check $\bar
t \ \sqlkw{IN} \ \{\bar t_1,\cdots,\bar t_n\}$ one computes the
disjunction of  $\bar t \eq \bar t_i$, for $i\leq n$, each being the
conjunction of $a_j \eq b_j$ for $j\leq m$, if $\bar
t=(a_1,\cdots,a_m)$ and $\bar t_i=(b_1,\cdots, b_m)$, and similarly for \sqlkw{ANY}
and \sqlkw{ALL}. 

In comparisons, \vu\ appears due to the
rule that {\em if one parameter is \NULL, then the value of the predicate
is \vu}. 
Thus, we need to change this rule, and to say what to do when one of
the parameters is \NULL. 
In doing so, we are guided by SQL's existing semantics of conditions in 
\sqlkw{WHERE}. While those  can evaluate to
\vt, \vf, or \vu, in the end only the {\em true}
values are kept: that is, 
{\em \vu\ and \vf\ are conflated}. SQL does it at the end of evaluating
a condition; thus a natural approach to a two-valued version of SQL is
to use the same rule
{\em throughout} the evaluation.

This is a natural proposal, and in fact we shall that this results in
a version of SQL satisfying our desiderata.
It may have a potential
drawback with respect to optimizations. Namely, both
$\NULL \eq \NULL$ and $\NULL \noteq \NULL$ evaluate to \vf, and thus 
$\NULL \noteq \NULL$ cannot be equivalent to $\neg(\NULL \eq \NULL)$.
This however can easily be resolved
by treating conditions consistent with syntactic equality differently.

\subsection{The two-valued semantics $\semtwovl{\,}$ and $\semtwovlsyneq{\,}$}

As explained above, in the new semantics $\semtwovl{\,}$ we only need to modify the
rule for comparisons of terms,  $t\,\omega\, t'$. In the simplest case
(\vf\ instead of \vu) this is done by

\newsavebox{\semutof}
\sbox{\semutof}{%
	\parbox{\columnwidth-2\fboxsep-2\fboxrule}{%
		\begin{align*}
		\semtwovl{t \,\omega\, t'}_{D,\eta} & \df
		\left
		\{
		\begin{array}{ll}
		\true  &   \semvl{t}_{\eta},\semvl{t'}_{\eta}\ne \NULL
	, \ \text{and}\  \semvl{t}_{\eta} \,\omega\, \semvl{t'}_{\eta}   \\ 
		\false  &\text{otherwise}
		\end{array}
		\right.
		\end{align*}
	}%
}%

\newcommand{\figsemutof}{%
	\centering
	\fbox{{\usebox{\semutof}}}
}%

\medskip
\noindent
{\figsemutof}

\medskip
The rest of the semantics is exactly the same as before.  Note that in
conditions like 
$\brt \eq \brt'$,
or $\brt \in E$, 
or $\brt \, \omega \, \all(E)$, and 
$\brt \, \omega \, \any(E)$, 
the conjunctions and disjunctions will be interpreted as the 
standard  Boolean ones, since \vu\ no longer arises. 

A more elaborate version $\semtwovlsyneq{\,}$ takes into account
syntactic equality. It is the same as the $\semtwovl{\,}$ semantics
except for three comparisons compatible with equality: $\eq, \leq$ and
$\geq$. For them, it is as follows 

\newsavebox{\semsyneq}
\sbox{\semsyneq}{%
	\parbox{\columnwidth-2\fboxsep-2\fboxrule}{%
		\begin{align*}
		\semtwovlsyneq{t\ \omega\ t'}_{D,\eta} & \df
		\left
		\{
		\begin{array}{ll}
		\vt  &    \semtwovl{t\ \omega\ t'}_{D,\eta}
		= \vt \text{ or }\semvl{t}_{\eta} = \semvl{t'}_{\eta} = \NULL  \\ 
		\vf  &\text{otherwise}
		\end{array}
		\right.
		\end{align*}
	}%
}%

\newcommand{\figsemsyneq}{%
	\centering
	\fbox{\usebox{\semsyneq}}
}%

\medskip
\noindent
{\figsemsyneq}
\medskip

\noindent and keeping the rest as in the definition of
$\semtwovl{\,}$. The only difference is that now conditions
$\NULL\eq\NULL$, $\NULL\leq\NULL$, and $\NULL\geq\NULL$ evaluate to
true. 

\subsection{Capturing SQL with $\semtwovl{\,}$ and $\semtwovlsyneq{\,}$}
We now show that the two semantics presented above
fulfill our 
desiderata for a two-valued version of SQL. Recall that it postulated
three requirements: (1) that no expressiveness be gained or lost
compared to the standard SQL; (2) that over databases without nulls no
changes be required; and (3) that when changes are required in
the presence of nulls, they ought to be small and not affect significantly 
the size of the query.
These conditions are formalized in the definition
below.

\begin{definition}
	\label{capture-def}
	A semantics $\semsql{\,}'$ of  queries 
        {\em captures} the
	semantics $\semsql{\,}$ of $\sqlra$ if the following 
	are satisfied: 
	\begin{enumerate}\itemsep=0pt
		\item
		\label{capture-def1}
		for every expression $E$ of $\sqlra$ there exists an
		expression $G$ of $\sqlra$ such that, for each
	database $D$, 
		$$\semsql{E}_D'=\semsql{G}_D\,;$$
		\item
		\label{capture-def2} 
		for every expression $E$ of $\sqlra$ there exists an
		expression $F$ of $\sqlra$ such that, for every
		database $D$, 
		$$\semsql{E}_D=\semsql{F}_D'\,;$$
		\item
		\label{capture-def3}
		for every expression $E$ of 
		$\sqlra$, and every database $D$ without nulls, 
		 $\semsql{E}_D=\semsql{E}'_D$.
	\end{enumerate}
        When in place of $\sqlra$ above we use $\sqlrarec$, then we
		speak of caputring the semantics of $\sqlrarec$.	

If the size of expressions $F$ and $G$ in
	items (\ref{capture-def1}) and (\ref{capture-def2}) 
	is at most linear in the size of  $E$,  we
	say that the semantics is captured {\em efficiently}.
	\qed
\end{definition}

Our main result is that the two-valued semantics of SQL capture its
standard semantics efficiently. 

\newcommand{\thmmain}{The $\semtwovl{\,}$ 
and $\semtwovlsyneq{\,}$ 
semantics of
	\sqlrarec\ expressions, and of \sqlra\ expressions, capture
	their  SQL semantics
$\semsql{\,}$	efficiently.}
\begin{theorem} \label{thm:main}
\thmmain
\end{theorem}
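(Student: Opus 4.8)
The plan is to verify the three clauses of Definition~\ref{capture-def} for $\semsql{\,}'=\semtwovl{\,}$ (so that $\semsql{\,}=\semthreevl{\,}$ is the standard semantics), together with the linear size bound. The engine will be two syntax-directed translations on conditions, lifted to expressions by a single mutual induction: to translate a selection I translate its condition, and to translate a condition I recursively translate the subexpressions occurring in its subquery atoms. Throughout I write $\isnotnul(t)$ for $\neg\isnul(t)$, noting that $\isnul$ is never $\unknwn$ so both semantics agree on it.

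For clause~(\ref{capture-def2}) I need, given $E$ read under $\semthreevl{\,}$, an expression $F$ read under $\semtwovl{\,}$ with $\semtwovl{F}_D=\semthreevl{E}_D$. The heart is a pair of condition translations $\trt{\theta}$ and $\trf{\theta}$, defined by simultaneous induction, satisfying the invariant $\semtwovl{\trt{\theta}}_{D,\eta}=\true \iff \semthreevl{\theta}_{D,\eta}=\true$ and $\semtwovl{\trf{\theta}}_{D,\eta}=\true \iff \semthreevl{\theta}_{D,\eta}=\false$, for all $D,\eta$. On comparisons I set $\trt{t\,\omega\,t'}\df(t\,\omega\,t')$, since the two-valued comparison is already $\true$ exactly when both sides are non-null and related, and $\trf{t\,\omega\,t'}\df\isnotnul(t)\wedge\isnotnul(t')\wedge\neg(t\,\omega\,t')$. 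The connectives mirror the Kleene tables: $\trt{\theta_1\wedge\theta_2}=\trt{\theta_1}\wedge\trt{\theta_2}$, $\trf{\theta_1\wedge\theta_2}=\trf{\theta_1}\vee\trf{\theta_2}$, dually for $\vee$, and $\trt{\neg\theta}=\trf{\theta}$, $\trf{\neg\theta}=\trt{\theta}$. The subquery atoms are the subtle ones because they too may be $\unknwn$; here I recurse through the expression translation $F$ and reuse the language's own constructs, e.g.\ $\trt{\brt\in E}\df\brt\in F(E)$ while $\trf{\brt\in E}\df\brt\,\noteq\,\all(F(E))$ asserts that every tuple of $F(E)$ is definitely distinct from $\brt$, with $\isempty(E)$ (already two-valued) and the $\any/\all$ atoms treated analogously by pairing the relevant comparison with null guards. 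Defining $F$ to replace each $\sigma_\theta$ by $\sigma_{\trt{\theta}}$ and to recurse structurally, the expression-level claim $\semtwovl{F(E)}_{D,\eta}=\semthreevl{E}_{D,\eta}$ and the two condition invariants are proved together by induction on structure; since each atom expands by only a constant, $|F|$ is linear in $|E|$.

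Clause~(\ref{capture-def1}) is the reverse simulation: given $E$ read under $\semtwovl{\,}$, produce $G$ read under $\semthreevl{\,}$ with $\semthreevl{G}_D=\semtwovl{E}_D$. Because two-valued conditions are genuinely Boolean, I reproduce them inside 3VL by a translation $\lceil\cdot\rceil$ that never yields $\unknwn$: for comparisons, $\lceil t\,\omega\,t'\rceil\df\isnotnul(t)\wedge\isnotnul(t')\wedge(t\,\omega\,t')$; for connectives, $\lceil\cdot\rceil$ is homomorphic, which is sound precisely because its arguments avoid $\unknwn$ and Kleene and Boolean connectives coincide on $\{\true,\false\}$; and for the subquery atoms I re-express them through $\isempty$ applied to a filtered, recursively translated subexpression (e.g.\ $\brt\in E$ becomes ``$G(E)$ restricted to tuples equal to $\brt$ is nonempty''), so the result is again two-valued. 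Replacing each selection's condition $\theta$ by $\lceil\theta\rceil$ and recursing yields $G$, once more of linear size. Clause~(\ref{capture-def3}) is then immediate: the two semantics differ only in the comparison rule, and on a null-free database no term evaluates to $\NULL$ (numerical functions being total and non-null-preserving, barring explicit $\NULL$ literals in comparisons), so every comparison -- hence every condition -- takes the same value under both semantics, and a routine structural induction gives $\semthreevl{E}_D=\semtwovl{E}_D$.

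To extend everything to $\sqlrarec$, I push the translations through the fixpoint operator, translating $\mu R.\,E_1\,\mathsf{op}\,E_2$ to $\mu R.\,F(E_1)\,\mathsf{op}\,F(E_2)$ (and likewise for $G$), where $\mathsf{op}$ is $\cup$ or $\sqcup$. By induction on the iteration index one shows that the sequences $R_i$ and $RES_i$ generated under the two semantics coincide: the bag operations $\cup$, $\sqcup$, difference and $\epsilon$ are identical in both semantics, so any discrepancy could only arise inside the conditions of $E_1,E_2$, which the inductive hypothesis eliminates; consequently the stopping test $R_i=\emptyset$ fires at the same step, so termination and the returned result agree. I expect the main obstacle to be exactly the subquery atoms $\in$, $\any$, $\all$ under negation: arranging $\trt{\,}$, $\trf{\,}$ and $\lceil\cdot\rceil$ for them so as to (i) remain inside $\sqlra$, (ii) keep the blow-up linear, and (iii) correctly thread the mutual induction through the recursively translated subexpression is where the genuine care lies; by contrast the connective cases, the recursion, and the null-free agreement are routine once the condition translation is pinned down.
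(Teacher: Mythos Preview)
Your proposal is correct and follows essentially the same route as the paper: a mutual induction on expressions and conditions, with dual translations $\trt{\cdot}/\trf{\cdot}$ tracking when a condition is $\true$ (resp.\ $\false$) in the source semantics, lifted to expressions by replacing each $\sigma_\theta$ by $\sigma_{\trt{\theta}}$, and pushed through $\mu$ by induction on the iteration. Two inessential differences: for clause~(\ref{capture-def1}) the paper also uses a $\trt{\cdot}/\trf{\cdot}$ pair rather than your single never-$\unknwn$ translation $\lceil\cdot\rceil$ (both work, yours is arguably cleaner since homomorphic negation comes for free), and for several subquery atoms the paper reduces to $\isempty(\sigma_\theta(\cdot))$ where you use $\all$---these are interchangeable and both keep the blow-up linear.
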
	

Note that the capture statement for \sqlra\ is {\em not} a corollary of the
statement of \sqlrarec.

\OMIT{  
In the rest of the section, we do the following:
\begin{itemize}
\item In Section \ref{sec:tosql}, we provide details of one directions
of the translation, from $\semtwovl{\,}$ to $\semthreevl{\,}$. It
shows how a query written in the new two-valued style cane be run
correctly using the existing RDBMS technology and current
implementations of SQL.
\item In Section \ref{sec:restore}, we show how the new semantics
helps otimizations, by restoring some often assumed equaivalences that
nonetheless do not hold in SQL.

\item In Section \ref{sec:coincide}, we present sufficient conditions
for queries to produce the same result under $\semtwovl{\,}$
and $\semthreevl{\,}$. For such queries (that we shall later see are
very common) one could use the two-valued semantics without any need
to rewrite queries.

\end{itemize}
}

\subsection*{Running 2VL on existing RDBMSs}

We sketch one direction of the proof of Theorem \ref{thm:main}, namely
from
$\semtwovl{\,}$ and $\semtwovlsyneq{\,}$ to $\semsql{\,}$. 

From $\semtwovl{\cdot}$ to $\semsql{\,}$, we define the translation $\trtosqlfrom{\tinytwovl}{\,}$ that specifies how to take a query $E$
written under the 2VL semantics that conflates \vu\ with \vf\ and translate it into a query
$\trtosqlfrom{\tinytwovl}{E}$ that gives the same result when evaluated under the usual
SQL semantics: $\semtwovl{E}_D = \semthreevl{\trtosqlfrom{\tinytwovl}{E}}_D$ for every
database $D$. Thus, $\trtosqlfrom{\tinytwovl}E$ can execute a 2VL query in any
existing implementation of SQL.

To do so, we define translations of conditions and queries by
mutual induction. 
 Translations $\trtfrom{\tinytwovl}{\cdot},\, \trffrom{\tinytwovl}{\cdot}$ on conditions $\theta$
	ensure 
	\begin{align*}
	\semtwovl{\theta}_{D,\eta} = \true &\text{ if and only if } \semthreevl{\trtfrom{\tinytwovl}{\theta}}_{D,\eta} = \true \\
	\semtwovl{\theta}_{D,\eta} = \vf &\text{ if and only if } \semthreevl{\trffrom{\tinytwovl}{\theta}}_{D,\eta} = \true
	\end{align*}
	(note that $\semtwovl{\theta}_{D,\eta}$ produces
	only \vt\ and \vf). 
		Then 
	we go from  $E$ to $\trtosqlfrom{\tinytwovl}{E}$ by inductively replacing each condition $\theta$ with $\trtfrom{\tinytwovl}{\theta}$.

The full details of the translations are in
Figure~\ref{fig:trtf}. 

\newsavebox{\twotothree}
\sbox{\twotothree}{%
	\parbox{\textwidth-10pt}
    {%
		\begin{center}
			\textbf{Basic conditions}
			\begin{align*}
			\trt{\true} &\df \true &  
			\trf{\true} &\df \false \\
			\trt{\false} &\df \false&
			\trf{\false} &\df \true \\
			\trt{\isnul(t)} &\df \isnul(t) &
			\trf{\isnul(t)} &\df \neg \isnul(t)\\
			\trt{\brt\eq\brt'} &\df \brt\eq\brt'&
			\trf{\brt\eq\brt'}& \df \bigvee_{i=1}^n  \Big(\neg({t_i \eq t'_i}) \vee \isnul(t_i)  \vee \isnul(t'_i) \Big)  \\
			\trt{P(\brt)}&\df P(\brt) & 
			\trf{P(\brt)}&\df \neg P(\brt) \vee \bigvee_{i=1}^n  \isnul(t_i) \\
			\trt{\brt \in E} &\df \brt \in G &
			 	{\trf{\brt \in E} }&  \df \neg( \brt \in G) \vee  
\bigvee_{i=1}^n \isnul(t_i) 
			\\&& \text{where } 
			 \brt \df& (t_1,\cdots, t_n)
		 \\
			\trt{\isempty(E)}& \df \isempty(G)&\trf{\isempty(E)}& \df \neg \isempty(G)\\
			\trt{t\, \omega\, \any(E)}
			&\df 
			t\, \omega\, \any(G) 
			&  
			\trf{t\, \omega\, \any(E)}
			&\df
			\isempty (\sigma_{  \neg \theta
			}(G))
			\\ 
			\trt{t\, \omega\, \all(E)}&\df  t\, \omega\, \all(G)&  
			\trf{t\, \omega\, \all(E)}&\df  
			\neg \isempty( \sigma_{ \theta }(G) )\\
			&&\text{where } \ell(G) &\df N \text{ and }  \theta\df \isnul(t) \vee \isnul(N) \vee (\neg \isnul(t) \wedge \neg \isnul (N) \wedge \neg t\,\omega\,N)
			\end{align*}
			\textbf{Composite conditions}
			\begin{multline*}
			\trt{\theta_1 \vee \theta_2}
			\df \trt{\theta_1} \vee \trt{\theta_2} \ \ 
			\trf{\theta_1 \vee \theta_2}
			\df \trf{\theta_1} \wedge \trf{\theta_2} \ \
			\trt{\theta_1 \wedge \theta_2}\df \trt{\theta_1} \wedge \trt{\theta_2} \ \ \\
			\trf{\theta_1 \wedge \theta_2}\df \trf{\theta_1} \vee \trf{\theta_2} \ \
			\trt{\neg \theta} \df \trf{\theta}  \ \ \trf{\neg \theta} \df  \trt{\theta}
			\end{multline*} 
		\end{center}
	}%
}%

\newcommand{\twotothreefig}{%
	\begin{figure*}[t]
		\centering
		\fbox{\usebox{\twotothree}}
		\caption{\mbox{$\utof$} semantics to SQL semantics}
		\label{fig:twotothree}
	\end{figure*}
}%

\newsavebox{\trtdef}
\sbox{\trtdef}{%
	\parbox{\columnwidth-7pt}{%
	    \begin{tabular}{ll}
	        \textbf{Basic:} &$ 
		\trtfrom{\tinytwovl}{\theta} \df \theta 
		 \text{ for } \theta  \df
\true \, | \, \false \, | \, \isnul(t) \, | \, 
t \, \omega \, t'$
\\
	&$\trtfrom{\tinytwovl}{\isempty(E)} \df \isempty(\trtosqlfrom{\tinytwovl}{E})$ \\
	&$\trtfrom{\tinytwovl}{\brt \, \omega \, \any(E)} \df \brt \, \omega \, \any(\trtosqlfrom{\tinytwovl}{E}) $\\
	&$\trtfrom{\tinytwovl}{\brt \, \omega \, \all(E)} \df \brt \, \omega \, \all(\trtosqlfrom{\tinytwovl}{E}) $\\
	&\\
	        \textbf{Comp.:} &$
	        	\trtfrom{\tinytwovl}{\theta_1 \vee \theta_2} \df	\trtfrom{\tinytwovl}{\theta_1} \vee \trtfrom{\tinytwovl}{\theta_2}$\\&
		$\trtfrom{\tinytwovl}{\theta_1 \wedge \theta_2} \df	\trtfrom{\tinytwovl}{\theta_1} \wedge \trtfrom{\tinytwovl}{\theta_2}$\\
		&$\trtfrom{\tinytwovl}{\neg \theta}\df \trffrom{\tinytwovl}{\theta}$\\
	    \end{tabular}
	}%
}%

\newsavebox{\trfdef}
\sbox{\trfdef}{%
\parbox{\columnwidth-7pt}{%
		\begin{tabular}{ll}	
	\textbf{Basic:} &
$\trffrom{\tinytwovl}{\theta}\df \neg \theta \text{ for } \theta \df \true \,|\, \false \,|\, \isnul(t)$\\
& $\trffrom{\tinytwovl}{t\, \omega \, t' } \df
\isnul(t) \vee \isnul(t') \vee \neg t\, \omega \, t'  $ \\
&$\trffrom{\tinytwovl}{\isempty(E)} \df \neg \isempty(\trtosqlfrom{\tinytwovl}{E})$\\
&		$\trffrom{\tinytwovl}{\brt\, \omega\, \any(E)}
		\df
		\isempty (\sigma_{  \neg \theta
		}(\trtosqlfrom{\tinytwovl}{E}))
		$\\ 	&$\trffrom{\tinytwovl}{\brt\, \omega\, \all(E)}\df  
		\neg \isempty( \sigma_{ \theta }(\trtosqlfrom{\tinytwovl}{E}) )$\\
  &\hfill where $		
\theta \df \trffrom{\tinytwovl}{ \brt \,\omega\, \ell(E)}$
\\&\\
		\textbf{Comp.:}&$
		\trffrom{\tinytwovl}{\theta_1 \vee \theta_2}
		\df \trffrom{\tinytwovl}{\theta_1} \wedge \trffrom{\tinytwovl}{\theta_2}$\\
		&$ \trffrom{\tinytwovl}{\theta_1 \wedge \theta_2}\df \trffrom{\tinytwovl}{\theta_1} \vee \trffrom{\tinytwovl}{\theta_2}$
		\\&$
		\trffrom{\tinytwovl}{\neg \theta}\df  \trtfrom{\tinytwovl}{\theta}$
		\end{tabular}
	}%
}%

\newcommand{\figtrfdef}{%
	\begin{figure}[h]
		\centering
		\fbox{{\usebox{\trtdef}}}
		\fbox{{\usebox{\trfdef}}}
		\caption{$\trtfrom{\tinytwovl}\cdot$ and   $\trffrom{\tinytwovl}\cdot$ of basic and composite conditions}
		\label{fig:trtf}
	\end{figure}
}%

\figtrfdef

\begin{example}
	\label{uftosql-ex}
	We now look at translations of queries $Q_1$--$Q_5$ of Example~\ref{queries-ex}. That is, suppose these queries have been
	written assuming the two-valued $\twovl$ semantics; we show how they
	would then look in conventional SQL. To start with, queries $Q_2,
	Q_3$, and $Q_4$ remain unchanged by the translation.

The query 	$\trtosqlfrom{\tinytwovl}{Q_1}$ is $\sigma_{\isnul(R.A)\vee \neg (R.A \in \sigma_{\neg \isnul(S.A)}S)}(R)$. 
	In SQL, this is equivalent to
\begin{sql}
SELECT R.A FROM R 
WHERE R.A IS NULL OR R.A NOT IN 
   (SELECT S.A FROM S WHERE S.A IS NOT NULL)
\end{sql}

In  $\trtosqlfrom{\tinytwovl}{Q_5}$,
	the  condition $(c\_a > 0) \wedge 
	\neg (c\_c \in \pi_{o\_c}(O))$ in the
	subquery is translated by   
	$\trtfrom{\tinytwovl}{\cdot}$ as $$
	(c\_a > 0) \wedge 
	\Big(\isnul(c\_c)\vee \neg \big(c\_c \in \sigma_{ \neg \isnul(o\_c)
	}(\pi_{o\_c}O)\big)\Big)$$ 
	which is then used in the aggregate subquery $Q_{agg}$ (see
	details in Example \ref{queries-ex} at the end of Section \ref{sec:semp}); the rest of
	the query does not change.
	In SQL,  these are translated into
	additional \sqlkw{IS NULL} and \sqlkw{IS NOT NULL}
	conditions in the \sqlkw{WHERE} of the aggregate query:


\begin{sql}
WHERE c_acctbal > 0.0 AND (c_custkey IS NULL OR 
  c_custkey NOT IN (SELECT o_custkey FROM orders 
                    WHERE o_custkey IS NOT NULL))
\end{sql}

	This translation of $Q_5$ makes no extra assumptions
	about the schema. Having additional information (e.g., 
	that
	\lstinline{c_custkey} is the key of \lstinline{customer}) simplifies
	translation even further;
	see Section \ref{sec:coincide}.
\end{example}

From $\semtwovlsyneq{\,}$ to $\semsql{\,}$, we define the translation $\trtosqlfrom{\syneq}{\,}$ that specifies how to take a query $E$
written under the syntactic equality semantics and translate it into a query
$\trtosqlfrom{\syneq}{E}$ where $\semtwovlsyneq{E}_D = \semthreevl{\trtosqlfrom{\syneq}{E}}_D$ for every
 $D$. Similarly to before, we define translations of conditions and queries by
mutual induction such that
 translations $\trtfrom{\syneq}{\cdot},\, \trffrom{\syneq}{\cdot}$ on conditions $\theta$
	ensure 
	\begin{align*}
	\semtwovlsyneq{\theta}_{D,\eta} = \true &\text{ if and only if } \semthreevl{\trtfrom{\syneq}{\theta}}_{D,\eta} = \true \\
	\semtwovlsyneq{\theta}_{D,\eta} = \vf &\text{ if and only if } \semthreevl{\trffrom{\syneq}{\theta}}_{D,\eta} = \true
	\end{align*}
and
	we go from  $E$ to $\trtosqlfrom{\syneq}{E}$ by inductively replacing each condition $\theta$ with $\trtfrom{\syneq}{\theta}$.
The full details of the translations are in
Figure~\ref{fig:trtfsyneq}.


\newsavebox{\trtsyneqdef}
\sbox{\trtsyneqdef}{
\parbox{\columnwidth-14pt}
{
\begin{tabular}{ll}
  \textbf{Basic:} &$ 
		\trtfrom{\syneq}{\theta} \df \theta 
		 \text{ for } \theta  \df
\true \, | \, \false \, | \, \isnul(t) $
\\
& $\trtfrom{\syneq}{t\,\omega\,t'} \df \neg\isnul(t) \wedge
        \neg \isnul(t')\wedge t\,\omega \, t'$\\
        \multicolumn{2}{r}{\text{for $\omega \in \{<,>,\neq\}$}}
\\&
$\trtfrom{\syneq}{t\,\omega \, t'} \df \left( \isnul(t) \wedge
         \isnul(t') \right) \vee$\\&\hfill$ 
        \left(\neg\isnul(t) \wedge
        \neg \isnul(t')\wedge t\,\omega \, t'\right)
    $
\\
\multicolumn{2}{r}{\text{for $\omega \in \{\le,\ge,\eq\}$}}\\
	&$\trtfrom{\syneq}{\isempty(E)} \df \isempty(\trtosqlfrom{\syneq}{E})$ \\
	&$\trtfrom{\syneq}{\brt \, \omega \, \any(E)} \df  \neg \isempty\left( \sigma_{\theta} \left( \trtosqlfrom{\syneq}{E} \right)\right)$\\
	&$\trtfrom{\syneq}{\brt \, \omega \, \all(E)} \df 
 \isempty\left( \sigma_{\neg \theta} \left( \trtosqlfrom{\syneq}{E} \right)\right)$\\
 &\hfill where $\theta \df \trtfrom{\syneq}{\brt \,\omega\,\ell({E})}$
 \\
	&\\
	        \textbf{Comp.:} &$
	        	\trtfrom{\syneq}{\theta_1 \vee \theta_2} \df	\trtfrom{\syneq}{\theta_1} \vee \trtfrom{\syneq}{\theta_2}$\\&
		$\trtfrom{\syneq}{\theta_1 \wedge \theta_2} \df	\trtfrom{\syneq}{\theta_1} \wedge \trtfrom{\syneq}{\theta_2}$\\
		&$\trtfrom{\syneq}{\neg \theta}\df \trffrom{\syneq}{\theta}$\\
\end{tabular}
}
}

\newsavebox{\trfsyneqdef}
\sbox{\trfsyneqdef}{
\parbox{\columnwidth-14pt}
{
		\begin{tabular}{ll}	
	\textbf{Basic:} &
$\trffrom{\syneq}{\theta}\df \neg \theta \text{ for } \theta \df \true \,|\, \false \,|\, \isnul(t)$\\
& $\trffrom{\syneq}{t\, \omega \, t' } \df
 \isnul(t) \vee \isnul(t') \vee $\\&\hfill$\left(\neg\isnul(t) \wedge\neg \isnul(t')\wedge \neg t\,\omega \, t'\right)
 $ \\
     \multicolumn{2}{r}{\text{for $\omega \in \{<,>,\neq\}$}}
\\
& $\trffrom{\syneq}{t\, \omega \, t' } \df(\isnul(t)\wedge  \neg \isnul(t'))\vee $\\&\hfill$
    \vee \left(\neg\isnul(t) \wedge\neg \isnul(t')\wedge \neg t\,\omega \, t'\right)$
\\&  \hfill$\vee
   ( \isnul(t')  \wedge \neg \isnul(t))$\\
     \multicolumn{2}{r}{\text{for $\omega \in \{\le,\ge,\eq\}$}}
\\
&$\trffrom{\syneq}{\isempty(E)} \df \neg \isempty(\trtosqlfrom{\syneq}{E})$\\
&	$\trffrom{\syneq}{\brt\, \omega\, \any(E)}
	\df
		\isempty (\sigma_{   \theta
		}(\trtosqlfrom{\syneq}{E}))
		$\\ 
		&$\trffrom{\syneq}{\brt\, \omega\, \all(E)}\df  
		\neg \isempty( \sigma_{ \neg \theta }(\trtosqlfrom{\syneq}{E}) )$\\
&\hfill where $		
\theta \df \trtfrom{\syneq}{ \brt \,\omega\, \ell(E)}$
\\&\\
		\textbf{Comp.:}&$
		\trffrom{\syneq}{\theta_1 \vee \theta_2}
		\df \trffrom{\syneq}{\theta_1} \wedge \trffrom{\syneq}{\theta_2}$\\
		&$ \trffrom{\syneq}{\theta_1 \wedge \theta_2}\df \trffrom{\syneq}{\theta_1} \vee \trffrom{\syneq}{\theta_2}$
		\\&$
		\trffrom{\syneq}{\neg \theta}\df  \trtfrom{\syneq}{\theta}$
		\end{tabular}
}
}
\newcommand{\figtrtfsyneqdef}{%
	\begin{figure}[h]
		\centering
  \fbox{{\usebox{\trtsyneqdef}}}
  \fbox{{\usebox{\trfsyneqdef}}}
		\caption{$\trtfrom{\syneq}{\cdot}$ and   $\trffrom{\syneq}{\cdot}$ of basic and composite conditions}
		\label{fig:trtfsyneq}
	\end{figure}
}%
\figtrtfsyneqdef

\begin{example}
	\label{uftosql-ex}
 \sloppy{	Following the previous example, we look at the translation $\trtosqlfrom{\syneq}{\cdot}$ of queries $Q_1$--$Q_5$.
In these translations we often see the condition of the form 
\begin{multline*}
\theta[t,t']=\left( \isnul(t) \wedge \isnul(t')\right) \vee\\ 
\left( \neg\isnul(t) \wedge \neg \isnul(t') \wedge t\eq t'\right).
\end{multline*}
 Query $Q_1$, which is equivalent to $\sigma_{\neg \left(R.A \eq \any(S) \right)}(R)$, is  translated as  
 $ \sigma_{\isempty \left( \sigma_{\theta[R.A,S.A]}(S) \right)}(R)$.
Query $Q_2$ is translated into the same expression.
Query $Q_3$ is translated as $$
\epsilon\Big(\pi_{X.A}\big(\sigma_{\theta[X.A,Y.A]}\big(\rho_{R.A\to X.A}(R) \times 
\rho_{R.A\to Y.A}(R)\big)\big)\Big)\,.
$$ 
While $Q_4$ remains unchanged, in  the
	subquery of ${Q_5}$, the  condition $(c\_a > 0) \wedge 
	\neg (c\_c \in \pi_{o\_c}(O))$ 
is translated to 
$$
    \left(\neg \isnul(c\_a ) \wedge c\_a > 0 \right)\wedge
    \left(\isempty (\sigma_{\theta[o\_c,c\_c]}(O))
    \right)\,.
$$
}
\end{example}
\noindent
Notice that the size of expressions $\trtosqlfrom{\tinytwovl}{E}$ and $\trtosqlfrom{\syneq}{E}$  is indeed linear in $E$.

\section{Restoring expected optimizations}
\label{sec:restore}
Recall queries $Q_1$ and $Q_2$ from the introduction. Intuitively, one
expects them to be equivalent: indeed, if we remove the \sqlkw{NOT}
from both of them, then they are equivalent. And it seems
that if conditions $\theta_1$ and $\theta_2$ are equivalent, then so must be
$\neg\theta_1$ and $\neg\theta_2$. So what is going on there?

Recall that the effect of the \sqlkw{WHERE} clause is to keep tuples
for 
which the condition is evaluated to \vt. So equivalence of conditions
$\theta_1$ and $\theta_2$, from SQL's point of view, means
$\semsql{\theta_1}_{D,\eta}=\vt  \LRA\   \semsql{\theta_2}_{D,\eta}=\vt$ 
for all $D$ and $\eta$. Of course in 2VL this is the same as stating that
$\semsql{\theta_1}_{D,\eta}=\semsql{\theta_2}_{D,\eta}$ due to the 
 fact that there are only two mutually exclusive truth values. In 
 3VL this is not the case however: we can have non-equivalent conditions that evaluate to \vt\ at the same time. 
 \OMIT{ 
 For example $\theta_1\df (A\eq 1)$ and $\theta_2\df (A\eq 1) \wedge \neg\isnul(A)$ are such; in particular $\neg\theta_1$ and $\neg\theta_2$ do not evaluate to \vt\ at the same time. 
 }

With the two-valued semantics eliminating this problem, 
we restore many query equivalences. 
It is natural to 
assume them for granted even though they are not true under 3VL, 
perhaps accounting for some typical
programmer mistakes in SQL \cite{celko,BrassG06}. In terms of \sqlra\
expressions, these equivalences are as follows. 
\newcommand{\proprestequiv}{The following equivalences hold, $\star\in \{\twovl, \syneq \}$:
\begin{enumerate}[
labelindent=0pt]
\item $\semtwovlgen{\sigma_{\theta}(E)}_{D,\eta} =
\semtwovlgen {E \setminus
\sigma_{\neg \theta}(E)}_{D,\eta} $
\item
$\semtwovlgen{\brt\in {E}}_{D,\eta} = \false $ if and only if $\semtwovlgen{\sigma_{\brt \eq \ell(E) }(E)}_{D,\eta} = \emptyset$
\item
$\semtwovlgen{\brt  \,\omega \, \any(E)}_{D,\eta}  =\false$ if and only if $\semtwovlgen{\sigma_{\brt \, \omega\,\ell(E) }(E)}_{D,\eta} = \emptyset$ 
\item
$\semtwovlgen{\brt \, \omega \, \all(E)}_{D,\eta} = \true$ if and only if $\semtwovlgen{\sigma_{ \neg(\brt \, \omega\,\ell(E)) }(E)}_{D,\eta} = \emptyset$ 
\end{enumerate}
	for 
 every
 \sqlrarec expression $E$, tuple
	$\brt$ of terms, 
	condition $\theta$, database $D$, and environment $\eta$.

Neither of those is true in general under SQL's 3VL semantics.
}
\begin{proposition}\label{prop:restequiv}
\proprestequiv
\end{proposition}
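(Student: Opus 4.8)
The plan is to first isolate the single structural fact that makes all four equivalences go through, and then to reduce each item to a direct unfolding of the semantics. The key observation is that under $\semtwovl{\,}$ every condition evaluates to either $\true$ or $\false$, and never to $\vu$: the only rule in the three-valued semantics that introduces $\vu$ is the one for comparisons $t\,\omega\,t'$, and this is precisely the rule redefined in $\semtwovl{\,}$ to return $\false$ whenever an argument is $\NULL$. Since the Kleene tables for $\wedge$, $\vee$, $\neg$ restricted to $\{\true,\false\}$ coincide with the classical Boolean connectives, and the derived conditions $\brt\in E$, $\brt\,\omega\,\any(E)$, $\brt\,\omega\,\all(E)$ are built from comparisons using these connectives, a routine induction on the structure of $\theta$ shows $\semtwovl{\theta}_{D,\eta}\in\{\true,\false\}$ for every $\theta$, $D$, $\eta$, with $\neg$ acting as genuine Boolean negation. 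This is the fact I would establish first; everything else is bookkeeping.

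For item (1) I would argue at the level of bag multiplicities. Write $B=\semtwovl{E}_{D,\eta}$, and note that since the truth value of $\theta$ on a record depends only on its attribute values (and on $D,\eta$), all occurrences of a given tuple $\brt'$ receive the same value. Fix such a $\brt'$ with $\brt'\in_k B$. If $\theta$ evaluates to $\true$ on $\brt'$, then by two-valuedness $\neg\theta$ evaluates to $\false$, so $\brt'$ has multiplicity $k$ in $\sigma_\theta(E)$ and $0$ in $\sigma_{\neg\theta}(E)$, hence multiplicity $k-0=k$ in $E\setminus\sigma_{\neg\theta}(E)$; if $\theta$ evaluates to $\false$, the roles swap and $\brt'$ has multiplicity $0$ in $\sigma_\theta(E)$ and $\max(k-k,0)=0$ in $E\setminus\sigma_{\neg\theta}(E)$. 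In either case the two sides assign $\brt'$ the same multiplicity, so the bags coincide. The only point requiring care here is the bag-difference rule $\max(m_1-m_2,0)$ together with the independence of the truth value from multiplicity; this is exactly the step that fails under $\threevl$, where a record of value $\vu$ survives in neither selection and therefore remains in the difference.

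Items (2)--(4) I would dispatch uniformly by unfolding the defining disjunction or conjunction and comparing it with the emptiness of the corresponding selection over $B=\semtwovl{E}_{D,\eta}$. For (2), $\brt\in E$ abbreviates the Boolean disjunction $\bigvee\,\brt\eq\brt'$ over the $\brt'\in B$, so it is $\false$ exactly when $\brt\eq\brt'$ is $\false$ for every $\brt'\in B$; but this is precisely the condition that no record of $E$ survives the selection $\sigma_{\brt\eq\ell(E)}(E)$, i.e.\ that this selection is empty. Items (3) and (4) are the same argument with $\eq$ replaced by $\omega$: for $\any$, being $\false$ means no $\brt'\in B$ satisfies $\brt\,\omega\,\brt'$, which is emptiness of $\sigma_{\brt\,\omega\,\ell(E)}(E)$; for $\all$, being $\true$ means every $\brt'\in B$ satisfies $\brt\,\omega\,\brt'$, equivalently (using two-valuedness to turn ``not $\true$'' into $\false$, and hence $\neg(\brt\,\omega\,\brt')$ into $\true$) that no $\brt'$ satisfies $\neg(\brt\,\omega\,\brt')$, which is emptiness of $\sigma_{\neg(\brt\,\omega\,\ell(E))}(E)$. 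The empty-$B$ case is consistent on both sides, since $\all$ is vacuously $\true$ and the selection over an empty bag is empty.

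Finally, I would stress that no induction on the possibly recursive structure of $E$ is needed: for each fixed $D$ and $\eta$ the value $\semtwovl{E}_{D,\eta}$ is a single well-defined bag $B$, and all four statements are claims about $B$ and about conditions evaluated tuple-by-tuple over it, so the recursive constructor $\mu R.\,E'$ plays no special role beyond producing $B$. I expect the only genuine subtlety to be the careful multiplicity bookkeeping in item (1); the remaining items are immediate once two-valuedness is in hand.
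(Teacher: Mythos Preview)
Your approach is essentially the same as the paper's: both proofs hinge on the observation that under $\semtwovl{\,}$ every condition takes a value in $\{\true,\false\}$, and then unfold the definitions of $\sigma_\theta$, $\in$, $\any$, $\all$ directly. Your multiplicity bookkeeping for item~(1) is a bit more explicit than the paper's, but the argument is the same.

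One omission: the proposition also asserts that none of the four equivalences hold under SQL's 3VL semantics, and the paper discharges this with explicit counterexamples (a single relation $R$ with one attribute $N$ containing the single tuple $\NULL$, together with $\theta = (N \eq 1)$ for item~(1), and $\brt = (1)$ with $E = R$ for items~(2)--(4)). You gesture at this for item~(1) (``a record of value $\vu$ survives in neither selection and therefore remains in the difference'') but do not supply counterexamples for items~(2)--(4). These are routine---any comparison returning $\vu$ will do---but should be stated to complete the proof of the proposition as written.
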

	

\OMIT{ 
To translate those into SQL, we have equivalences shown 
in Figure~\ref{fig:sqlq} in the Appendix.  While all appearing natural, none of them
is true in SQL due to the three-valued interpretation, but they all
could be used as optimizations under the two-valued semantics.
}

\OMIT{
\begin{corollary}
		\label{sql-equiv}
		The equivalences shown in Figure~\ref{fig:sqlq} hold under $\semtwovl{\,}$ but not under $\semsql{\,}$.
	\end{corollary}
}
	
\OMIT{		
The possibility of translating \sqlkw{NOT IN} conditions
	into \sqlkw{NOT EXISTS} in the two-valued semantics applies to
	general queries and can be stated as a corollary of previous
	observations.  Consider a query $Q_{\text{not\_in}}$ is
	defined as
	
	\smallskip
	
	\begin{tabular}{ll}
		\sqlkw{SELECT} & $\alpha$\\
		\sqlkw{FROM} & $R_1\ X_1, \cdots, R_m\ X_m$\\
		\sqlkw{WHERE} & $\theta$ \sqlkw{AND} $N_1,\ldots,N_k$ \sqlkw{NOT IN}\\
		& (\sqlkw{SELECT} $N_1',\ldots,N_k'$\\
		& \sqlkw{FROM} $S_1\ Y_1, \cdots, S_p\ Y_p$\\
		& \sqlkw{WHERE} $\theta'$) 
	\end{tabular}
	
	\smallskip
	\noindent
	where $\alpha$ is a list of attributes of the from $X_i.A$, $\theta$
	and $\theta'$ are some conditions, $(N_1,\cdots,N_k)$ and
	$(N_1',\cdots,N_k')$ are lists of attributes of the same length where
	each $N_i$ is of the form $X_j.A$, each $N'_i$ is of the form $Y_j.A$,
	and all aliases $X_1,\cdots,X_m,Y_1,\cdots,Y_p$ are distinct (to
	avoid name clashes).
	
	Next, query $Q_{\text{empty}}$ is defined as 
	
	\begin{tabular}{ll}
		\sqlkw{SELECT} & $\alpha$\\
		\sqlkw{FROM} & $R_1\ X_1, \cdots, R_m\ X_m$\\
		\sqlkw{WHERE} & $\theta$ \sqlkw{AND NOT EXISTS}\\
		& (\sqlkw{SELECT} \mbox{\lstinline{*}} \sqlkw{FROM} $S_1\ Y_1, \cdots, S_p\ Y_p$\\
		& \sqlkw{WHERE} $\theta'$ \sqlkw{AND} $N_1=N_1'$ \sqlkw{AND}
		$\cdots$  \sqlkw{AND} $N_k=N_k'$) 
	\end{tabular}
	
	\medskip
	\noindent
	Note that queries $Q_1$ and $Q_2$ from the introduction are of this
	shape, when $R_1=X_1=R$ and $S_1=Y_1=S$, the list $\alpha$ contains
	$R.A$, attributes $N_1$ and $N_1'$ are $R.A$ and $S.A$, and both 
	conditions $\theta$ and $\theta'$ are true. 

\OMIT{
Indeed, we can conclude that 
	passing from a \sqlkw{NOT IN} query to the \sqlkw{NOT
		EXISTS} query is valid in two-valued semantics, as
                following corollary states.
}

	\begin{corollary}
		\label{in-exists-prop}
		  $Q_{\textrm{not\_in}}$ and $Q_{\textrm{empty}}$ are
		equivalent under $\semtwovl{\,}$.
	\end{corollary}
}

\section{Two-valued semantics for free}
\label{sec:coincide}
Theorem \ref{thm:main} shows that every query written under 2VL semantics can be translated into a query that runs on existing RDBMSs and produces the same result. But ideally we want \emph{the same} query to produce the right result, without any modifications. We now show that this happens very often, for a very large class of queries, including majority of benchmark queries used to evaluate RDBMSs. A key to this is the fact some attributes cannot have \NULL\ in them, in particular those in primary keys and those declared as \sqlkw{NOT NULL}. 

We provide a sufficient condition that a query produces the same result under the 2VL and 3VL semantics, for a given list of attributes that cannot be \NULL. It is an easy observation that this equivalence in general is undecidable; hence we look for a sufficient condition. It is defined in two steps. The first tracks attributes in outputs of \sqlrarec\ queries that are {\em nullable}, i.e., can have \NULL\ in them. The second step restricts nullable attributes in queries. 

\OMIT{ 
The translation schemes described in Section~\ref{sec:tosql} allow the programmer to write queries under two-valued semantics, and translate them into an equivalent runable version on existing DBMS. 
These translations, however, as seen in Example~\ref{uftosql-ex}, can be trivial. 
In this section we analyze when it is the case. 
We note that finding a condition which is both sufficient and necessary can be shown to be undecidable.
Thus, we focus on providing a sufficient condition. We do that in two steps. In the first, we track attributes that may contain nulls; and in the second, we restrict the usage of these attributes under negation (within conditions).  
}

\subsubsection*{Tracking nullable attributes}
We define recursively the sequence $\nulable{E}$ of attributes of an 
expression $E$; those may have a \NULL\ in them; others are guaranteed not to have any. We assume that $\nulable{R}$ for a base relation $R$ is defined in the schema: it is the subsequence of attributes of $R$ that are not part of $R$'s primary key nor are declared with \sqlkw{NOT NULL}. Others are as follows:
\begin{itemize}
	\item
	$	\nulable {\epsilon(E)} = \nulable{\sigma_{\theta}E} \df \nulable{E}$;
	\item
	$\nulable{E_1 \times E_2} \df
	\nulable{E_1} \cdot \nulable{E_2}$; 
	\item For $\cup$ and $\cap$, assume that $\ell(E_1)=A_1\cdots A_n$ and $\ell(E_2)=B_1\cdots B_n$. Then $\nulable{E_1\op E_2} = A_{i_1}\cdots A_{i_k}$ where $i_j$ is on the list if: 
	\OMIT{
	(for $\cup$) at least one of  $A_{i_j}\in\nulable{E_1}$ and or $B_{i_j}\in\nulable{E_2}$ is true, or (for $\cap$)  both $A_{i_j}\in\nulable{E_1}$ and or $B_{i_j}\in\nulable{E_2}$ are true;}
	for $\cup$, either  $A_{i_j}\in\nulable{E_1}$ or $B_{i_j}\in\nulable{E_2}$, and for $\cap$  both $A_{i_j}\in\nulable{E_1}$ and $B_{i_j}\in\nulable{E_2}$;
	\item
	$\nulable{E_1\setminus E_2} \df \nulable{E_1}$;
	\item
	$\nulable{\mu R. E_1 \op E_2} = \nulable{ E_1 \cup E_2} $; 
	\item 
	$\nulable {\pi_{t_1,\cdots,t_m}(E)} \df t_{i_1} \cdots t_{i_k}$ where $t_{i_j}$'s are those terms 
that mention names in $\nulable{E}$.
\sloppy
	\item $
	\nulable{\Group_{\bar M,\langle F_1(N_1),\cdots, F_m(N_m)\rangle}(E)} \df \bar{M'} \bar{F'}$
	where $ \bar{M'}$ is the sequence obtained from $\bar M$ by keeping names that are in $\nulable{E}$, and $\bar{F'}$ is obtained from $F_1(N_1),\cdots, F_m(N_m)$ by keeping $F_i(N_i)$ whenever $N_i$ is in $\nulable{E}$.
	In the last two rules, if renamings are specified, names are changed accordingly.  
\end{itemize}
\OMIT{
	To present the restriction that ensures the semantics coincide, we need the following definition that allows one to track how attributes propagate in a query.
	For expressions $E$, we define recursively the set $\src{E}$ of its \emph{source} attributes as follows:
	\begin{itemize}
		\item
		$\src{R} \df \{N \, \vert \, N\in \ell(R)\}$
		\item
		$\src{\sigma_{\theta}E} \df \src{E}$
		\item
		$\src{E_1 \times E_2} \df
		\src{E_1} \cup \src{E_2}$ 
		\item
		$\src{E_1 \op E_2} \df
		\src{E_1}$ where $\op\in\{ \cup, \setminus \}$ 
		\item
		$\src{\epsilon(E)} \df \src{E}$
		\item 
		$\src{\pi_{t_1,\cdots,t_m}(E)} \df  
		\{
		t_i \,\vert\, \name{t_i}
		\}
		$
		\item$\begin{aligned}[t]
		\src{\Group_{\barN}[F_1(N_1),\cdots&, F_m(N_m)](E)} \df\\
		& \{N \, \vert \, N\in \barN\} \cup \{N_1,\cdots, N_m \}
		\end{aligned}$
		\item
		$\begin{aligned}[t]\src{\mu R. E_1 \op E_2} &\df\\ & \src{E_1} \cup \src{E_2}\end{aligned}$ where $\op \in \{ \cup,\ecup \}$
	\end{itemize}
	
	We are now ready to present the sufficient condition.
	\begin{theorem}
		Let $E$ be an \sqlrarec\ expression. If $\src{E} \cap \nulable{E} = \emptyset$
		then for every $D$ and $\eta$ the following holds: \[\semsql{E}_{D,\eta} =  \semtwovlgrnd{E}_{D,\eta}\]
	\end{theorem}
	
	We can dive deeper into the conditions to refine the condition. }
\OMIT
{
	The Boolean flag $\isunknownable{\theta}$ that maps \sqlrarec\ conditions $\theta$ into truth-values in $\{ \true, \false\}$ is defined recursively as follows:
	\begin{itemize}
		\item $\isunknownable{\true} \df \false$;
		\item $\isunknownable{ \isnul(t)}\df \false$;
		\item  $\isunknownable{ \brt \eq \brt'} \df \true$ if either $\brt$ or $\brt'$ contains names that are nullable, and  $\isunknownable{ \brt \eq \brt'} \df \false$
		otherwise;
		\item $\isunknownable{ \brt \in E} \df{\true}$
		if there is at least one name in $\brt$ and $\ell(E)$ that is nullable, and  
		$\isunknownable{ \brt \in E} \df \false$
		otherwise;
		\item $\isunknownable{t \, q \, E}$ where $q\in \{ \any, \all \}$ is defined as true if either $t$ or $\ell(E)$ are nullable, and is defined as false otherwise.
	\end{itemize}
	
	\begin{theorem}
		Let $E$ be an \sqlrarec\ expression.
		If for any condition $\neg \theta$ of $E$ it holds that $\isunknownable{\theta}= \false$ then 
		\[
		\semsql{E}_{D,\eta} = \semtwovl{E}_{D,\eta}
		\]
	\end{theorem}
	
	\begin{example}
		Recall query $Q_5$ from our running example:$$\Group_{c\_n}\langle F_{\texttt{count}}(c\_c)\rangle \Big( \sigma_{ c\_a > {\any}\Big(
			\Group_{\emptyset}\langle F_{\texttt{avg}}(c\_a)\rangle \big(\pi_{c\_a}(\sigma_{\theta}(C))\big)
			\Big)}(C)  \Big)\,.$$
		with  $(c\_a > 0) \wedge 
		\neg (c\_c \in \pi_{o\_c}(O))$.
		Since $c\_c$ is a key it is not in the set $\nulable{C}$
	\end{example}
}

\subsubsection*{Restricting the nullable attributes}
Their use is restricted under negation in selection conditions. 
We say that $\sigma_{\theta}(E)$  is \emph{null-free} if for every sub-condition of $\theta$ of the form $\neg \theta'$ the following hold:
\begin{itemize}
	\item the constant $\NULL$ does not appear in $\theta'$;
	\item for every atomic condition  $\brt\,\omega\,\brt'$  in $\theta'$, no name in $\brt,\brt'$  is in $\nulable{E}$; 
\item for every atomic condition $\brt\in F, \, \brt\,\omega\, \any(F)$ or 	$\brt\,\omega\, \all(F)$ in $\theta'$, the set $\nulable{F}$ is empty and no name in $\brt$ is in $\nulable{E}$.
	\OMIT{\item
	 for every atomic condition $\brt\eq \brt'$ in $\theta'$, no name in $\brt,\brt'$ is in $\nulable{E}$.}
\end{itemize}
\OMIT{
We define recursively the set of \emph{certain} 
An \sqlrarec expression $E$ is \emph{certain} if for each of its subexpressions $E'$ of the form $\sigma{\theta}(F)$ one of the following hold:
\begin{itemize}
		\item $\theta \df \true$ or  $\theta \df \false$
		\item $\theta \df \isnul(t)$ 
		\item $\theta \df \brt \eq \brt'$ and no null appears in $\brt,\brt'$ and every name that appears in $\brt,\brt'$ is not in $\nulable{F}$ 
		\item
		$\theta \df \brt \in F'$ and no null appears in $\brt$ and every name that appears in $\brt$ and in $F$ is not in $\nulable{E'}$
		\item
		$t \, \omega \, \any(F')$ is certain if every name that appears in $t$ and in $F$ is not in $\nulable{E}$
		\item
		$t \, \omega \, \all(F')$ is certain if every name that appears in $t$ and in $F$ is not in $\nulable{E}$	
	\item $\theta \df \theta_1 \vee \theta_2$ and both     $\theta_1 , \theta_2$ are certain
	\item $\theta$ is a composite condition of the form $\theta_1 \wedge \theta_2$ and both $\theta_1 , \theta_2$ are certain
	\item
	$\theta$ is a composite condition of the form $\neg \theta'$ and $\theta'$ is certain.
\end{itemize}

For an \sqlrarec expression $E$, we define the class of its  \emph{certain conditions} recursively as follows:
For atomic conditions we have
\begin{itemize}
	\item $\true$ and $\false$ are certain 
	\item $\isnul(t)$ is certain
	\item $\brt \eq \brt'$ is certain if every name that appears in $\brt$ and in $\brt'$ is not in $\nulable{E}$ 
	\item
	$\brt \in F$ is certain if every name that appears in $\brt$ and in $F$ is not in $\nulable{E}$
	\item
	$t \, \omega \, \any(F)$ is certain if every name that appears in $t$ and in $F$ is not in $\nulable{E}$
		\item
		$t \, \omega \, \all(F)$ is certain if every name that appears in $t$ and in $F$ is not in $\nulable{E}$	 
\end{itemize}
For composite conditions we have
\begin{itemize}
	\item $\theta_1\op \theta_2$ where $\op\in \{\vee,\wedge\}$ is certain if both $\theta_1$ and $\theta_2$ are certain
	\item $\neg \theta$ is certain if $\theta$ is certain.
\end{itemize}}

\newcommand{\thmtrivtrans}{	Let $E$ be an \sqlrarec\ expression.
	If every subexpression of $E$ of the form $\sigma_{\theta}(F)$ is null-free, then 
	\[
	\semtwovl{E}_{D,\eta} \ = \ \semtwovlsyneq{E}_{D,\eta} \ = \ \semsql{E}_{D,\eta}\,.
	\]}
\begin{theorem}\label{thm:trivtrans}
\thmtrivtrans
\end{theorem}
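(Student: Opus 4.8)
The plan is to exploit that $\semtwovl{\,}$ and $\semsql{\,}$ differ in exactly one place: a comparison $t\,\omega\,t'$ whose value is $\vu$ under $\semsql{\,}$ is sent to $\vf$ under $\semtwovl{\,}$, whereas the values $\vt$ and $\vf$ are never changed. First I would record an \emph{information-ordering monotonicity} fact. Order the truth values by information, placing $\vu$ strictly below the two definite values $\vt$ and $\vf$; Kleene's connectives are monotone for this order, and passing from $\semsql{\,}$ to $\semtwovl{\,}$ only moves atoms upward (from $\vu$ to $\vf$). Consequently $\semsql{\theta}_{D,\eta}=\vt$ implies $\semtwovl{\theta}_{D,\eta}=\vt$, and likewise $\semsql{\theta}_{D,\eta}=\vf$ implies $\semtwovl{\theta}_{D,\eta}=\vf$, for every condition $\theta$. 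Hence the two semantics can disagree on a selection only at tuples where $\semsql{\theta}_{D,\eta}=\vu$ while $\semtwovl{\theta}_{D,\eta}=\vt$, and the whole task reduces to showing that the null-free restriction excludes this case.

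The second ingredient is a soundness lemma for $\nulable{\cdot}$: whenever an attribute $A\in\ell(E)$ is not listed in $\nulable{E}$, the $A$-column of $\semsql{E}_{D,\eta}$ (equivalently of $\semtwovl{E}_{D,\eta}$) contains no $\NULL$, assuming the base-relation declarations are faithful. I would prove this by induction on $E$, verifying each clause of the definition of $\nulable{\cdot}$. The cases of projection, product, the set operations and duplicate elimination are routine; grouping and aggregation need the observation that groups are non-empty so that an aggregate of non-null inputs stays non-null (with attention to empty-input corner cases), and the recursion clause $\nulable{\mu R.E_1\op E_2}=\nulable{E_1\cup E_2}$ needs a monotonicity argument showing that this bound is preserved by every stage of the iteration.

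With these in place, the heart of the argument is a lemma on conditions: if $\sigma_\theta(E)$ is null-free, if every name that the environment maps to $\NULL$ avoids the negated sub-conditions of $\theta$, and if $\semtwovl{F}=\semsql{F}$ holds for every subquery $F$ occurring in $\theta$, then $\semtwovl{\theta}_{D,\eta}=\vt$ iff $\semsql{\theta}_{D,\eta}=\vt$. The null-free clauses together with the soundness lemma ensure that inside any sub-condition $\neg\theta'$ no atom can take the value $\vu$ under $\semsql{\,}$: comparison arguments are non-$\NULL$ (there is no $\NULL$ constant, no nullable name, and a function of non-null arguments is non-null), and each subquery atom uses an $F$ with $\nulable{F}=\emptyset$ together with a non-null left-hand tuple. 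Since the Kleene connectives preserve two-valuedness, each maximal negated sub-condition $\neg\theta'$ is itself two-valued, and by the monotonicity fact it then takes the same value under both semantics. Treating these negated sub-conditions as atoms that already agree, $\theta$ reduces to a negation-free combination of atoms; as every atom agrees on whether its value equals $\vt$ (the $\vt$-case of a comparison or membership test is identical in the two semantics) and being-$\vt$ of a negation-free combination is determined by being-$\vt$ of its atoms, $\theta$ agrees on being $\vt$.

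The theorem then follows by structural induction on $E$. Every operator other than selection is interpreted identically in the two semantics, so there it suffices to apply the induction hypothesis to the immediate subexpressions; for $\sigma_\theta(F)$ the induction hypothesis supplies $\semtwovl{F}=\semsql{F}$ and the equivalence for the subqueries of $\theta$, and the condition lemma shows that exactly the same tuples survive the selection. Recursion $\mu R.E_1\op E_2$ is handled by a nested induction on the iteration stages $R_i$ and $RES_i$, which coincide under both semantics because each stage is computed by operators already shown to agree and the stopping condition is the same. I expect the main obstacle to be the soundness lemma for $\nulable{\cdot}$ under recursion and aggregation, together with the bookkeeping in the condition lemma that must guarantee that a negated sub-condition buried in a correlated subquery never receives a $\NULL$ through the environment; making the environment hypothesis on names precise and threading it correctly through the induction is where the real care is needed.
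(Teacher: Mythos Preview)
Your approach is correct and genuinely different from the paper's. The paper first pushes negations down to atoms via De~Morgan (valid in Kleene logic), then proves by mutual induction that (a) subexpressions agree and (b) every subcondition agrees on being~$\vt$, handling each negated atomic case by hand using Lemma~\ref{lem:nf}. You instead invoke the Scott information order on $\{\vt,\vf,\vu\}$ and monotonicity of the Kleene connectives to get one direction for free, then argue that each maximal negated subcondition is two-valued under $\semsql{\,}$ (hence agrees fully), and finally observe that for a positive combination being~$\vt$ is determined solely by which atoms are~$\vt$. Both routes rest on the same soundness lemma for $\nulable{\cdot}$, and both must thread the environment through correlated subqueries; you are right that this is where the care lies, and the paper is in fact rather terse on that point.

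One caveat on presentation: your opening monotonicity claim (``$\semsql{\theta}=\vt$ implies $\semtwovl{\theta}=\vt$, for every condition $\theta$'') is \emph{not} true unconditionally once $\theta$ contains subquery atoms such as $\isempty(E)$ or $\brt\in E$, because the subquery $E$ itself may evaluate differently under the two semantics. A simple witness: with $R=\{1,\NULL\}$ and $E=\sigma_{\neg(A\doteq A)}(R)$, one gets $\semsql{\isempty(E)}=\vt$ but $\semtwovl{\isempty(E)}=\vf$. You do state the correct hypothesis later (``if $\semtwovl{F}=\semsql{F}$ for every subquery $F$ occurring in $\theta$''), so the argument as actually used is fine; just be sure to phrase the monotonicity fact relative to that hypothesis from the start, and fold it into the mutual induction rather than presenting it as a standalone lemma about arbitrary conditions.
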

We clarify here that by a subexpression we mean expressions that are given by subtrees of the parse-tree of an expression. 
In particular this includes subexpressions of queries used in conditions. For example, in an expression $E=\sigma_{A \in E_1}(E_2)$, subexpressions of both $E_1$ and $E_2$ are also subexpressions of $E$.

\begin{example}
	Consider the queries from our running example. 
	Theorem~\ref{thm:trivtrans} applies to $Q_1,Q_4$ if $R.A$ is a key, and to $Q_2,Q_3$ if both $R.A$ and $S.A$ are keys in $R$ and $S$ respectively.
	In query $Q_5$ from our running example
	we have the condition 
	$(c\_a > 0) \wedge 
	\neg (c\_c \in \pi_{o\_c}(O))$.
	Note that if $c\_c$ is a key it is not in $\nulable{Q_5}$.
	If, in addition,  $o\_c$ is specified as \sqlkw{NOT NULL} in table $O$,   then Theorem~\ref{thm:trivtrans} says that for $Q_5$ its SQL and each of the two-valued semantics $\semtwovl{\cdot}$ and $\semtwovlsyneq{\cdot}$ coincide. 
\end{example}

\section{Applicability of 2VL semantics} 
\label{sec:survey}
To gauge the level of applicability of our results, we answer two questions here: (a) how often do the 3VL and 2VL semantics cooincide, so the user can safely forget the {\em unknown}? and (b) when 2VL and 3VL semantics differ, which one is preferred by users?

\subsection*{How often do the semantics coincide?}

To answer this, we look at popular relational performance benchmarks: TPC-H \cite{tpch} containing 22 queries, and TCP-DS \cite{tpcds} containing 99 queries. 
A meticulous analysis of queries in those benchmarks shows that the following satisfy conditions of Theorem \ref{thm:trivtrans}:
\begin{itemize}
    \item {\em All} of TPC-DS queries;
    \item 21 out of 22 (i.e., 95\%)  TPC-H queries.
\end{itemize}
Thus, out of 121 benchmark queries, only one (Q16 of TPC-H) failed the conditions. It means that 120 of those 121 queries produce the same results under 2VL and 3VL semantics. These benchmarks were constructed to represent typical workloads of RDBMSs, meaning that many queries will not be affected by a switch to 2VL.





\subsection*{Which one is preferred by users?}

For some queries, as we have seen, 3VL and 2VL do differ. The lack of those in benchmarks might be partly explained by the fact that those queries are written by experienced programmers who tend to avoid \NULL\ pitfalls. When such queries do occur, is it more natural to expect SQL programmers to follow 3VL or 2VL?

To provide a preliminary answer to this question, we designed a short 10-question user survey. It should be noted that this approach is very common in social sciences, but in our field socio-technological aspects perhaps do not get the attention they deserve, at least for forming research agenda (with a few exceptions though such as \cite{vldb2017tamer,vldb22}). This survey is intended to be a preliminary one, to gauge the level of potential applicability.

The survey started with queries where 3VL vs 2VL makes no difference and asked if users agree with SQL's output. It then showed three queries with different 3VL and 2VL results 
and asked users which one they preferred. It then showed three pairs of queries equivalent under 2VL but not 3VL and asked users whether they want these queries to be equivalent. Finally, it showed a foreign key constraint involving nulls, and asked whether it should hold.

Of 57 received responses, 81\% came from database practitioners and 19\% from academics. The results are shown in Figure \ref{survey-fig}. The first column is for queries where results coincide (i.e., the 2VL column here is the same as the 3VL column). The next three columns are about outputs of queries, the following three are about query equivalences, and the last one about a foreign key constraint.

\begin{figure}[h]
\begin{center} 
\includegraphics[scale=0.20]{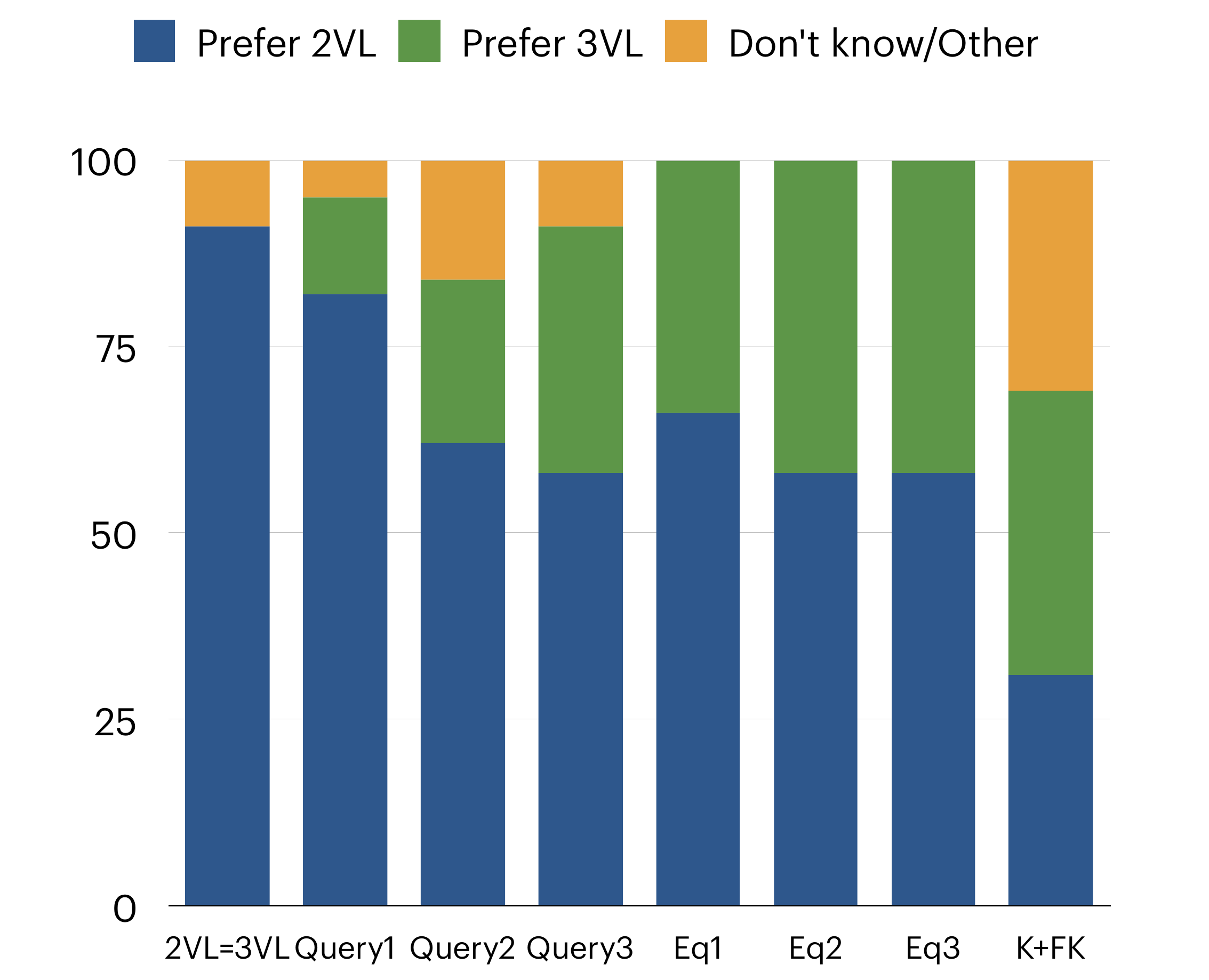}
\vspace*{-2mm}
\caption{Results of the user survey}
\label{survey-fig}
\end{center}
\end{figure}

To summarize:
\begin{itemize}
    \item When 2VL and 3VL coincide, by a 10-to-1 margin users agree with SQL's behavior.
    \item When 2VL and 3VL do not coincide, by a 3-to-1 margin (on average) users prefer query outputs under 2VL.
    \item For query equivalence, users still prefer 2VL but slighly less convincingly, by 60\% to 40\% on average.
    \item Foreign keys, when SQL switches from {\em true} to not {\em false}, are truly confusing to users who are almost evenly split  between 2VL, 3VL, and "do not know". 
\end{itemize}

We reiterate that these results should not be interpreted as the definitive answer on the right choice of the semantics, but rather as a strong indication of 2VL's feasibility and the need of more extensive user surveys to justify alternatives to (rather than outright replacement of) SQL's 3VL semantics.

\section{Robustness: other many-valued logics}
\label{sec:mvl}
\newcommand{\bI}{\mathbf{I}}
\newcommand{\thmgr}{The $\semtwovlsyneq{\,}$ semantics of
	\sqlrarec\ expressions captures the SQL semantics.}

We now show the robustness of the equivalence result, by proving that 
no other many-valued logic could
have been used in place of SQL's 3VL in a way that would have altered the
expressiveness of the language.
In fact, SQL's 3VL, known well before SQL as 
{\em Kleene's logic} \cite{mvl-book}, is not the only many-valued proposed to handle
nulls; there were others with 3,4,5, and even 6
values \cite{CGL16,Yue91,DBLP:conf/future/JiaFM92,Gessert90}. It is thus natural to ask if using one of those
would give us a more expressive language? We now give the negative answer, extending a partial result from 
\cite{AIJ22} that was proved for first-order logic under several restrictions on the connectives of the logic. We extend it to the full language \sqlrarec, and eliminate previously imposed restrictions. 

A many-valued propositional logic $\mvl$ is given 
by a finite collection $\TV$ of \emph{truth values}
with $\vt,\vf\in\TV$, and a finite set $\Conn$ of \emph{logical
connectives} $\conn: \TV^{\arty(\conn)} \rightarrow \TV$. 
We assume that $\mvl$
includes at least
the usual connectives $\neg, \wedge, \vee$ 
whose restriction to $\{\true,\false \}$
follows the rules of Boolean logic (so that queries on databases without nulls would not produce results that differ from their normal behavior). 

The only condition we impose on $\mvl$ is that $\vee$ and $\wedge$ be
associative and commutative; otherwise we cannot write 
conditions $\theta_1$ \sqlkw{OR} $\cdots$   \sqlkw{OR}
$\theta_k$ and  $\theta_1$ \sqlkw{AND} $\cdots$   \sqlkw{AND}
$\theta_k$ 
without worrying about the order of conditions.
Not having commutativity and associativity
is also problematic for optimizing conditions in \sqlkw{WHERE}, as such optimizations assume Boolean algebra identities. 


A semantics 
$\semmvl{\,}$  of \sqlra\ 
conditions is determined by the semantics of comparisons $t \omega t'$; it then follows the 
connectives of $\mvl$ to express the semantics of complex condition, and the expressions of \sqlra\ and \sqlrarec\ follow the semantics of SQL. 
Such a semantics 
$\semmvl{\,}$ is {\em SQL-expressible for atomic predicates} if:
\begin{enumerate}\itemsep=0pt
\item without nulls, it coincides with 
SQL's semantics $\semsql{\,}$;
\item for each truth value $\tv\in\TV$ and each comparison
$\omega$ there is a condition $\theta_{\omega,\tv}(t,t')$ that evaluates to \vt\ in SQL if and only if 
$t\, \omega\, t'$ evaluates to $\tv$ in $\semmvl{\,}$.
\end{enumerate}

These conditions simply exclude 
pathological situations when conditions like $1 \leq
2$ evaluate to truth values other than \vt, \vf, 
or when conditions
like ``$\NULL\eq n$ evaluates to \vt'' are not expressible in SQL
(say, $\NULL\eq n$ is \vt\ iff the $n$th Turing machine in some enumeration halts on the empty input). 
Anything reasonable
is permitted by being expressible.

\OMIT{
\begin{example}
	Consider the 4-valued logic that contains the truth values $\true, \false$ along with the values $\trv{l},\trv{ul}$ that stands for $\trv{l}$ikely and $\trv{u}$n$\trv{l}$ikely, respectively. 
	We can then define the semantics $\semmvl{t\eq t'}_{\eta}$ as follows
\[
\semmvl{t\eq t'}_{\eta}\df
	\left
	\{
	\begin{matrix}
	\true  &  \semvl{t}_{\eta}, \semvl{t'}_{\eta}\ne \NULL,  \semvl{t}_{\eta}= \semvl{t'}_{\eta}  \\
	\trv{l} &  \text{exactly one of }\semvl{t}_{\eta}, \semvl{t'}_{\eta}\text{ is }\NULL\\
	\trv{ul} &  \text{both }\semvl{t}_{\eta}, \semvl{t'}_{\eta}\text{ are }\NULL\\
	\false &   \semvl{t}_{\eta}, \semvl{t'}_{\eta}\ne \NULL,  \semvl{t}_{\eta}\ne \semvl{t'}_{\eta}
	\end{matrix}
	\right.
\]	
as it is, intuitively, less likely that two missing values are the same compared to the likelihood of a missing value and a non-missing value being the same. 
\LP{Not sure that's a good example but I think it is crucial to have an example here since we do not present a formal definition}
\end{example}
}

\begin{example}
\label{4vl-ex}
Consider a 4-valued logic from \cite{CGL16}. It has truth
values \vt, \vf, \vu\ just as SQL's 3VL, and also a new
value \vs. This value means ``sometimes'': under some
interpretation of nulls the condition is true, but under some 
it is false. The semantics is then defined in the same
way as SQL's semantics except \vu\ is replaced by \vs.

In this logic the unknown \vu\ appears when one uses complex
conditions. Suppose conditions $\theta_1$ and $\theta_2$
both evaluate to \vs. Then there are interpretations of nulls
where each one of them is true, and when each one of them is false,
but we cannot conclude that there is an interpretation where both are
true. Hence $\theta_1\wedge\theta_2$ evaluates to \vu\ rather
than \vs. For full truth tables of this 4VL, see \cite{CGL16}. 
\end{example}

\newcommand{\thmmvl}{For a many-valued logic $\mvl$ in which $\wedge$ and
	$\vee$ are associative and commutative, let 
        $\semmvl{\,}$  be a semantics of 
        \sqlra\ or \sqlrarec\ expressions based on $\mvl$. Assume that
	this semantics is SQL-expressible for atomic predicates. 
	Then
	it captures the  SQL semantics.}
\begin{theorem} \label{thm:mvl}
	 \thmmvl
\end{theorem}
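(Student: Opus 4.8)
The plan is to verify the three conditions of Definition~\ref{capture-def} for $\semsql{\,}'=\semmvl{\,}$. Condition~(\ref{capture-def3}) is immediate: the first requirement of SQL-expressibility says $\semmvl{\,}$ and $\semsql{\,}$ agree on null-free databases. For the other two conditions I would first observe that $\semmvl{\,}$ and $\semsql{\,}$ can differ only in the evaluation of selection conditions: every other operator (generalized projection, product, the bag operations, duplicate elimination, grouping/aggregation, and the recursion of $\sqlrarec$) is defined identically in both, and the logic enters only through $\sigma_\theta$ and the subquery atoms $\brt\in E$, $\brt\,\omega\,\any(E)$, $\brt\,\omega\,\all(E)$, $\isempty(E)$ that occur inside conditions. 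Hence it suffices to translate conditions while translating subexpressions recursively. Since the theorem asks only for capture, not \emph{efficient} capture (unlike Theorem~\ref{thm:main}), I may freely let the translated queries blow up in size, which buys room for case analysis over $\TV$ and over null-patterns.

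\textbf{From $\semmvl{\,}$ to $\semsql{\,}$ (condition~\ref{capture-def1}).} I would define, by mutual induction with a translation of expressions, for every condition $\theta$ and every truth value $\tv\in\TV$ an SQL condition $\theta^{[\tv]}$ such that $\semsql{\theta^{[\tv]}}_{D,\eta}=\vt$ iff $\semmvl{\theta}_{D,\eta}=\tv$. For an atomic comparison $t\,\omega\,t'$ I take $\theta^{[\tv]}:=\theta_{\omega,\tv}(t,t')$, which exists by the second requirement of SQL-expressibility; $\isnul$ and $\isempty$ are two-valued and translated directly. For the connectives I set $(\theta_1\wedge\theta_2)^{[\tv]} := \bigvee_{\tv'\wedge\tv''=\tv}\big(\theta_1^{[\tv']}\wedge\theta_2^{[\tv'']}\big)$ and symmetrically for $\vee$ and $\neg$, the outer connectives now being SQL's. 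This is correct because SQL's Kleene connectives detect truth cleanly: a Kleene conjunction is $\vt$ exactly when both conjuncts are, and a Kleene disjunction is $\vt$ exactly when some disjunct is, so for fixed $D,\eta$ the big disjunction selects the unique pair $(\semmvl{\theta_1}_{D,\eta},\semmvl{\theta_2}_{D,\eta})$ and checks whether it combines (in $\mvl$) to $\tv$. A selection $\sigma_\theta$ under $\semmvl{\,}$, which keeps the $\vt$-tuples, is then simulated by $\sigma_{\theta^{[\vt]}}$ under SQL.

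\textbf{The main obstacle} is the subquery atoms in this direction, because they are data-dependent iterated $\mvl$-connectives: $\semmvl{\brt\in E}=\bigvee^{\mvl}_{\brt_i\in E}(\brt\eq\brt_i)$, and $\mvl$'s $\vee$ need not be idempotent (indeed in the $4$VL of Example~\ref{4vl-ex} we have $\vs\wedge\vs=\vu$), so the value genuinely depends on multiplicities. I would exploit that $(\TV,\vee)$ and $(\TV,\wedge)$ are \emph{finite} commutative semigroups: the iterated operation over a bag depends only on, for each $\tv'\in\TV$, the number $c_{\tv'}$ of tuples $\brt_i$ with $\semmvl{\brt\eq\brt_i}=\tv'$, and, since the powers $(\tv')^{(k)}$ are eventually periodic, only on each $c_{\tv'}$ up to a fixed threshold and period. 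I would therefore translate $E$ to an SQL query, partition its result by the value $\tv'$ of the comparison (detectable with the already-built $\theta_{\eq,\tv'}$), use grouping/aggregation to compute each $c_{\tv'}$, and use SQL arithmetic (comparisons with constants and modular arithmetic, available among the standard functions in $\Omega$) to evaluate the resulting fixed $\TV$-valued function; this yields $(\brt\in E)^{[\tv]}$, and $\any/\all$ are analogous. This is where the bulk of the technical work lies.

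\textbf{From $\semsql{\,}$ to $\semmvl{\,}$ (condition~\ref{capture-def2}).} Here the difficulty is reversed: I must realise SQL's $\vt$-set of a condition by an $\mvl$-condition, but $\mvl$'s connectives need not be absorbing, so I cannot simply guard a comparison by $\neg\isnul(t)\wedge\cdots$ and argue the result is two-valued (e.g.\ $\vf\wedge w$ is not forced to be $\vf$ in an arbitrary admissible $\mvl$). I would instead do the guarding at the algebra level: for a selection $\sigma_\theta(E)$ I split $E$, by bag union, into the finitely many blocks determined by which of the terms occurring in $\theta$ are $\NULL$, using selections on $\isnul$-conditions, which are two-valued and identical under $\semmvl{\,}$ and $\semsql{\,}$. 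On each block every relevant comparison acts on non-null arguments, where $\semmvl{\,}$ and $\semsql{\,}$ coincide, so each such comparison is a genuine two-valued $\mvl$-condition; substituting $\vu$ for the atoms that are null, SQL's Kleene value of $\theta$ on the block reduces to a Boolean formula over these non-null comparisons, which is an $\mvl$-condition with the same keep-set as $\theta$. Recombining the blocks by bag union reproduces $\semsql{\sigma_\theta(E)}$ under $\semmvl{\,}$. Subquery atoms are treated in the same way; here SQL's logic is $3$VL and hence idempotent, so no counting is needed, only the same null-guarding. Since recursion is semantics-agnostic, both translations extend verbatim from $\sqlra$ to $\sqlrarec$, so all three conditions of Definition~\ref{capture-def} hold and $\semmvl{\,}$ captures SQL.
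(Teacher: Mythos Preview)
Your argument is correct, and the $\semmvl{\,}$-to-$\semsql{\,}$ direction is essentially the paper's: Lemma~\ref{lem:per} and Corollary~\ref{cor:mod} formalise exactly the eventual-periodicity of powers in a finite commutative semigroup that you invoke, and the translation in Figure~\ref{fig:mvltothree} computes the per-truth-value counts with $\Count(\star)$, reduces them modulo the period, and then compares the resulting $f_{\vee}$ (resp.\ $f_{\wedge}$) value to $\tau$, just as you outline.

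The $\semsql{\,}$-to-$\semmvl{\,}$ direction is where you diverge. The paper does not split at the algebra level; it simply reuses the $\semsql{\,}$-to-$\semtwovl{\,}$ translation of Figure~\ref{fig:threetotwo} verbatim, with the one-line justification that the $\mvl$ connectives extend the Boolean ones. Your worry that a guard like $\neg\isnul(t)\wedge\neg\isnul(t')\wedge\neg(t\,\omega\,t')$ need not collapse to a two-valued condition under an arbitrary $\mvl$ is legitimate at the level of the connective axioms alone; what rescues the paper's shortcut is that condition~(1) of SQL-expressibility is stronger than it looks. It quantifies over \emph{all} expressions on null-free databases, including selections whose conditions mention the constant $\NULL$, and this forces every concrete instance of the translated guard (with the offending term replaced by $\NULL$ and the other by its actual constant value) to be non-$\vt$ under $\semmvl{\,}$ exactly when the original SQL condition is non-$\vt$. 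Your block-wise splitting by null-patterns sidesteps this implicit appeal entirely: on each block the surviving atomic comparisons are genuinely two-valued and the $\mvl$ connectives then coincide with Boolean ones by hypothesis, so the argument is more self-contained. The price is a larger translated query, which is harmless here since Theorem~\ref{thm:mvl}, unlike Theorem~\ref{thm:main}, does not claim \emph{efficient} capture.
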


Different many-valued semantics are not pure theoretical 
inventions; for example, in MS SQL Server one can switch off the {\tt ansi\_nulls} option to obtain a different MVL of nulls that will be covered by Theorem \ref{thm:mvl}.

\section{Conclusions}
\label{sec:concl}
We showed that one of the most criticized aspects of SQL
and one that is the source of confusion for numerous SQL programmers
-- the use of the three-valued logic -- was not really necessary, and
perfectly reasonable two-valued semantics exist that achieve exactly
the same expressiveness as the original three-valued design. Of course
with all the legacy SQL code based on 3VL, the ultimate goal is not to replace it but rather propose alternatives. Such alternatives can apply not only to SQL but also to newly designed query languages such as GQL for graph data \cite{gpml-gql,icdt23}. 

As for future lines of research,  one is to sharpen the definition of the language to get even closer to everyday SQL.
Another direction is to adapt works like \cite{kennedy-glavic-sigmod19,pods16}
to return results with certainty
guarantees, but under 2VL as opposed to SQL's semantics. 
And most importantly we shall explore avenues of having some of these proposals adapted in relational DBMSs.

\section*{Acknowledgments}
This work was supported by a Leverhulme Trust Research Fellowship; EPSRC grants N023056 and S003800;
and Agence Nationale de la Recherche project ANR-21-CE48-0015 (Verigraph). 
We are grateful to Molham Aref and Paolo Guagliardo for helpful discussions, and to survey respondents for their willingness to participate. 
Part of this work was done while both authors were affiliated with ENS, PSL University, and while second author was affiliated with CNRS, LIGM, Universit\'{e} Gustave Eiffel. 

\balance

\bibliographystyle{abbrv}
\bibliography{main}

\onecolumn

\appendix
\parskip=0.27cm
\parindent=0cm
\begin{center}
	{\Huge Appendix: Proofs and Formal Settings}
\end{center}

\section{Appendix for Section~\ref{sec:syn}}
\OMIT{
\subsection{Well-formed Conditions and Expressions}

We now define explicitly the \emph{well-formed \sqlra\  expressions}, $\wfe$ for short which are the expressions with a well-defined semantics. 
This definition is based on the definition of \emph{well-formed conditions}, or $\wfc$ for short, that are defined recursively as follows:
\begin{itemize}
	\item $\true,\false$ and 
	$\isnul(t)$ are $\wfc$;
	\item
	$\brt \eq \brt'$ is a $\wfc$ if $n=m$ and for every $i$ it holds that $\type(t_i) = \type(t'_i)$ 
	where $\brt \df (t_1 ,\cdots, t_n)$ and $\brt' \df (t'_1 ,\cdots, t'_m)$;
	\item
	$P(\brt) $ is  a $\wfc$ whenever
	$\type(\brt_i) = \sn$ for every $1\le i \le n$ where $\brt \df (t_1,\cdots, t_n)$;
	\item
	$\brt\in E$ is a $\wfc$ if $n=m$ and for every $1\le i \le n$ it holds that $\type(\brt_i) = \type(N_i)$  where $\brt \df (t_1 ,\cdots, t_n)$ and $\ell(E)  \df (N_1,\cdots, N_m)$;
	\item
	$\isempty(E) $ is  a $\wfc$ whenever $E$ is a $\wfe$;
	\item
	$t\, \omega\, \any(E)$ and $t\, \omega\, \all(E)$ are $\wfc$ if $\omega \in \{=,\neq, <,>,\le,\ge \}$, and $\type(t) = \type(N)$ where $\ell(E)\df N$.
	\end{itemize}
	In addition, $\theta_1 \vee \theta_2, \theta_1 \wedge \theta_2$ are $\wfc$ if $\theta_1,\theta_2$ are $\wfc$s, and $\neg \theta$ is a $\wfc$ if so is $\theta$. 
	
	We say that two tuples $(N_1 , \cdots, N_n) $ and $(N'_1 ,\cdots , N'_m)$ of names are \emph{compatible} if both $n=m$ and for every $1\le i \le n$ it holds that  $\type(N_i) = \type(N'_i)$.
	We say that two expressions $E_1$ and $E_2$ are \emph{compatible} if  
	$\ell(E_1) = \ell(E_2)$; We say that they are  \emph{type compatible} if $\ell(E_1)$ and $\ell(E_2)$ are type compatible.
	Since every name is associated with a unique type it holds that if two expressions are compatible then they are also type compatible (but not necessarily vice-versa).
	We now define $\wfe$ recursively: 
	\begin{itemize}
		\item 
		$R$ is a $\wfe$;
		\item
		$\pi_{\barN}(E)$ is a $\wfe$ whenever each element in 
		$\barN$	is also an element in $\ell(E)$;
		\item
		$\sigma_{\theta}(E)$ is a $\wfe$ whenever $\theta$ is a $\wfc$;
		\item
		$E_1\times E_2,E_1\cup E_2, E_1 \cap E_2$ and $E_1 \setminus E_2$ each is a $\wfe$ whenever $E_1$ and $E_2$ are {compatible};
		\item
		$\epsilon(E)$ is a $\wfe$;
		\item 
		$\rho_{\barN \rightarrow \barN'}(E)$ is a $\wfe$ whenever $\barN$ and $\barN'$ are type compatible;
		\item  $\App[f_1(\brt_1), \cdots, f_m(\brt_m) ](E)$ is a $\wfe$ if for every $1\le i \le m$ it holds that $\type(\brt_i)\in \sn^*$; 
		\item
		$\Group_{\barN}[F_1(N_1),\cdots, F_m(N_m)](E)$ is a $\wfe$ if for every $1\le i \le m$ it holds that $\type(N_i)= \sn$ or $N_i = \star$.
		\end{itemize}
The last concern we have is making sure that the names that appear in the conditions can be interpreted. Formally, we say that $E$ is a $\wfe$ if all of the above hold and also the following: for every name $N$ that appear in a condition in $E$ there is a subexpression $E'$ such that $N$ is an element in $\ell(E')$. 

}

\subsection*{Section \ref{sec:semp} - Formal Semantics of $\sqlra$}\label{sec:sem}
\newsavebox{\semt}
\sbox{\semt}{%
	\parbox{\textwidth}{%
		\begin{align*}
		\semjointvl{t}_{\eta}& \df \left\{\begin{matrix}
		\eta(t) & t \in \sN \\ 
		t&  t \in \Const \cup \Num \cup \Null 
		\end{matrix}\right.\\
		\semvl{(t_1,\cdots ,t_m )}_{\eta } &\df (\semvl{t_1}_{\eta}, \cdots, \semvl{t_m}_{\eta} )\\
		\semvl{f(t_1,\cdots ,t_m )}_{\eta } &\df
		\left\{\begin{matrix}
		\NULL & \exists i: \semvl{t_i}_{\eta} = \NULL \\ 
	 f(\semvl{t_1}_{\eta}, \cdots, \semvl{t_m}_{\eta})	& \text{otherwise}
		\end{matrix}\right.
		\end{align*}
	}%
}%
\newcommand{\semfigt}{%
	\begin{figure*}
		\centering{
		\fbox{\usebox{\semt}}}
		\caption{Semantics of Terms}
	\end{figure*}
}%


\newsavebox{\semeqsql}
\sbox{\semeqsql}{%
	\parbox{\textwidth}{%
		\begin{align*}
			\semthreevl{t\,\omega\,t'}_{D,\eta} & \df
			\left
			\{
			\begin{matrix}
				\true  &  \semvl{
		t}_{\eta} , \semvl{
		t'}_{\eta} \ne \NULL,  \semvl{t}_{\eta}\,\omega\, \semvl{t'}_{\eta} \\ 
				\false &  \semvl{
		t}_{\eta}, \semvl{
		t'}_{\eta} \ne \NULL,  \semvl{t}_{\eta}\, \not \omega\, \semvl{t'}_{\eta}\\ 
				\unknwn &  \semvl{t}_{\eta} = \NULL \text{ or }\semvl{t'}_{\eta} = \NULL
			\end{matrix}
			\right.\\
			\text{where }\omega&\in\{ \eq,\noteq, < ,> ,\le ,\ge \}
\\
\semvl{P(\brt)}_{D,\eta}& \df
		\left
		\{
		\begin{matrix}
		\true  & \forall\,i:\,\semvl{
		t_i}_{\eta} \ne \NULL \text{ and } \mathbf{P}(\brt)  \\ 
				\false &  \forall\,i:\,\semvl{
		t_i}_{\eta} \ne \NULL \text{ and } \neg \mathbf{P}(\brt)\\
		\unknwn & \exists\,i:\,\semvl{
		t_i}_{\eta} = \NULL\\
			\end{matrix}
			\right.\\
		\text{where } \brt &
			 \df (t_1,\cdots, t_n)
		\end{align*}
	}%
}%
\newcommand{\semfigeqsql}{%
	\begin{figure*}
		\centering
		\fbox{\usebox{\semeqsql}}
		\caption{SQL's Semantics of Predicates and Equality}
	\end{figure*}
}%

\newsavebox{\semn}
\sbox{\semn}{%
	\parbox{\textwidth}{%
		\begin{align*}
\ell\left(\pi_{t_1 [\shortrightarrow  \! N_1],\cdots, t_m [\shortrightarrow   \!N_m] }(E)\right) & \df \tilde N_1 \cdots \tilde N_m\\
\text{where } \tilde N_i& \df \left\{\begin{matrix}
N_i & \text{ if }[\shortrightarrow \!  N_i]  \\ 
\Name(t_i)  & \text{otherwise}
\end{matrix}\right.\\
\ell\left(\sigma_{\theta}(E)\right) & \df \ell(E)\\
\ell\left( E_1 \times E_2 \right) &\df \ell(E_1)\cdot \ell(E_2)\\
\ell\left( E_1 \op E_2 \right) &\df \ell(E_1) \text{ for } \op\in \{\cup, \cap, \setminus  \} \\
\ell\left(\epsilon(E)\right) & \df \ell(E)\\
\ell ( \Group_{\barN,\langle F_1(N_1)[\shortrightarrow \!  N'_1],\cdots, F_m(N_m)[\shortrightarrow  \! N'_m] \rangle} (E) ) 
&\df
\begin{multlined}[t]
 \barN\cdot\tilde N_1 \cdots \tilde N_m
\end{multlined}
 \\
\text{where } \tilde N_i& \df \left\{\begin{matrix}
N_i & \text{ if }[\shortrightarrow \!  N_i]  \\ 
\Name(F_i(N_i))  & \text{otherwise}
\end{matrix}\right.
		\end{align*}
	}%
}%

\newcommand{\semfign}{%
	\begin{figure*}
		\centering
		\fbox{\usebox{\semn}}
		\caption{Names Assigned to Expressions}
	\end{figure*}
}%

\newsavebox{\semc}
\sbox{\semc}{%
	\parbox{\textwidth}{%
	\textbf{Basic conditions}
	        \begin{align*}
		\semstrvl{\true}_{D,\eta}& \df \true 
                \ \ \ \ 
		\semstrvl{\false}_{D,\eta} \df \false\\
		\semstrvl{\isnul(t)}_{D,\eta}&\df	\left\{\begin{matrix}
		\true  &  \semjointvl{t}_{\eta} = \NULL \\ 
		\false &  \text{otherwise}
		\end{matrix}\right.\\
		\semstrvl{\brt \eq \brt'}_{D,\eta} & \df \bigwedge_{i=1}^{n} \semstrvl{t_i \eq t'_i}_{D,\eta}\\
		\semstrvl{\brt \noteq \brt'}_{D,\eta} & \df
		\neg \semstrvl{\brt \eq \brt'}_{D,\eta} \\
		\text{for } \omega\in \{\le,<,>,\ge \}\,\, 
			\semstrvl{\brt \,\omega\, \brt'}_{D,\eta}&\df
	\bigvee_{0\le i\le n-1} \big(
	\bigwedge_{1\le j \le i} 
	\semvl{t_j  \eq t'_j}_{D,\eta}
	\wedge
	\semvl{t_{i+1}  \, \omega \, t'_{i+1}}_{D,\eta} 
	\big)
	 \\
	 			\text{where } \brt& \df (t_1,\ldots, t_n)\\
			\brt'& \df (t'_1,\ldots, t'_n)\\
	\semstrvl {\brt\, \omega\, \any(E)}_{D,\eta}& \df
		\bigvee_{\brt' \in \semstrvl {E}_{D,\eta}
	}  \semstrvl{\brt\, \omega\, \brt'}_{D,\eta}\\
			\semstrvl{\brt\in E}_{D,\eta}& \df
		\semstrvl {\brt\eq \any(E)}_{D,\eta}
		\\
	\semstrvl {\brt\, \omega\, \all(E)}_{D,\eta}& \df
		\bigwedge_{\brt' \in \semstrvl {E}_{D,\eta}
	}  \semstrvl{\brt\, \omega\, \brt'}_{D,\eta}\\
		\semstrvl {\isempty(E)}_{D,\eta} &\df 
		\left\{\begin{matrix}
		\true  & \semstrvl{E}_{D,\eta} = \emptyset	 \\ 
		\false &   \text{otherwise}
		\end{matrix}\right.\\
		\end{align*}
\textbf{Composite conditions}
\begin{align*}
	\semstrvl{\theta_1 \vee \theta_2}_{D,\eta} &\df \semstrvl{\theta_1 }_{D,\eta}\vee \semstrvl{\theta_2}_{D,\eta}\\
	\semstrvl{\theta_1 \wedge \theta_2}_{D,\eta}& \df \semstrvl{\theta_1 }_{D,\eta}\wedge \semstrvl{\theta_2}_{D,\eta}\\
	\semstrvl{\neg \theta}_{D,\eta} &\df \neg \semstrvl{\theta}_{D,\eta}\\
\end{align*}
\OMIT{
\textbf{Three-valued logic rules}
\begin{center} \hspace*{-1mm}\begin{tabular}{c|ccc}
    $\land$ & \vt & \vf & \vu \\
    \hline %
    \vt & \vt & \vf & \vu \\
    \vf & \vf & \vf & \vf \\
    \vu & \vu & \vf & \vu
  \end{tabular} \hspace*{4mm}
  \begin{tabular}{c|ccc}
    $\lor$ & \vt & \vf & \vu \\
    \hline %
    \vt & \vt & \vt & \vt \\
    \vf & \vt & \vf & \vu \\
    \vu & \vt & \vu & \vu
  \end{tabular}
  \hspace{4mm}
  \begin{tabular}{c|c}
    & $\lnot$\\
    \hline
    \vt  & \vf  \\
    \vf  & \vt  \\
    \vu  & \vu  \\
  \end{tabular}\end{center}}
}
}%

\newcommand{\semfigc}{%
	\begin{figure}[h]
		\centering
		\fbox{\usebox{\semc}}
		\caption{Semantics of Conditions}
	\end{figure}
}%

\newsavebox{\semk}
\sbox{\semk}{%
	\parbox{\textwidth}{
\begin{center}
 \begin{tabular}{c|ccc}
    $\land$ & \vt & \vf & \vu \\
    \hline %
    \vt & \vt & \vf & \vu \\
    \vf & \vf & \vf & \vf \\
    \vu & \vu & \vf & \vu
  \end{tabular}
  \hspace{2cm}
  \begin{tabular}{c|ccc}
    $\lor$ & \vt & \vf & \vu \\
    \hline %
    \vt & \vt & \vt & \vt \\
    \vf & \vt & \vf & \vu \\
    \vu & \vt & \vu & \vu
  \end{tabular}
  \hspace{2cm}
  \begin{tabular}{c|c}
    & $\lnot$\\
    \hline
    \vt  & \vf  \\
    \vf  & \vt  \\
    \vu  & \vu  \\
  \end{tabular}%
\end{center}
}}

\newcommand{\semfigk}{%
	\begin{figure*}
		\centering
		\fbox{\usebox{\semk}}
		\caption{SQL 3-Valued Logic Rules}
		\label{fig:semk}
	\end{figure*}
}%

\newsavebox{\seme}
\sbox{\seme}{%
	\parbox{\textwidth}{%
	\begin{align*}
	\bara \in_k \semstrvl{R}_{D,\eta}& \text{ if } \bara \in_k R^D\\
	\OMIT{
(a_{1},\ldots,a_{m}) \in_k \semstrvl{\pi_{t_1 ,\cdots,t_m }(E)}_{D,\eta}
& \text{ if } 
\text{for every $1\le i \le m$: } 
a_i = \semstrvl{t_i}_{\eta;\eta_{\ell(E)}^{\bar c_j}}
\text{ where }
\bar c_j \in_{k_j}\semstrvl{E}_{D,\eta}\text{ and } 
k = \sum_{j=1}^n k_j
\\}
(a_{1},\ldots,a_{m}) \in_k \semstrvl{\pi_{t_1 ,\cdots,t_m }(E)}_{D,\eta}
& \text{ if } 
\semvl{E}_{D,\eta} = \bag{ \underbrace{\vec{c}_1,\ldots ,\vec{c}_1 }_{k_1\text{ times }},\ldots,  \underbrace{\vec{c}_n,\ldots,\vec{c}_n}_{k_n\text{ times }} }
    \\&k = \Sigma_{j\in I} k_j \text{ where }
    I \df \{j\,\mid\, 1\le j \le n,\, a_1 = \semvl{t_1}_{D,\eta;\eta_{\ell(E)}^{c_{j}}}, \ldots, a_m = \semvl{t_m}_{D,\eta;\eta_{\ell(E)}^{c_{j}}}  \},\, 
    \\
(\bar a_1, \bar a_2)\in_k\semstrvl{ E_1 \times E_2}_{D,\eta} & \text{ if } \bar a_1 \in_{k_1}\semstrvl{ E_1}_{D,\eta},\, \bar a_2\in_{k_2} \semstrvl{ E_2}_{D,\eta},\ k = k_1\cdot k_2
	\\
\bar a \in_k \semstrvl{ E_1 \op E_2}_{D,\eta} & \text{ if } \bar a\in \semstrvl{ E_1}_{D,\eta}, \, \bar a \in  \semstrvl{ E_2}_{D,\eta},\, \op \in \{\cup, \cap ,\setminus \}, \, k= \left\{ \begin{matrix}
	k_1+k_2 & \op \text{ is } \cup \\
	\min\{k_1,k_2\} & \op \text{ is } \cap \\
	\max\{0,k_1-k_2\}&\op \text{ is } \setminus
	\end{matrix}\right.
	\\
\OMIT{		\semstrvl{ \rho_{ \barN\shortrightarrow   \barN'}E}_{D,\eta}
	& \df \semstrvl{E}_{D,\eta}  \\}
		\bara \in_k	\semstrvl{\sigma_{ \theta}(E)}_{D,\eta} &\text{ if }
		\bara \in_k \semstrvl{E}_{D,\eta} \ \text{and}\
		\semstrvl{\theta}_{D,\eta;\eta_{\ell(E)}^{\bara}} = \true
		\\
		\bara \in_1 \semstrvl{ \epsilon (E) }_{D,\eta}
	& \text{ if  } \bara \in_k \semstrvl{ E }_{D,\eta} \text{ and
	} k > 0 
		\\
	( \bara, \bara' )\in_1	\semstrvl{\Group_{\bar M,\langle F_1(N_1),\cdots, F_m(N_m)\rangle} (E)}_{D,\eta}& \text{ if }
	\bara \in_k \semstrvl{\pi_{\barM}(E)}_{D,\eta},\\&
\bara' = ( \langle F_1(N_1)(E'_1),\cdots,
F_m(N_m)(E'_m))
\\
	&\text{ where }
	 E'_i = \semvl{\pi_{N_i}( \sigma_{ \barM \eq \bara} (E))}_{D,\eta}
\text{ and }\\
F(N)(\mathbf E) &\df 
\left\{ 
\begin{matrix}
\Card(\mathbf E) & F = \Count, \ N=\textasteriskcentered \\
F\left(\notnull{\mathbf E}\right)&\text{otherwise}
\end{matrix}\right.
\OMIT{\\
 (\bara,n) \in_k 
 \langle \Count(\textasteriskcentered)
 \rangle
 (\mathbf E)& \text{ if } 
	\bara\in_k \mathbf{E},
	n= \Card(\mathbf{E}) \\
	(\bara,n)\in_k	
	\langle F(N)\rangle(\mathbf E) &\text{ if } 
	 \bara\in_k \mathbf{E}, 
	 n= F\Big(
		 \notnull{(\semstrvl{ \pi_{N}(\mathbf E)}_{D,\eta})}\Big) 
}	\end{align*}
	}%
}%

\newcommand{\semfige}{%
	\begin{figure*}[t]
		\centering
		\fbox{\usebox{\seme}}
		\caption{Semantics of Well-formed Expressions}
		\label{fig:seme}
	\end{figure*}
}%

\begin{figure*}
\begin{minipage}[c]{1\textwidth}
\begin{subfigure}{1\textwidth}
  \centering
   	\fbox{\usebox{\semn}}
  \caption{Names}
  \label{fig:semn}
\end{subfigure}\\
\begin{subfigure}{1\textwidth}
  \centering
   	\fbox{\usebox{\semt}}
  \caption{Terms}
  \label{fig:semt}
\end{subfigure}
\\
\begin{subfigure}{1\textwidth}
  \centering
   	\fbox{\usebox{\semeqsql}}
  \caption{Comparisons and Predicates}
  \label{fig:semeqsql}
\end{subfigure}
\\
\begin{subfigure}{1\textwidth}
	\centering
	\fbox{\usebox{\seme}}
	\caption{Expressions}
	\label{fig:seme}
\end{subfigure}
\end{minipage}
\caption{Names of Expressions and Semantics of Terms, Predicates and Expressions}
\label{fig:semantics}
\end{figure*}

\begin{figure*}
	\begin{minipage}[c]{1\textwidth}
		\begin{subfigure}{1\textwidth}
			\centering
			\fbox{\usebox{\semc}}
		\end{subfigure}
	\end{minipage}
	\caption{
		 Semantics of 
	Conditions}
	\label{fig:semc}
\end{figure*}


%

We define the formal semantics of \sqlra\ expressions in
the spirit of \cite{DBLP:journals/pvldb/GuagliardoL17,hottsql,benzaken}.

We define the semantic function
$\semvl{E}_{D,\eta}$
which is, as said before, the result of evaluation of expression $E$ on database $D$
under the {\em environment} $\eta$ where 
$\eta$
provides values
of parameters of the query. 

Recall that every relation $R$ is associated with a sequence $\ell(R)$
of attribute names. 
Just as SQL queries do, every \sqlra\ expression $E$ produces a table whose
attributes 
have names. We start by defining those in 
Fig.~\ref{fig:semn}.  We make the assumption that names do not
repeat; this is easy to enforce with renaming. This differs from SQL
where names in query results can repeat, and this point was rather
extensively discussed in \cite{DBLP:journals/pvldb/GuagliardoL17}.
However, the treatment of repeated names in the definition of the
semantics of SQL queries is completely orthogonal to the treatment of
nulls, and thus we can make this assumption without loss of generality
so as not to clutter the description of our translations with the
complexities coming from treating repeated attributes.

Next, we define the semantics of terms: it is
given by the environment, see Fig.~\ref{fig:semt}.


After that we give the semantics of 
comparison operators as well as predicates of higher arities in Fig.~\ref{fig:semeqsql}. We follow SQL's three valued
logic with true value {\em true} ($\true$), {\em false} ($\false$) and
{\em unknown} ($\unknwn$). The usual SQL's rule is: evaluate a
predicate normally if no attributes are nulls; otherwise return
$\vu$. 
That is, for each predicate $P$, 
we have its
interpretation $\mathbf{P}$ over $\Const \cup \Num$. 

To provide the formal semantics of \sqlra\ expressions, we need some extra
notation.  
We assume that there is a one-to-one function $\Name$ that maps terms into (unique) names (i.e., elements in $\sN$).
Given
$\alpha\in\sN^*$ and $\barN, \barN' \in\sN^*$,  the
sequence  $ \alpha_{\barN\rightarrow \barN'}$ 
obtained from $\alpha$ by replacing each $N_i$ with $N'_i$
where $\barN \df (N_1 , \cdots, N_m)$ and
$\barN' \df (N_1' , \cdots, N'_m)$.

Next, if $\bara \df (a_1,\cdots, a_m)$ is a tuple of values over
$\Num \cup \Const \cup \Null$ and $\barN \df (N_1, \cdots, N_m)$
a tuple of names over $\sN$, 
we denote by $\eta_{\barN}^{\bara}$ the environment 
that maps each name $N_i$ into the value $a_i$.
We say that $\bara$ is \emph{consistent} with  $\barN$ if $\type(N_i)
= \sn$ implies $a_i \in \Num \cup \Null$ and $\type(N_i) = \so$
implies $a_i \in \Const \cup \Null$ for each $i$. 
For two environments $\eta$ and $\eta'$, by $\eta ; \eta'$ we mean $\eta$ overridden by $\eta'$.
That is, $\eta ; \eta'(N) \df \eta(N)$ if $\eta$ is defined on the name $N$ and $\eta'$ is not; otherwise $\eta ; \eta'(N) \df \eta'(N)$.

For a bag $B$, let $\notnull B$ be the same as $B$ but with
occurrences of \NULL\ removed. A tuple is called {\em null-free} if
none of its components is \NULL.

With these, the semantics of expressions is defined in
Fig.~\ref{fig:seme}. Note that we omit the optional parts in the generalized projection and grouping/aggregation as it does not affect the semantics but is reflected only in the names as appear in Figure~\ref{fig:semn}.

Now given an expression $E$ of \sqlra\ and a database $D$, the value of $E$
in $D$ is defined as $\semvl{E}_{D,\emptyset}$  where $\emptyset$ is the
empty mapping (i.e., the top level expression has no free variables). 
\OMIT
{\color{red}
$(a_1,\cdots,a_m)\in_k \semvl{\pi_{t_1,\cdots, t_m}(E)}_{D,\eta}$ if:
\begin{itemize}
    \item 
    $\vec{c}_1 \in_{k_1} \semvl{E}_{D,\eta}, \cdots, \vec{c}_{n} \in_{k_n}  \semvl{E}_{D,\eta}$ 
    and
    for every $\vec{c} \not \in \{ \vec{c}_1,\cdots, \vec{c}_n \} $ it holds that $\vec{c} \not \in\semvl{E}_{D,\eta}$
    \item
    $k = \Sigma_{j\in I} k_j$
    \item
     $I \subseteq \{ 1,\cdots,n \} $ is such that for every $j\in I$ it holds that  
    $a_1 = \semvl{t_1}_{D,\eta;\eta_{\ell(E)}^{c_{j}}}, \cdots, a_m = \semvl{t_m}_{D,\eta;\eta_{\ell(E)}^{c_{j}}}$, and for every $j\not \in I$ there is $1\le m'\le m$ for which $a_{m'} \ne \semvl{t_{m'}}_{D,\eta;\eta_{\ell(E)}^{c_{j}}}$
\end{itemize}
}


\OMIT{
\begin{example}
	\LP{especially fot the complicated grouping with Count* and with some other aggregates}
\end{example}
\begin{example}
	\LP{More than one!}
\end{example}

\LP{a short comment on the well-formed issue? currently the semantics is not defined fully for all of the syntactically possible expressions.}
}

\section{Appendix for Section~\ref{sec:twoval}}

\subsection{Proof of Theorem~\ref{thm:main}}

\repeatresult{theorem}{\ref{thm:main}}{ \thmmain}

In what follows, for each semantics ($\semtwovl{\,}$ and $\semtwovlsyneq{\,}$), we prove the two parts of the theorem both for \sqlra\ expressions and for \sqlrarec\ expressions by presenting both directions of the translations (and ensuring that the translations lie within the appropriate languages).
We note here that in both directions it is straightforward that the size of the resulting expressions is linear in that of the original one.

\begin{lemma}
The $\semtwovl{\,}$ 
semantics of
	\sqlrarec\ expressions, and of \sqlra\ expressions, capture
	their  SQL semantics
$\semsql{\,}$	efficiently.    
\end{lemma}\label{lem:twovl}
\begin{proof}
The proof consists of two directions:
\subsubsection{$\semtwovl{\,}$ to $\semsql{\,}$}
Let $E$ be an  \sqlra expression and let $G \df \trtosql{E}$ be the expression obtained from $E$ as described in Figure~\ref{fig:trtf} in Section~\ref{sec:twoval}.
To show that $\semtwovl{E}_D = \semthreevl{G}_D$ for every database $D$, we prove inductively that
for every database $D$ and every environment $\eta$, the following claims hold:
\begin{itemize}
	\item[$(a)$] $\semtwovl{E}_{D,\eta} = \semthreevl{G}_{D,\eta}$
	\item[$(b)$] $\semtwovl{\theta}_{D,\eta} = \true$ if and only if  $\semthreevl{\tcond}_{D,\eta} = \true$.
	\item[$(c)$]
	$\semtwovl{\theta}_{D,\eta} = \false$ if and only if  $\semthreevl{\fcond}_{D,\eta} = \true$.
\end{itemize}

Before starting, we write explicitly the translations of the following conditions:\begin{itemize}
    \item $\trt{\brt \eq \brt' }= 
    \trt{ \wedge_{i=1}^n (t_i\eq t'_i)} = \wedge_{i=1}^{n} \trt{t_i \eq t'_i} = \wedge_{i=1}^n (t_i\eq t'_i)$
    \item $\trf{\brt \eq \brt' }= 
    \trf{ \wedge_{i=1}^n (t_i\eq t'_i)} = \vee_{i=1}^{n} \trf{t_i \eq t'_i} = \vee_{i=1}^n (\isnul(t_i) \vee \isnul(t'_i) \vee \neg t_i\eq t'_i)$
    \item for $\omega\in\{<,\le,>,\ge\}$, 
    $\trt{\brt \,\omega\, \brt' }=
    \trt{ \bigvee_{i=0}^{n-1} ( (t_1,\cdots, t_i)\eq (t'_1,\cdots, t'_i) \wedge t_{i+1} \,\omega\,t'_{i+1})} = \vee_{i=0}^{n-1} \trt{(t_1,\cdots, t_i)\eq (t'_1,\cdots, t'_i) \wedge t_{i+1} \,\omega\,t'_{i+1}}  =
    \vee_{i=0}^{n-1}( \trt{(t_1,\cdots, t_i)\eq (t'_1,\cdots, t'_i)} \wedge\trt{ t_{i+1} \,\omega\,t'_{i+1}}) = \vee_{i=0}^{n-1}( {(t_1,\cdots, t_i)\eq (t'_1,\cdots, t'_i)} \wedge{ t_{i+1} \,\omega\,t'_{i+1}}
)    $
    \item
    for $\omega\in\{<,\le,>,\ge\}$, 
    $\trf{\brt \,\omega\, \brt' }=
    \trf{ \bigvee_{i=0}^{n-1} ( (t_1,\cdots, t_i)\eq (t'_1,\cdots, t'_i) \wedge t_{i+1} \,\omega\,t'_{i+1})} 
    =
    \bigvee_{i=0}^{n-1} \trf{ (t_1,\cdots, t_i)\eq (t'_1,\cdots, t'_i) \wedge t_{i+1} \,\omega\,t'_{i+1}}
    =
    \bigvee_{i=0}^{n-1} (\trf{ (t_1,\cdots, t_i)\eq (t'_1,\cdots, t'_i)} \vee\trf{ t_{i+1} \,\omega\,t'_{i+1}})
    =
    \bigvee_{i=0}^{n-1} (
    \vee_{j=1}^i (\isnul(t_j) \vee \isnul(t'_j) \vee \neg t_j\eq t'_j)
    \vee
    (\isnul(t_{i+1})\vee\isnul(t'_{i+1})\vee
    \neg t_{i+1} \,\omega\,t'_{i+1}))
    $
\end{itemize}

\paragraph*{Induction base:}
\begin{itemize}
	\item[$(a)$] By definition, 
	$\semtwovl{R}_{D,\eta} = R^D = \semthreevl{R}_{D,\eta}$
	\item[$(b)$]
	\begin{enumerate}
		\item  $\semtwovl{\true}_{D,\eta} \df \true$ and $\semthreevl{\true}_{D,\eta} \df \true$ and thus the claim holds.
		\item  $\semtwovl{\false}_{D,\eta} \df \false$ and $\semthreevl{\false}_{D,\eta} \df \false$ thus the claim vacuously holds.
		\item
		$\semtwovl{ \isnul(t)}_{D,\eta} = \true$ if and only if $\semvl{t}_{\eta} = \NULL$ if and only if $\semthreevl{\isnul(t)}_{D,\eta} = \true$
\item
It holds that 
\[\semtwovl{ t\,\omega\, t'}_{D,\eta} =  \true\] 
if and only if (by definition)
	\[
		\semvl{ t}_{\eta} \ne \NULL, \, 
			\semvl{ t'}_{\eta} \ne \NULL, \text{ and }
	\semvl{ t}_{\eta} \, \omega \, \semvl{ t'}_{\eta}
	\]
	if and only if (by definition)
\[\semvl{ t\,\omega\, t'}_{D,\eta} =  \true\] 
\OMIT{
\item
It holds that 
\[\semtwovl{ (t_1,\cdots, t_n)\eq(t'_1,\cdots, t'_n)}_{D,\eta} =  \true\] 
if and only if (by definition)
	\[ \bigwedge_{i=1}^n \semvl{ t_i \eq t'_i}_{D,\eta} =  \true\]
	with $\wedge$ interpreted by 2VL's truth tables,
if and only if (by definition of $\wedge$ in Kleene's logic)
	\[ \bigwedge_{i=1}^n \semvl{ t_i \eq t'_i}_{D,\eta} =  \true\]
	with $\wedge$ interpreted by Kleene's truth tables,
		if and only if (by definition)
		\[\semvl{ (t_1,\cdots, t_n)\eq(t'_1,\cdots, t'_n)}_{D,\eta} =  \true\] 
	\item
		It holds that 
		\[\semtwovl{ (t_1,\cdots, t_n)\neq(t'_1,\cdots, t'_n)}_{D,\eta} =  \true\] if and only if (by definition of $\neg$ in 2VL)
		\[\semtwovl{ (t_1,\cdots, t_n)\eq(t'_1,\cdots, t'_n)}_{D,\eta} =  \false\] 
		if and only if (by previous item)
		\[\semvl{ (t_1,\cdots, t_n)=(t'_1,\cdots, t'_n)}_{D,\eta} =  \false\]
		if and only if  (by definition of $\neg$ in Kleene's logic)
			\[\semvl{ (t_1,\cdots, t_n)\neq(t'_1,\cdots, t'_n)}_{D,\eta} =  \true\]	
\item 
For $\omega \in \{\le, <,>,\ge \}$, it holds that 
\[\semtwovl{ (t_1,\cdots, t_n)\,\omega \,(t'_1,\cdots, t'_n)}_{D,\eta} =  \true\] if and only if (by definition)
\[	\bigvee_{1\le i\le n-1} \big(
	\bigwedge_{1\le j \le i} 
	\semtwovl{t_i  \eq t'_i}_{\eta} 
	\wedge
	\semtwovl{t_{i+1}  \, \omega \, t'_{i+1}}_{\eta}
	\big) = \true
\]
with $\vee, \wedge$ interpreted by 2VL truth tables,
if and only if (by definition)
\[	\bigvee_{1\le i\le n-1} \big(
	\bigwedge_{1\le j \le i} 
	\semvl{t_i  \eq t'_i}_{\eta} 
	\wedge
	\semvl{t_{i+1}  \, \omega \, t'_{i+1}}_{\eta}
	\big) = \true
\]
with $\vee, \wedge$ interpreted by Kleene's truth tables,
if and only if (by definition)
\[\semvl{ (t_1,\cdots, t_n)\,\omega \,(t'_1,\cdots, t'_n)}_{D,\eta} =  \true\]
\OMIT{\item
It holds that 
\[\semtwovl{ (t_1,\cdots, t_n)\,\ge \,(t'_1,\cdots, t'_n)}_{D,\eta} 
=  \true\] 
if and only if (by definition)
\[\semtwovl{ (t_1,\cdots, t_n)\,> \,(t'_1,\cdots, t'_n)}_{D,\eta} 
\vee \semtwovl{ (t_1,\cdots, t_n)\,\eq \,(t'_1,\cdots, t'_n)}_{D,\eta} 
=  \true\] 
with $\vee$ interpreted by 2VL's truth tables,
if and only if (by previous items and the definition of $\vee$ in Kleene's logic)
\[\semvl{ (t_1,\cdots, t_n)\,> \,(t'_1,\cdots, t'_n)}_{D,\eta} 
\vee \semvl{ (t_1,\cdots, t_n)\,\eq \,(t'_1,\cdots, t'_n)}_{D,\eta} 
=  \true\] 
with $\vee$ interpreted by Kleene's truth tables,
if and only if (by definition)
\[\semvl{ (t_1,\cdots, t_n)\,\ge \,(t'_1,\cdots, t'_n)}_{D,\eta} =  \true\]
}\OMIT{
\item  
It holds that 
\[\semtwovl{ P(t_1,\cdots, t_k)}_{D,\eta} = \true\]  if and only if (by definition) 
\[ (\semvl{t_1}_{\eta} , \cdots, \semvl{t_k}_{\eta} ) \in P^D \text{ and } \forall 1\le i \le k: \semvl{t_i}_{\eta} \ne \NULL\] 
if and only if (by definition) 
\[\semthreevl{ P(t_1,\cdots, t_k)}_{D,\eta} = \true\] }
\OMIT{\item
To show that 
\[\semtwovl{ (t_1,\cdots, t_n)\,\le \,(t'_1,\cdots, t'_n)}_{D,\eta} 
=  \true\]  if and only if 
\[\semvl{ (t_1,\cdots, t_n)\,\le \,(t'_1,\cdots, t'_n)}_{D,\eta} 
=  \true\]
we use similar arguments to previous item while replacing, respectively, $\ge, >$ with $\le,<$.}
\item
It holds that 		
\[\semtwovl{ \brt \in E}_{D,\eta} =\true\] 
if and only if (by definition)
\[
\bigvee_{ \brt' \in \semtwovl{E}_{D,\eta}} \semtwovl{\brt \eq \brt'} = \true
\]
with $\vee$ interpreted by 2VL's truth tables,
if and only if (by induction hypothesis $(a)$ and by definition of $\vee$ in Kleene's logic, and $\trt{}$ of $\eq$)
\[
\bigvee_{ \brt' \in \semvl{G}_{D,\eta}} \semvl{\brt \eq \brt'} = \true
\]		
with $\vee$ interepreted by Kleene's truth tables,
if and only if (by definition)
\[\semvl{ \brt \in G}_{D,\eta} =\true\] }
\item
$\semtwovl{\isempty(E)}_{D,\eta} = \true$ if and only if $\semtwovl{E}_{D,\eta} = \emptyset$. By induction hypothesis, this holds if and only if  $\semthreevl{G}_{D,\eta} = \emptyset$. This, in turn, holds if and only if $\semthreevl{ \isempty(G)}_{D,\eta} = \true$.
\item 
It holds that 
\[\semtwovl{\brt \, \omega \, \any(E)}_{D,\eta} = \true\] 
if and only if 
(by definition of the semantics of $\any$)
\[ \bigvee_{\brt'\in \semtwovl{E}_{D,\eta}} \semtwovl{\brt\, \omega \,\brt'}_{D,\eta} = \true\]
if and only if 
(by definition of $\vee$, induction hypothesis $(a)$ and $\trt{t\,\omega\,t'}$)
\[ \bigvee_{\brt'\in \semvl{G}_{D,\eta}} \semvl{\brt\,\omega\, \brt'}_{D,\eta} = \true\]
 if and only if (by definition)
\[\semvl{\brt \, \omega \, \any(G)}_{D,\eta} = \true\] 
 \item
 It holds that
 \[\semtwovl{\brt \, \omega \, \all(E)}_{D,\eta} = \true\] 
 	if and only if 	(by definition of the semantics of $\all$) 
 	\[ \bigwedge_{\brt'\in \semtwovl{E}_{D,\eta}} \semtwovl{\brt\, \omega \,\brt'}_{D,\eta} = \true\]
if and only if 
(by definition of $\wedge$, induction hypothesis $(a)$ and $\trt{t\,\omega\,t'}$)
\[ \bigwedge_{\brt'\in \semvl{G}_{D,\eta}} \semvl{\brt\,\omega\, \brt'}_{D,\eta} = \true\]
 if and only if (by definition)
\[\semvl{\brt \, \omega \, \all(G)}_{D,\eta} = \true\] 
\OMIT{
 	\[\forall t'\in E: \semtwovl{t\eq t'}_{D,\eta} = \true\] 
 	if and only if (by induction hypothesis)
 	\[\forall t'\in G: \semtwovl{t\eq t'}_{D,\eta} = \true\]
 	if and only if (by definition) 	\[\semthreevl{t \, \omega \, \all(G)}_{D,\eta} = \true\].}
	\end{enumerate}
	\item[$(c)$]
\begin{enumerate}
	\item  $\semtwovl{\true}_{D,\eta} \df\true$ and thus the claim is trivial.
	\item  $\semtwovl{\false}_{D,\eta} \df \false$ and $\semthreevl{\false}_{D,\eta} \df\false$ thus the claim holds.
	\item
	$\semtwovl{ \isnul(t)}_{D,\eta} = \false$ if and only if $\semvl{t}_{\eta} \ne  \NULL$ if and only if (by definition) $\semthreevl{\isnul(t)}_{D,\eta} = \false$
	\OMIT{\item It holds that 
	\[\semtwovl{t\eq t' }_{D,\eta} =  \false\] 
	if and only if (by definition)
	\[
	\semvl{t}_{\eta} = \NULL \vee
	\semvl{t'}_{\eta} = \NULL
	\vee
	\semvl{t}_{\eta} \ne \semvl{t'}_{\eta}
	\]
if and only if (by definition)
\[
\semvl{\isnul(t)}_{\eta} = \true \vee 
\semvl{\isnul(t')}_{\eta} = \true \vee
	\semvl{t}_{\eta} \ne \semvl{t'}_{\eta}
\]
if and only if for all $D$
\[
\semthreevl{\isnul(t) \vee \isnul(t')\vee\neg (t\eq t')}_{D,\eta} = \true
\]}
\item\label{a}
It holds that 
\[\semtwovl{ t\,\omega\, t'}_{D,\eta} =  \false\] 
if and only if (by definition)
	\[
		\semvl{ t}_{\eta} = \NULL, \text{ or } 
			\semvl{ t'}_{\eta} = \NULL, \text{ or }
	\semvl{ t}_{\eta} \, \not \omega \, \semvl{ t'}_{\eta}
	\]
	if and only if (by definition)
\[\semvl{ \isnul(t) \vee \isnul(t') \vee \neg t\,\omega\, t'}_{D,\eta} =  \true\] 
\OMIT{	\item
It holds that 
\[\semtwovl{ (t_1,\cdots, t_n)\eq (t'_1,\cdots, t'_n)}_{D,\eta} =  \false\] 
if and only if (by definition)
\[
\wedge_{i=1}^n \semtwovl{ t_i \eq t'_i}_{\eta} = \false
\]
if and only if (by definition of $\wedge$)
\[
\text{ there exists }1\le i\le n:\, \semtwovl{ t_i \eq t'_i}_{\eta} = \false
\]
if and only if (by previous item)
\[
\text{ there exists }1\le i\le n:\, \semvl{ t_i}_{\eta} = \NULL \vee  \semvl{t'_i}_{\eta} = \NULL \vee \semvl{t_i}_{\eta} \ne \semvl{t'_i}_{\eta}
\]
if and only if (by definition of $\vee$)
\[
\semvl{ \vee_{i=1}^n (\isnul(t_i) \vee \isnul(t'_i) \vee \neg t_i \eq t'_i)} = \true
\]
	\item
It holds that 
\[\semtwovl{ (t_1,\cdots, t_n)\neq (t'_1,\cdots, t'_n)}_{D,\eta} =  \false\] 
if and only if (by definition)
\[\semtwovl{ (t_1,\cdots, t_n)\eq (t'_1,\cdots, t'_n)}_{D,\eta} =  \true\]
if and only if (by $\trt{}$ translations)
\[
\semvl{ \brt\eq \brt'}_{D,\eta} =  \true
\]
\OMIT{
\item
It holds that 
\[\semtwovl{ t > t'}_{D,\eta} =  \false\] 
if and only if (by definition)
\[
\semvl{ t }_{\eta} =  \NULL \vee
\semvl{ t' }_{\eta} =  \NULL
\vee
\semvl{ t }_{\eta} \le \semvl{ t' }_{\eta}
\]
if and only if for all $D$ (by definition)
\[
\semthreevl{ \isnul(t)\vee \isnul(t') \vee
\neg (t > t') }_{D,\eta} = \true
\]
}
\item
It holds that 
\[\semtwovl{ (t_1,\cdots, t_n) \,\omega\, (t'_1,\cdots, t'_n)}_{D,\eta} =  \false\] 
if and only if (by separating into cases and imposing that if $\semtwovl{(t_1,\cdots, t_i) \eq (t'_1,\cdots, t'_i)}_{D,\eta} = \true$ then 
$\semtwovl{(t_{i+1}) \,\omega\, (t'_{i+1})}_{D,\eta} = \false$
)
\[
\semtwovl{
\wedge_{i=0}^{n-1} 
\big(
(t_1,\cdots, t_i)\eq(t'_1,\cdots, t'_i)
\vee \neg 
( \isnul(t_{i+1}) \vee \isnul(t'_{i+1})\vee \neg(t_{i+1} \,\omega\,t'_{i+1} )
\big)}_{D,\eta}=\true
\]
\OMIT{if and only if (by definition of connectors and of atomic conditions)
\[
\semvl{
\vee_{i=0}^{n-1} \big(
(t_1,\cdots, t_i) \eq(t'_1,\cdots, t'_i) \wedge ( \isnul(t_{i+1}) \vee \isnul(t'_{i+1})\vee (t_{i+1}\le t'_{i+1} )
\big)}_{D,\eta}=\true
\]}
\OMIT{
\item
We can easily extend the previous item to the general case where we have $\omega$ to show that 
\[\semtwovl{ (t_1,\cdots, t_n) \, \omega\, (t'_1,\cdots, t'_n)}_{D,\eta} =  \false\] 
if and only if 
\[
\semvl{
\wedge_{i=0}^{n-1} \big(
(t_1,\cdots, t_i)\eq(t'_1,\cdots, t'_i) \wedge ( \isnul(t_{i+1}) \vee \isnul(t'_{i+1})\vee \neg (t_{i+1}\,\omega\, t'_{i+1} )
\big)
}_{D,\eta} = \true
\]
similarly to previous item.
}\OMIT{\item
It holds that 
\[\semtwovl{ (t_1,\cdots, t_n) \ge (t'_1,\cdots, t'_n)}_{D,\eta} =  \false\] 
if and only if (by definition)
\[\semtwovl{ (t_1,\cdots, t_n) \eq (t'_1,\cdots, t'_n)}_{D,\eta} \vee \semtwovl{ (t_1,\cdots, t_n) > (t'_1,\cdots, t'_n)}_{D,\eta}
 = \false\] 
  if and only if (by definition of $\vee$)
  \[\semtwovl{ (t_1,\cdots, t_n) \eq (t'_1,\cdots, t'_n)}_{D,\eta} = \false \text{ and } \semtwovl{ (t_1,\cdots, t_n) > (t'_1,\cdots, t'_n)}_{D,\eta}
 = \false\] 
if and only if (according to previous items)
\[
\semvl{ \vee_{i=1}^n (\isnul(t_i) \vee \isnul(t'_i) \vee \neg t_i \eq t'_i)} = \true
\text{ and } \semvl{
\vee_{i=1}^n (\isnul(t_i) \vee
\isnul(t'_i) \vee t_i \le t'_i 
)
}_{D,\eta} = \true 
\]
if and only if (since $t_i \le t'_i$ implies)
\[
\semvl{ \vee_{i=1}^n (\isnul(t_i) \vee \isnul(t'_i) \vee \neg t_i \eq t'_i)} = \true
\text{ and } \semvl{
\vee_{i=1}^n (\isnul(t_i) \vee
\isnul(t'_i) \vee t_i \le t'_i 
)
}_{D,\eta} = \true 
\]
\item
It holds that 
\[\semtwovl{ (t_1,\cdots, t_n)\, \omega \, (t'_1,\cdots, t'_n)}_{D,\eta} =  \false\] 
if and only if (by definition)
	\[ \bigwedge_{i=1}^n \semtwovl{ t_i \, \omega \, t'_i}_{D,\eta} =  \false\]
	with $\wedge$ interpreted by 2VL's truth tables,
if and only if (by definition of $\wedge$ in 2VL)
\[
\text{ there is $1\le i \le n$ such that for $\ell< i$ } 
\semtwovl{ t_{\ell} \, \omega \, t'_{\ell}}_{D,\eta} =  \true 
\text{ and }
\semtwovl{ t_{i} \, \omega \, t'_{i}}_{D,\eta} =  \false
\]
if and only if
(by definition of $\semvl{\,}$) 
\[
\text{ there is $1 \le i \le n$ such that for $\ell< i$ } 
\semvl{ t_{\ell} \, \omega \, t'_{\ell}}_{D,\eta} =  \true 
\text{ and }
(\semvl{ t_{i} \, \omega \, t'_{i}}_{D,\eta} =  \false \text{ or }
\semvl{ t_{i} \, \omega \, t'_{i}}_{D,\eta} =  \unknwn)
\]
if and only if (by the definition of $\wedge,\vee$ in Kleene's logic)
\[
\text{ there is $1 \le i \le n$ } 
\semvl{ (t_1,\cdots, t_{i-1}) \eq (t'_i,\cdots,t'_{i-1}) \wedge
 (  \isnul(t_{i}) \vee \isnul(t'_i)
 \vee \neg (t_i \omega t'_i) )}_{D,\eta} = \true
\]
if and only if (by the definition of $\vee$ in Kleene's logic)
\[
\semvl{\bigvee_{i=1}^n \big((t_1,\cdots, t_{i-1}) \, \omega \, (t'_i,\cdots,t'_{i-1}) \wedge
 (  \isnul(t_{i}) \vee \isnul(t'_i)'
 \vee \neg (t_i \, \omega \, t'_i)  )\big) }_{D,\eta} =
 \true
\]}
\item%
It holds that 		
\[\semtwovl{ (t_1,\cdots, t_k) \in E}_{D,\eta} =\false\] 
if and only if (by definition)
\[
\bigvee_{ (t'_1,\cdots, t'_k) \in \semtwovl{E}_{D,\eta}} \semtwovl{(t_1,\cdots, t_k) \eq (t'_1,\cdots, t'_k)} = \false
\]
with $\vee$ interpreted by 2VL's truth tables,
if and only if (by induction hypothesis and by definition of $\vee$ in Kleene's logic)
\[
\bigvee_{ (t'_1,\cdots, t'_k) \in \semvl{G}_{D,\eta}} \semtwovl{(t_1,\cdots, t_k) \eq (t'_1,\cdots, t'_k)} = \false
\]		
if and only if (by $\trf$ translation of $\eq$ and definition of logical connectors)
\[
\bigwedge_{ (t'_1,\cdots, t'_k) \in \semvl{G}_{D,\eta}}
\semvl{\vee_{i=1}^k (\isnul(t_i) \vee \isnul(t'_i) \vee \neg t_i \eq t'_i)}_{D,\eta}= \true
\]
with $\vee$ interpreted by Kleene's truth tables,
if and only if (by definition)
\[\semvl{\isempty(\sigma_{\neg(\vee_{i=1}^k (\isnul(t_i) \vee \isnul(t'_i) \vee \neg t_i \eq t'_i))}(G)) }_{D,\eta} =\true\] 
}\item
$\semtwovl{\isempty(E)}_{D,\eta} = \false$ if and only if $\semtwovl{E}_{D,\eta} \ne \emptyset$. By induction hypothesis $(a)$, this holds if and only if  $\semthreevl{G}_{D,\eta} \ne \emptyset$. This, in turn, holds if and only if $\semthreevl{ \isempty(G)}_{D,\eta} = \false$.
\item 
It holds that 
\[\semtwovl{\brt \,\omega\, \any(E)}_{D,\eta} = \false\] 
if and only if 
(by definition of the semantics of $\any$)
\[ \bigvee_{\brt'\in \semtwovl{E}_{D,\eta}} \semtwovl{\brt\, \omega \,\brt'}_{D,\eta} = \false\]
if and only if 
(by induction hypothesis (a))
\[ \bigvee_{\brt'\in \semvl{G}_{D,\eta}} \semtwovl{\brt\, \omega \,\brt'}_{D,\eta} = \false\]
if and only if (by definition of $vee$)
\[ \text{for all }{\brt'\in \semvl{G}_{D,\eta}}: \semtwovl{\brt\, \omega \,\brt'}_{D,\eta} = \false\]
if and only if (by induction hypothesis (b))
\[ \text{for all }{\brt'\in \semvl{G}_{D,\eta}}:  \semvl{\trf{\brt\, \omega \,\brt'}}_{D,\eta} = \true\]
if and only if (by the definition of the translation and of $\isempty$)
\[
\semvl{\isempty(\sigma_{\neg \theta}(G))}=\true
\]
with $\theta \df \trf{\brt\, \omega \,\ell(G)}$
\item 
It holds that 
\[\semtwovl{\brt \,\omega\, \all(E)}_{D,\eta} = \false\] 
if and only if 
(by definition of the semantics of $\all$)
\[ \bigwedge_{\brt'\in \semtwovl{E}_{D,\eta}} \semtwovl{\brt\, \omega \,\brt'}_{D,\eta} = \false\]
if and only if 
(by induction hypothesis (a))
\[ \bigwedge_{\brt'\in \semvl{G}_{D,\eta}} \semtwovl{\brt\, \omega \,\brt'}_{D,\eta} = \false\]
if and only if (by definition of $\wedge$)
\[ \text{there is }{\brt'\in \semvl{G}_{D,\eta}}: \semtwovl{\brt\, \omega \,\brt'}_{D,\eta} = \false\]
if and only if (by induction hypothesis (b))
\[ \text{there is }{\brt'\in \semvl{G}_{D,\eta}}:  \semvl{\trf{\brt\, \omega \,\brt'}}_{D,\eta} = \true\]
if and only if (by the definition of the translation and of $\isempty$)
\[
\semvl{\neg\isempty(\sigma_{ \theta}(G))}=\true
\]
with $\theta \df \trf{\brt\, \omega \,\ell(G)}$
\OMIT{ \item
 It holds that
 \[\semtwovl{\brt \, \omega \, \all(E)}_{D,\eta} = \false\] 
 	if and only if 	(by definition of the semantics of $\all$) 
 	\[ \bigwedge_{\brt'\in \semtwovl{E}_{D,\eta}} \semtwovl{\brt\, \omega \,\brt'}_{D,\eta} = \false\]
 	if and only if 	(by induction hypothesis (a)) 
 		\[ \bigwedge_{\brt'\in \semvl{G}_{D,\eta}} \semtwovl{\brt\, \omega \,\brt'}_{D,\eta} = \false\]
if and only if 
(by definition of translation of composite conditions and item (c)~(\ref{a}))
\[ \bigvee_{\brt'\in \semvl{G}_{D,\eta}} \semvl{\vee_{i=1}^n(\isnul(t_i)\vee \isnul(t'_i)\vee \neg t_i\omega t'_i)}_{D,\eta} = \true\]
 if and only if (by definition)
\[\semvl{\neg \isempty(\sigma_{\vee_{i=1}^n(\isnul(t_i)\vee \isnul(t'_i)\vee \neg t_i\omega t'_i) }(G))}_{D,\eta} = \true\]} 	
\end{enumerate}
\end{itemize}

\paragraph*{Induction Step:}
\begin{itemize}
	\item[$(a)$] 
	The claim follows from the induction hypothesis $(a)$ directly.
	The equivalence of $\sigma_{\theta}(E)$ and $\sigma_{\tcond}(G)$ is derived from both induction hypotheses $(a)$ and $(b)$.
	\item[$(b)$]
	\begin{enumerate}
		\item It holds that \[\semtwovl{\theta_1 \vee \theta_2}_{D,\eta} = \true \] if and only if (by the definition of $\vee$ in 2VL) 
		\[
		\semtwovl{\theta_1 }_{D,\eta} = \true
		\text{ or }
		\semtwovl{\theta_2}_{D,\eta} = \true \]
		if and only if (by applying induction hypothesis $(b)$)
		\[
		\semthreevl{\trt{\theta_1} }_{D,\eta} = \true
		\text{ or }
		\semthreevl{\trt{\theta_2}}_{D,\eta} = \true \]
		if and only if (by the definition of $\vee$ in Kleene's logic) \[
		\semthreevl{\trt{\theta_1} \vee \trt{\theta_2}}_{D,\eta} = \true
		\]
		if and only if (by the definition of $\trt{\,}$)
		 \[
		\semthreevl{\trt{\theta_1 \vee \theta_2}}_{D,\eta} = \true
		\]
		\item 
		It holds that \[\semtwovl{\theta_1 \wedge \theta_2}_{D,\eta} = \true \] if and only if (by the definition of $\wedge$ in 2VL) 
\[
\semtwovl{\theta_1 }_{D,\eta} = \true
\text{ and }
\semtwovl{\theta_2}_{D,\eta} = \true \]
if and only if (by applying induction hypothesis $(b)$)
\[
\semthreevl{\trt{\theta_1} }_{D,\eta} = \true
\text{ and }
\semthreevl{\trt{\theta_2}}_{D,\eta} = \true \]
if and only if (by the definition of $\wedge$ in Kleene's logic) \[
\semthreevl{\trt{\theta_1} \wedge \trt{\theta_2}}_{D,\eta} = \true
\]
if and only if (by the definition of $\trt{\,}$)
 \[
\semthreevl{\trt{\theta_1 \wedge \theta_2}}_{D,\eta} = \true
\]
		\item
		It holds that
		\[\semtwovl{\neg \theta}_{D,\eta} = \true \] 
		if and only if  
		\[
		\semtwovl{\theta }_{D,\eta} = \false \]
		if and only if (induction hypothesis)
		\[
		\semthreevl{\fcond }_{D,\eta} = \true\]
	\end{enumerate}
	\item[$(c)$]
	\begin{enumerate}
		\item[$1.$] $\semtwovl{\theta_1 \vee \theta_2}_{D,\eta} = \false $ if and only if  
		\[
		\semtwovl{\theta_1 }_{D,\eta} = \false
		\wedge
		\semtwovl{\theta_2}_{D,\eta} = \false \]
		if and only if (induction hypothesis)
		\[
		\semthreevl{
			\fcond_1 }_{D,\eta} = \true
		\wedge
		\semthreevl{\fcond_2}_{D,\eta} = \true \]
		if and only if \[
		\semthreevl{\fcond_1 \wedge \fcond_2}_{D,\eta} = \true
		\]
		\item[$2.$]
		$\semtwovl{\theta_1 \wedge \theta_2}_{D,\eta} = \false $ if and only if  
		\[
		\semtwovl{\theta_1 }_{D,\eta} = \false
		\vee
		\semtwovl{\theta_2}_{D,\eta} = \false \]
		if and only if  (induction hypothesis)
		\[
		\semthreevl{\trf{\theta_1} }_{D,\eta} = \true
		\vee
		\semthreevl{\trf{\theta_2}}_{D,\eta} = \true \]
		if and only if \[
		\semthreevl{\trf{\theta_1} \vee \trf{\theta_2}}_{D,\eta} = \true
		\]
		\item[$3.$]
		$\semtwovl{\neg \theta}_{D,\eta} = \false $ if and only if  
		\[
		\semtwovl{\theta }_{D,\eta} = \true \]
		if and only if (induction hypothesis)
		\[
		\semthreevl{\tcond }_{D,\eta} = \true\]
	\end{enumerate}
\end{itemize}

\subsubsection{$\semsql{\,}$ to $\semtwovl{\,}$}
Let $E$ be a  \sqlra expression and let $F \df \trfromsql{E}$ be the expression obtained from $E$ by inductively replacing each condition $\theta$ with $\trt{\theta}$ as described in Figure~\ref{fig:threetotwo}.
To show that $\semsql{E}_D = \semtwovl{F}_D$ for every database $D$, we prove inductively that
for every database $D$ and every environment $\eta$, the following claims hold:
\newsavebox{\threetotwo}
\sbox{\threetotwo}{%
	\parbox{\textwidth-2\fboxsep-2\fboxrule}{%
		\textbf{Basic conditions:}\\
		\begin{tabular}{ll}
	  	$\trt{\theta}\df$&$ \theta$ for $\theta \df \true\, |\, \false\,|\, \isnul(t) \,|\, t \, \omega \, t'$\\
	     $\trt{\isempty(E)}\df$ &$ \isempty(\trfromsql{E})$\\
	     $\trt{\brt \, \omega \, \any(E)}\df$ &$\brt \, \omega \, \any(\trfromsql{E}) $\\
	     $\trt{\brt \, \omega \, \all(E)}\df$ &$\brt \, \omega \, \all(\trfromsql{E}) $\\
		\end{tabular}
		\begin{tabular}{rl}
	  	$\trf{\theta}\df$&$\neg \theta$ for $\theta \df \true\, |\, \false\,|\, \isnul(t)$\\
	  	$\trf{t \,\omega\, t'} \df$&$\neg\isnul(t) \wedge \neg\isnul(t')\wedge \neg t\,\omega\,t'$\\
	  	\OMIT{$\trf{\brt \eq \brt'} \df$&$ \vee_{i=1}^n 
	(\neg \isnul(t_i) \wedge \neg \isnul(t'_i) \wedge \neg t_i \eq t'_i)$
	  	\\
	 $\trf{\brt \neq \brt'} \df$&$\brt \eq \brt'$ 	
	  	\\
	  	$\trf{\brt \, \omega \, \brt'}\df$&$\vee_{i=0}^{n-1} \big(
(t_1,\cdots, t_i) \eq(t'_1,\cdots, t'_i) \wedge$\\&$ (\neg \isnul(t_{i+1}) \wedge \neg \isnul(t'_{i+1})\wedge \neg(t_{i+1} \,\omega\,t'_{i+1} )
\big)$
\\
where & $\omega \in \{\le,<,>,\ge \}$
\\}
\OMIT{	    $\trf{\brt \in E}\df$ &$\isempty(\sigma_{\neg \theta}(\trfromsql{E}) ) $
	    \\
	    where &
	    $\theta \df \bigvee_{i=1}^n
	    ( \neg \isnul(t_i) \wedge \neg \isnul(N_i) \wedge \neg t_i \eq N_i
	    )$
	    \\
	    and & $\brt \df (t_1,\cdots, t_n), \ell(\trfromsql{E}) \df (N_1,\cdots, N_n) $
	    \\}
	     $\trf{\isempty(E)}\df$ &$ \neg \isempty(\trfromsql{E})$\\
	     $\trf{\brt \, \omega \, \any(E)}\df$ &$\isempty(\sigma_{\neg \theta}(\trfromsql{E}) )$\\
	     $\trf{\brt \, \omega \, \all(E)}\df$ &$\neg \isempty(\sigma_{ \theta}(\trfromsql{E}) ) $\\
	    where&
	 $\theta \df  \trf{ \brt\,\omega\, \ell(E)} 
$\\
\OMIT{	     
	     where &$\theta \df
	     \bigwedge_{i=1}^n \neg \isnul(t_i) \wedge \bigwedge_{i=1}^n \neg \isnul(N_i) \wedge 
	  	\bigvee_{i=1}^{n} (
	  	(t_1,\cdots,t_{i-1})\eq(N_1,\cdots, N_{i-1}) \wedge \neg (t_i \, \omega\, N_i)
	  	)
	     $\\}
	     and & $\brt \df (t_1,\cdots, t_n),\, \ell({E}) \df (N_1,\cdots,N_n)$
		\end{tabular}
		\\\textbf{Composite conditions} exactly as in
	Fig.~\ref{fig:trtf}.}%
	}%

	\newcommand{\threetotwofig}{%
		\begin{figure*}[h]
			\centering
			\fbox{\usebox{\threetotwo}}
			\caption{SQL semantics to $\utof$ semantics: $\trt{\cdot}$ and $\trf{\cdot}$ translations }
			\label{fig:threetotwo}
		\end{figure*}
	}%
	\threetotwofig

\begin{itemize}
	\item[$(a)$] $\semthreevl{E}_{D,\eta} = \semtwovl{\trfromsql{E}}_{D,\eta}$
	\item[$(b)$] 
	$\semthreevl{\theta}_{D,\eta} = \true$ if and only if  $\semtwovl{\tcond}_{D,\eta} = \true$.
	\item[$(c)$]
	$\semthreevl{\theta}_{D,\eta} = \false$ if and only if  $\semtwovl{\fcond}_{D,\eta} = \true$.
\end{itemize}

\paragraph*{Induction base:}
\begin{itemize}
	\item[$(a)$] By definition, $\semthreevl{R}_{D,\eta} = R^D = \semtwovl{R}_{D,\eta}$
	\item[$(b)$] 
	\begin{enumerate}
		\item $\semthreevl{\true}_{D,\eta} = \true$ and $\semtwovl{\true}_{D,\eta} = \true$ and thus the claim holds.
		\item $\semthreevl{\false}_{D,\eta} = \false$ and $\semtwovl{\false}_{D,\eta} = \false$ and thus the claim holds trivially.
		\item
		$\semthreevl{ \isnul(t)}_{D,\eta} = \true$ if and only if $\semvl{t}_{\eta} = \NULL$. This holds if and only if $\semtwovl{ \isnul(t)}_{D,\eta} = \true$.
		\item
		It holds that 
		\[\semvl{ t\,\omega\, t'}_{D,\eta} = \true\]
		if and only if (by definition)
		\[
		\semvl{ t}_{\eta} \ne \NULL, \,
		\semvl{ t'}_{\eta} \ne \NULL,\,
		\semvl{ t}_{\eta} \,\omega\,  \semvl{ t'}_{\eta}
		\]
		if and only if (by definition)
		\[\semtwovl{ t\,\omega\, t'}_{D,\eta} = \true\]
		\OMIT{
		\item
		It holds that 
		\[\semvl{ (t_1,\ldots, t_n) \eq (t'_1,\ldots, t'_n)}_{D,\eta} = \true\] 
		if and only if (by definition of $\eq$ under $\semsql{\,}$) 
	\[\bigwedge_{i=1}^n \semthreevl{ (t_i \eq t'_i)}_{D,\eta} = \true\] 
with $\wedge$ interpreted by Kleene's truth tables, 
if and only if (by definition of $\wedge$ in Kleene's logic)
	\[\bigwedge_{i=1}^n \semthreevl{ (t_i \eq t'_i)}_{D,\eta} = \true\] 
	with $\wedge$ interpreted by 2VL truth tables, if and only if 
(by definitions of $\semvl{t\eq t'}_{\eta}$ and $\semtwovl{t\eq t'}_{\eta}$ and $\semtwovl{\,}$ for composite conditions)
	\[\semtwovl{ \bigwedge_{i=1}^n (t_i \eq t'_i)}_{D,\eta} = \true\] 
	\item
		It holds that 
		\[\semvl{ (t_1,\cdots, t_n)\neq(t'_1,\cdots, t'_n)}_{D,\eta} =  \true\] if and only if (by definition of $\neg$ in Kleene's logic)
		\[\semvl{ (t_1,\cdots, t_n)\eq(t'_1,\cdots, t'_n)}_{D,\eta} =  \false\] 
		if and only if  (by definition of $\neg$ in 2VL)
			\[\semvl{ (t_1,\cdots, t_n)\neq(t'_1,\cdots, t'_n)}_{D,\eta} =  \true\]	
\item 
For $\omega \in \{\le,<,>,\ge \}$, it holds that 
\[\semvl{ (t_1,\cdots, t_n)\,\omega \,(t'_1,\cdots, t'_n)}_{D,\eta} =  \true\] if and only if (by definition)
\[	\bigvee_{1\le i\le n-1} \big(
	\bigwedge_{1\le j \le i} 
	\semvl{t_i  \eq t'_i}_{\eta} 
	\wedge
	\semvl{t_{i+1}  \, \omega \, t'_{i+1}}_{\eta}
	\big) = \true
\]
with $\vee, \wedge$ interpreted by Kleene's truth tables,
if and only if (by definition)
\[	\bigvee_{1\le i\le n-1} \big(
	\bigwedge_{1\le j \le i} 
	\semvl{t_i  \eq t'_i}_{\eta} 
	\wedge
	\semvl{t_{i+1}  \, \omega \, t'_{i+1}}_{\eta}
	\big) = \true
\]
with $\vee, \wedge$ interpreted by 2VL truth tables,
if and only if (by definition)
\[\semtwovl{ (t_1,\cdots, t_n)\,\omega \,(t'_1,\cdots, t'_n)}_{D,\eta} =  \true\]
\OMIT{
\item
It holds that 
\[\semvl{ (t_1,\cdots, t_n)\,\ge \,(t'_1,\cdots, t'_n)}_{D,\eta} 
=  \true\] 
if and only if (by definition)
\[\semvl{ (t_1,\cdots, t_n)\,> \,(t'_1,\cdots, t'_n)}_{D,\eta} 
\vee \semvl{ (t_1,\cdots, t_n)\,\eq \,(t'_1,\cdots, t'_n)}_{D,\eta} 
=  \true\] 
with $\vee$ interpreted by Kleene's truth tables,
if and only if (by previous items and the definition of $\vee$ in Kleene's logic)
\[\semtwovl{ (t_1,\cdots, t_n)\,> \,(t'_1,\cdots, t'_n)}_{D,\eta} 
\vee \semtwovl{ (t_1,\cdots, t_n)\,\eq \,(t'_1,\cdots, t'_n)}_{D,\eta} 
=  \true\] 
with $\vee$ interpreted by 2VL's truth tables,
if and only if (by definition)
\[\semtwovl{ (t_1,\cdots, t_n)\,\ge \,(t'_1,\cdots, t'_n)}_{D,\eta} =  \true\]
\item
To show that 
\[\semvl{ (t_1,\cdots, t_n)\,\le \,(t'_1,\cdots, t'_n)}_{D,\eta} 
=  \true\]  if and only if 
\[\semtwovl{ (t_1,\cdots, t_n)\,\le \,(t'_1,\cdots, t'_n)}_{D,\eta} 
=  \true\]
we use similar arguments to previous item while replacing, respectively, $\ge, >$ with $\le,<$.
}
\item
It holds that  
\[\semthreevl{\brt \in E}_{D,\eta} = \true\] 
if and only if (by definition)
\[ \bigvee_{\brt'\in \semvl{E}_{D,\eta}} \semvl{\brt \eq \brt'}_{\eta} = \true
\] 
with $\vee$ interpreted by Kleene's truth tables 
if and only if 
(by induction hypothesis)
\[ \bigvee_{\brt'\in \semtwovl{\trfromsql{E}}_{D,\eta}} \semvl{\brt \eq \brt'}_{\eta} = \true
\] 
if and only if (by definition of $\vee$ in Kleene's logic)
\[ \bigvee_{\brt'\in \semtwovl{\trfromsql{E}}_{D,\eta}} \semvl{\brt \eq \brt'}_{\eta} = \true
\] 
with $\vee$ interpreted by 2VL's truth tables, if and only if (by definition)
\[\semvl{\brt \in \trfromsql{E}}_{D,\eta} = \true\] }
	\item
	It holds that 
	$\semtwovl{\isempty(E)}_{D,\eta} = \true$ if and only if $\semtwovl{E}_{D,\eta} = \emptyset$. This holds if and only if (induction hypothesis) $\semvl{\trfromsql{E}}_{D,\eta} = \emptyset$, which holds if and only if   $\semvl{\isempty(\trfromsql{E})}_{D,\eta} = \true$.
	\item
	It holds that  
\[\semthreevl{\brt \, \omega\,\any (E)}_{D,\eta} = \true\] 
if and only if (by definition)
\[ \bigvee_{\brt'\in \semvl{E}_{D,\eta}} \semvl{\brt \,\omega\, \brt'}_{\eta} = \true
\] 
if and only if 
(by induction hypothesis (a))
\[ \bigvee_{\brt'\in \semtwovl{\trfromsql{E}}_{D,\eta}} \semvl{\brt \,\omega\, \brt'}_{\eta} = \true
\] 
if and only if 
(by induction hypothesis (b))
\[ \bigvee_{\brt'\in \semtwovl{\trfromsql{E}}_{D,\eta}} \semtwovl{\trt{\brt \,\omega\, \brt'}}_{\eta} = \true
\] 
if and only if (by definition of $\trt{\brt \,\omega\, \brt'}$)
\[ \bigvee_{\brt'\in \semtwovl{\trfromsql{E}}_{D,\eta}} \semtwovl{\brt \,\omega\, \brt'}_{\eta} = \true
\] 
if and only if (by definition of $\any$)
\[\semtwovl{\brt \,\omega\, \any( \trfromsql{E})}_{D,\eta} = \true\]
	\item
	It holds that  
\[\semthreevl{\brt \, \omega\,\all (E)}_{D,\eta} = \true\] 
if and only if (by definition)
\[ \bigwedge_{\brt'\in \semvl{E}_{D,\eta}} \semvl{\brt \,\omega\, \brt'}_{\eta} = \true
\] 
if and only if 
(by induction hypothesis (a))
\[ \bigwedge_{\brt'\in \semtwovl{\trfromsql{E}}_{D,\eta}} \semvl{\brt \,\omega\, \brt'}_{\eta} = \true
\] 
if and only if 
(by induction hypothesis (b))
\[ \bigwedge_{\brt'\in \semtwovl{\trfromsql{E}}_{D,\eta}} \semvl{\trt{\brt \,\omega\, \brt'}}_{\eta} = \true
\] 
if and only if (by definition of $\trt{\brt \,\omega\, \brt'}$)
\[ \bigwedge_{\brt'\in \semtwovl{\trfromsql{E}}_{D,\eta}} \semtwovl{{\brt \,\omega\, \brt'}}_{\eta} = \true
\] 
if and only if (by definition of $\all$)
\[\semtwovl{\brt \,\omega\, \all( \trfromsql{E})}_{D,\eta} = \true\]
	\end{enumerate}
	\item[$(c)$]
	\begin{enumerate}
		\item $\semthreevl{\true}_{D,\eta} = \true$ and $\semtwovl{\false}_{D,\eta} = \false$ and thus the claim holds.
		\item $\semthreevl{\false}_{D,\eta} = \false$ and $\semtwovl{\true}_{D,\eta} = \true$ and thus the claim holds.
		\item
		It holds that 
		$\semthreevl{ \isnul(t)}_{D,\eta} = \false$ if and only if $\semvl{t}_{\eta}\ne \NULL$. This holds by definition, if and only if $\semtwovl{\neg \isnul(t)}_{D,\eta} = \true$.
		\item
		It holds that 
		\[\semvl{ t\,\omega\, t'}_{D,\eta} = \false\]
		if and only if (by definition)
		\[
		\semvl{ t}_{\eta} \ne \NULL, \,
		\semvl{ t'}_{\eta} \ne \NULL,\text{ and }
		\semvl{ t}_{\eta}\, \not \omega\,  \semvl{ t'}_{\eta}
		\]
		if and only if (by definition)
		\[\semtwovl{ \neg\isnul(t) \wedge \neg\isnul(t') \wedge \neg t\,\omega\, t'}_{D,\eta} = \true\]
\OMIT{	\item
		It holds that 
		\[\semvl{ (t_1,\ldots, t_n) \eq (t'_1,\ldots, t'_n)}_{D,\eta} = \false\] 
		if and only if 
		(by definition of $\eq$ under $\semsql{\,}$) 
	\[\bigvee_{i=1}^n \semthreevl{ (t_i \eq t'_i)}_{D,\eta} = \false\] 
with $\wedge$ interpreted by Kleene's truth tables, 
if and only if (by definition of $\wedge$ in Kleene's logic)
	\[\text{there is } 1\le i \le n :\, \semthreevl{ (t_i \eq t'_i)}_{D,\eta} = \false\] 
if and only if (by definition of $\semvl{}$)
	\[\text{there is } 1\le i \le n :\, \semvl{ t_i}_{\eta} \ne\NULL \wedge \semvl{ t'_i}_{\eta} \ne \NULL \wedge \semvl{ t_i}\ne \semvl{  t'_i}_{\eta}\]
	if and only if (by defintion of $\wedge,\vee$ and $\semtwovl{}$)
	\[ \semtwovl{ \vee_{i=1}^n 
	(\neg \isnul(t_i) \wedge \neg \isnul(t'_i) \wedge \neg t_i \eq t'_i)}_{\eta} = \true\]
	
	\item
It holds that 
\[\semvl{ (t_1,\ldots, t_n) \neq (t'_1,\ldots, t'_n)}_{D,\eta} = \false\] 
if and only if (by definition of smeantics)
\[\semvl{ (t_1,\ldots, t_n) \eq (t'_1,\ldots, t'_n)}_{D,\eta} = \true\] 
if and only if (by $\trt{}$ translations)
\[\semtwovl{ (t_1,\ldots, t_n) \eq (t'_1,\ldots, t'_n)}_{D,\eta} = \true\]
\item%
It holds that 
\[\semvl{ (t_1,\ldots, t_n) \,\omega\, (t'_1,\ldots, t'_n)}_{D,\eta} = \false\]
if and only if (by separating into cases according to the maximal $i$ for which $\semvl{(t_1,\cdots, t_i) \eq (t'_1,\cdots, t'_i)}_{D,\eta} = \true$)
\[
\semvl{
\vee_{i=0}^{n-1} \big(
\wedge_{j=1}^{i} (\neg \isnul(t_i) \wedge \neg \isnul(t'_i) \wedge (t_i \eq t'_i)) \wedge (\neg \isnul(t_{i+1}) \wedge \neg \isnul(t'_{i+1})\wedge \neg(t_{i+1} \,\omega\,t'_{i+1} )
\big)}_{D,\eta}=\true
\]
if and only if (by definition of connectors and of atomic conditions)
\[
\semvl{
\vee_{i=0}^{n-1} \big(
(t_1,\cdots, t_i) \eq(t'_1,\cdots, t'_i) \wedge (\neg \isnul(t_{i+1}) \wedge \neg \isnul(t'_{i+1})\wedge \neg(t_{i+1} \,\omega\,t'_{i+1} )
\big)}_{D,\eta}=\true
\]
\item
It holds that  
\[\semvl{\brt \in E}_{D,\eta} = \false\] 
if and only if (by definition)
\[ \bigvee_{\brt'\in \semvl{E}_{D,\eta}} \semvl{\brt \eq \brt'}_{\eta} = \false
\] 
with $\vee$ interpreted by Kleene's truth tables 
if and only if 
(by induction hypothesis)
\[ \text{ for all }{\brt'\in \semtwovl{\trfromsql{E}}_{D,\eta}}:\, \semvl{\brt \eq \brt'}_{\eta} = \false
\] 
if and only if (by definition)
\[ \text{ for all }{\brt'\in \semtwovl{\trfromsql{E}}_{D,\eta}}:\, \semtwovl{ \vee_{i=1}^n 
	(\neg \isnul(t_i) \wedge \neg \isnul(t'_i) \wedge \neg t_i \eq t'_i)}_{\eta} = \true
\] 
if and only if (by definition)
\[\semtwovl{\isempty(\sigma_{\neg(\vee_{i=1}^n 
	(\neg \isnul(t_i) \wedge \neg \isnul(t'_i) \wedge \neg t_i \eq t'_i))}(\trfromsql{E})) }_{D,\eta} = \true\]
	}
	\item
$\semvl{\isempty(E)}_{D,\eta} = \false$ if and only if $\semvl{E}_{D,\eta} \ne \emptyset$. By induction hypothesis $(a)$, this holds if and only if  $\semtwovl{\trfromsql{E}}_{D,\eta} \ne \emptyset$. This, in turn, holds if and only if $\semtwovl{\neg \isempty(\trfromsql{E})}_{D,\eta} = \true$.
\item
It holds that  
\[\semvl{\brt \, \omega \, \any( E)}_{D,\eta} = \false\] 
if and only if (by definition of $\any$)
\[ \bigvee_{\brt'\in \semvl{E}_{D,\eta}} \semvl{\brt \, \omega \, \brt'}_{\eta} = \false
\] 
if and only if 
(by induction hypothesis (a))
\[ \bigvee_{\brt'\in \semtwovl{\trfromsql{E}}_{D,\eta}} \semvl{\brt \, \omega \, \brt'}_{\eta} = \false
\]
if and only if (by definition of $\vee$)
\[
\text{for all }{\brt'\in \semtwovl{\trfromsql{E}}_{D,\eta}}: \semvl{\brt \, \omega \, \brt'}_{\eta} = \false
\]
if and only if
(by induction hypothesis (b))
\[
\text{for all }{\brt'\in \semtwovl{\trfromsql{E}}_{D,\eta}}: \semtwovl{\trf{\brt \, \omega \, \brt'}}_{D,\eta} = \true
\]
if and only if (by definition of $\isempty$)
\[
\semtwovl{
\isempty(\sigma_{\neg \theta}(\trfromsql{E})}_{D,\eta}
\]
with $\theta\df \trf{\brt \, \omega \, \ell(E)}$.

\item
It holds that  
\[\semvl{\brt \, \omega \, \all( E)}_{D,\eta} = \false\] 
if and only if (by definition of $\all$)
\[ \bigwedge_{\brt'\in \semvl{E}_{D,\eta}} \semvl{\brt \, \omega \, \brt'}_{\eta} = \false
\] 
if and only if 
(by induction hypothesis (a))
\[ \bigwedge_{\brt'\in \semtwovl{\trfromsql{E}}_{D,\eta}} \semvl{\brt \, \omega \, \brt'}_{\eta} = \false
\]
if and only if (by definition of $\wedge$)
\[
\text{there is }{\brt'\in \semtwovl{\trfromsql{E}}_{D,\eta}}: \semvl{\brt \, \omega \, \brt'}_{\eta} = \false
\]
if and only if
(by induction hypothesis (b))
\[
\text{there is }{\brt'\in \semtwovl{\trfromsql{E}}_{D,\eta}}: \semtwovl{\trf{\brt \, \omega \, \brt'}}_{D,\eta} = \true
\]
if and only if (by definition of $\isempty$)
\[\semtwovl{
\neg \isempty(\sigma_{ \theta}(\trfromsql{E})}_{D,\eta} = \true
\]
with $\theta\df \trf{\brt \, \omega \, \ell(E)}$.
	\end{enumerate}
\end{itemize}

\paragraph{Induction Step:}
\begin{enumerate}
	\item[$(a)$] The claim is straightforward from the definitions. The only non-trivial case is $\sigma_{\theta}(E)$ for which we obtain $\sigma_{\theta^{\true}}(F)$ and the claim holds from both induction hypotheses.
	\item[$(b)$]
	\begin{enumerate}
		\item[$1.$] It holds that
	\[\semtwovl{\trt{\theta_1} \vee \trt{\theta_2}}_{D,\eta}=\true\]
		if and only if (truth table of $\vee$ in $\twovl$)
		\[\semtwovl{\trt{\theta_1}}_{D,\eta}=\true \vee \semtwovl{\trt{\theta_2}}_{D,\eta}=\true\] if and only if 
		(induction hypothesis)
		\[\semthreevl{\theta_1}_{D,\eta}=\true \vee \semthreevl{\theta_2}_{D,\eta}=\true\]
		if and only if (truth table of $\vee$ in $\threevl$)
		\[\semthreevl{\theta_1 \vee \theta_2}_{D,\eta}=\true\]
		\item[$2.$] 
		It holds that
		\[\semthreevl{\theta_1 \wedge \theta_2}_{D,\eta} = \true\]
		if and only if 
		\[\semthreevl{\theta_1}_{D,\eta} = \true \wedge \semtwovl{\theta_2}_{D,\eta} = \true
		\] if and only if (induction hypothesis $(b)$)
		\[\semtwovl{\trt{\theta_1}}_{D,\eta} = \true \wedge \semtwovl{\trt{\theta_2}}_{D,\eta} = \true
		\] if and only if 
		\[\semtwovl{\trt{\theta_1} \wedge \trt{\theta_2}}_{D,\eta} = \true
		\] 
		\item[$3.$]
		It holds that 
		\[\semthreevl{\neg \theta}_{D,\eta}  = {\true}\] if and only if
		\[\neg \semthreevl{ \theta}_{D,\eta}  = {\true}\] if and only if
		\[ \semthreevl{ \theta}_{D,\eta}  = {\false}\]
		if and only if (induction hypothesis $(b)$)
		\[\semtwovl{ \trf{\theta}}_{D,\eta}  = {\true}\]
	\end{enumerate}
	\item[$(c)$]
	\begin{enumerate}
		\item[$1.$] 
		It holds that
		\[\semthreevl{\trf{\theta_1} \wedge \trf{\theta_2}}=\true\] 
		if and only if
		(truth table of $\wedge$ in $\threevl$)
		\[\semthreevl{\trf{\theta_1}} = \true \wedge \semthreevl{\trf{\theta_2}} = \true\]
		if and only if (induction hypothesis)
		\[\semtwovl{\theta_1} = \false \wedge \semtwovl{\theta_2} = \false\]
		if and only if 	(truth table of $\vee$ in $\twovl$)
		\[\semtwovl{\theta_1  \vee \theta_2} = \false\]
		\item[$2.$] 
		It holds that 
		\[\semthreevl{\theta_1 \wedge \theta_2}_{D,\eta} = \false\]
		if and only if 
		\[\semthreevl{\theta_1}_{D,\eta} = \false \vee \semvl{\theta_2}_{D,\eta} = \false
		\] if and only if (induction hypothesis $(b)$)
		\[\semtwovl{\trf{\theta_1}}_{D,\eta} = \true \vee \semtwovl{\trf{\theta_2}}_{D,\eta} = \true
		\] if and only if 
		\[\semtwovl{\trf{\theta_1} \vee \trf{\theta_2}}_{D,\eta} = \true
		\] 
		\item[$3.$] 
		It holds that \[\semthreevl{\neg \theta}_{D,\eta}  = {\false}\] 
		if and only if
		\[\neg \semthreevl{ \theta}_{D,\eta}  = {\false}\] if and only if
		\[ \semthreevl{ \theta}_{D,\eta}  = {\true}\]
		if and only if (induction hypothesis $(b)$)
		\[\semtwovl{ \trt{\theta}}_{D,\eta}  = {\true}\]
	\end{enumerate}
\end{enumerate}

\OMIT
{\begin{example}
	\label{sqltouf-ex}
	We now look at translations of queries $Q_1$--$Q_5$ from
	Example~\ref{queries-ex}. This time, we assume they are written under
	the usual SQL semantics, and show how they would look under the
	$\utof$ semantics. As before, in queries $Q_2$, $Q_3$, and $Q_4$
	nothing changes: they are the same. 
	The translation of query $Q_1$ is as follows:
	$$
	\sigma_{\neg \isnul(R.A) \wedge \neg R.A\in S }(R)\,,
	$$
	or in two-valued SQL,
	\begin{sql}
		SELECT R.A FROM R WHERE R.A IS NOT NULL
		AND R.A NOT IN (SELECT S.A FROM S)
	\end{sql}
	In the translation $Q'_5$ of query $Q_5$,
	the  condition $(c\_a > 0) \wedge 
	\neg (c\_c \in \pi_{o\_c}(O))$ in the
	subquery is translated as  
	$$(c\_a >
	0) \wedge \Big(\neg \isnul(c\_c) \wedge \neg \big(c\_c \in \pi_{o\_c}(O)\big)\Big)\,.$$
	Since \lstinline{c_custkey} is the key, $\neg \isnul(c\_c)$ is always
	true, and thus could be removed. In other words, $Q_5$ does not
	change: it could have been written in two-valued SQL without any
	problem.
\end{example}}

\end{proof}

\begin{lemma}\label{lem:syneq}
The $\semtwovlsyneq{\,}$ 
semantics of
	\sqlrarec\ expressions, and of \sqlra\ expressions, capture
	their  SQL semantics
$\semsql{\,}$	efficiently.    
\end{lemma}

We show that this lemma can be further generalized and be proved as a corollary. We now set the ground for presenting this generalized theorem. 

Are $\semtwovl{}$ and $\semtwovlsyneq{}$ the only possible two-valued semantics out there? Of course
not. Consider for example
the predicate $\leq$. In both $\semtwovlsyneq{\,}$ and $\semtwovl{\,}$,
$\NULL\leq\NULL$ is $\vf$, but 
under the syntactic equality
interpretation it is not unreasonable to say that $\NULL\leq\NULL$ is
true, as $\leq$ subsumes equality. This gives a general idea of how
different semantics can be obtained: when some arguments of a
predicate are $\NULL$, we can decide what the truth value based on
other values. Just for the sake of example, we could say that 
$n \leq \NULL$ is 
\vf\ for $n < 0$ and $\vt$ for $n \geq 0$.

To define such 
alternative semantics in a general way, we simply state, for each comparison predicate in $\{\eq,\neq,\le,<,>,\ge \}$, what 
happens when one or two of its arguments are \NULL.
To this end, we associate each comparison predicate with a binary predicate with the same notation.
Our construction is in fact even more general, as it covers also predicates of higher arities.

\OMIT{
We introduce the notion of {\em grounding} 
of 
predicates.
It is a function
$\ground$ that takes an $n$-ary predicate $P$ of type $\tau$ and a
non-empty sequence $\bI=\langle i_1,\cdots ,i_k\rangle$ of indices
$1\le i_1<\cdots < i_k \le n$, and produces a relation
$\ground(P,\bI)$ that contains $\tau$-records $(t_1, \cdots, t_n)$
where $t_i = \NULL$ for every $i\in \bI$, and $t_i \ne \NULL$ for every
$i\not\in \bI$. In the above example, if $P$ is $\leq$, then
$\ground(P,\langle 1\rangle) = \{(\NULL,n)\mid n \geq 0\}$ while
$\ground(P,\langle 2\rangle) = \{(\NULL,n)\mid n < 0\}$ and
$\ground(P,\langle 1,2\rangle) = \{(\NULL,\NULL)\}$.
}

We introduce the notion of {\em grounding}:
It is a function
$\ground$ that takes an $n$-ary predicate $P$ of type $\tau$ and a
non-empty sequence $\bI=\langle i_1,\cdots ,i_k\rangle$ of indices
$1\le i_1<\cdots < i_k \le n$, and produces a relation
$\ground(P,\bI)$ that contains $\tau$-records $(t_1, \cdots, t_n)$
where $t_i = \NULL$ for every $i\in \bI$, and $t_i \ne \NULL$ for every
$i\not\in \bI$. 
In the above example, if $P$ is $\leq_2$, then
$\ground(P,\langle 1\rangle) = \{(\NULL,n)\mid n \geq 0\}$ while
$\ground(P,\langle 2\rangle) = \{(\NULL,n)\mid n < 0\}$ and
$\ground(P,\langle 1,2\rangle) = \{(\NULL,\NULL)\}$.

The semantics $\semtwovlgrnd{\,}$ based on such grounding is given by
redefining the truth value of $\semtwovlgrnd{P(t_1,\ldots,t_n)}$ as
that of $\bar t \in
\ground(P,\bI)$, where $\bI$ is the list on indices $i$ such that
$\semtwovlgrnd{t_i}=\NULL$.

Such semantics
generalize $\semtwovl{\,}$ 
and $\semtwovlsyneq{\,}$. \OMIT{In the former,
by setting $\ground(\eq_{m},\bI) = \emptyset$ for each nonempty $\bI$;
in the latter,
\OMIT{it is the same except that $\ground(=,\langle 1,2\rangle)$ would
contain the tuple $(\NULL,\NULL)$. }
it is the same except that $\ground(=_{2m},\langle i,m+i\rangle)$ for every $1\le i \le m$ would
contain also the tuple $(c_1,\cdots, c_{i-1},\NULL,c_{i+1},\cdots, c_m,c_{1},\cdots, c_{i-1},\NULL,c_{i+1},\cdots, c_m)$.
}
In the former,
by setting $\ground(\eq,\bI) = \emptyset$ for each nonempty $\bI$;
in the latter,
it is the same except that $\ground(=,\langle 1,2\rangle)$ would
contain also the tuple $(\NULL,\NULL)$.

We say that a grounding is {\em
	expressible} if for each predicate $P$
	and each $\bI$ there is a
condition $\theta_{P,\bI}(\bar t)$ such that $\bar t \in \ground(P,\bI)$ if and only if
$\pi_{\bar\bI}(\bar t)$ satisfies $\theta_{P,\bI}(\bar t)$ where $\pi_{\bar\bI}$ denotes the projection on the complement of $\bI$, i.e., the indices that do not occur in $\bI$. Note that the
projection on the complement $\bar\bI$ of $\bI$ would only contain non-null
elements. 
All the semantics used so far in the paper -- and sketched
earlier in this section -- satisfy this condition. 

Notice that for $\semtwovl{\,}$ it holds that $\theta_{\eq, \{1\}}(t,t') = 
\theta_{\eq, \{2\}}(t,t') = 
\theta_{\eq, \{1,2\}}(t,t') = \false$,
and for $\semtwovlsyneq{\,}$ it holds that $\theta_{\omega, \{1\}}(t,t') = 
\theta_{\omega, \{2\}}(t,t') = \false$ for every $\omega$, and  
$\theta_{\omega, \{1,2\}}(t,t') = \true$ whenever $\omega$ subsumes equality, that is $\omega \in\{\le,\ge,\eq \}$.

We discuss now the translation from the special case $\semtwovlsyneq{\,}$ to SQL. Though this translation is captured by the one depicted in  Figure~\ref{fig:twogrtothree}, we write it explicitly in its reduced form.
\begin{itemize}
    \item For $\omega \in \{<,>,\neq\}$
    \begin{itemize}
        \item $\trt{t\,\omega \, t'} = \neg\isnul(t) \wedge
        \neg \isnul(t')\wedge t\,\omega \, t'
    $
    \item
    $\trf{t\,\omega \, t'} = \isnul(t) \vee \isnul(t') \vee \left(\neg\isnul(t) \wedge\neg \isnul(t')\wedge \neg t\,\omega \, t'\right)
    $ 
    \end{itemize} 
    \item
    For $\omega \in \{\le,\ge,\eq\}$
    \begin{itemize}
        \item $\trt{t\,\omega \, t'} = \left(\neg\isnul(t) \wedge
        \neg \isnul(t')\wedge t\,\omega \, t'\right) \vee
        \left( \isnul(t) \wedge
         \isnul(t') \right)
    $
    \item
    $\trf{t\,\omega \, t'} = (\isnul(t)\wedge  \neg \isnul(t'))\vee 
   ( \isnul(t')  \wedge \neg \isnul(t))
    \vee \left(\neg\isnul(t) \wedge\neg \isnul(t')\wedge \neg t\,\omega \, t'\right)
    $ 
    \end{itemize} 
\end{itemize}

\begin{theorem} \label{thm:grgen}
	For every expressible grounding $\ground$, the $\semtwovlgrnd{\,}$ semantics of
	\sqlrarec\ or \sqlra\ expressions captures their SQL semantics.
\end{theorem}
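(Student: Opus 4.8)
The plan is to mirror the two-translation argument of Theorem~\ref{thm:main}, changing only how atomic comparisons are handled; the expressibility of $\ground$ is exactly what makes that one base case work, and the rest is inherited. By Definition~\ref{capture-def} three things are required, and item~(3) is immediate: on a null-free database $D$ every environment binds names to non-null values, so in the evaluation of $\semtwovlgrnd{P(\bar t)}$ the index set $\bI$ is always empty, each predicate is evaluated exactly as in SQL, and no connective produces $\vu$; a routine induction gives $\semtwovlgrnd{E}_D=\semsql{E}_D$. I would prove the statement for \sqlrarec, observing that the constructions below keep \sqlra\ inside \sqlra\ and \sqlrarec\ inside \sqlrarec\ (they introduce only Boolean combinations of existing conditions and selections), so the \sqlra\ case follows verbatim.

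The substance is item~(1), the direction $\semtwovlgrnd{\,}\to\semsql{\,}$. Following Figure~\ref{fig:trtf} I would define $\trt{\cdot}$ and $\trf{\cdot}$ by mutual induction so that $\semtwovlgrnd{\theta}_{D,\eta}=\vt$ iff $\semsql{\trt{\theta}}_{D,\eta}=\vt$ and $\semtwovlgrnd{\theta}_{D,\eta}=\vf$ iff $\semsql{\trf{\theta}}_{D,\eta}=\vt$, and translate an expression by replacing each condition $\theta$ with $\trt{\theta}$. Every clause is copied from Figure~\ref{fig:trtf} except the base case for a comparison, or more generally an atomic predicate $P(\bar t)$ with $\bar t=(t_1,\dots,t_n)$, which I would give by a case split over the null pattern of its arguments:
\[
\trt{P(\bar t)}\;\df\;\bigvee_{\bI\subseteq\{1,\dots,n\}}\Big(\bigwedge_{i\in\bI}\isnul(t_i)\ \wedge\ \bigwedge_{i\notin\bI}\neg\isnul(t_i)\ \wedge\ \psi_{\bI}\Big),
\]
where $\psi_{\emptyset}\df P(\bar t)$ and, for nonempty $\bI$, $\psi_{\bI}$ is the condition $\theta_{P,\bI}$ on the non-null coordinates $\pi_{\bar\bI}(\bar t)$ supplied by expressibility, while $\trf{P(\bar t)}$ is the same disjunction with each $\psi_{\bI}$ replaced by its complement. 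Correctness is precisely the expressibility hypothesis: the guards are pairwise mutually exclusive and in SQL are always $\vt$ or $\vf$, so for each tuple exactly one disjunct is active, namely the one whose $\bI$ equals the actual set of null positions, and there $\psi_{\bI}$ decides membership in $\ground(P,\bI)$, which by definition is $\semtwovlgrnd{P(\bar t)}$.

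The comparisons hidden inside the subquery conditions are converted by the same device. Because $\semtwovlgrnd{\brt\,\omega\,\any(E)}=\vt$ says that some tuple of $E$ makes the inner comparison true under $\ground$, I set $\trt{\brt\,\omega\,\any(E)}\df\neg\isempty(\sigma_{\trt{\brt\,\omega\,\ell(G)}}(G))$ with $G$ the translation of $E$, and dually for $\all$ and $\in$ — exactly the selection-based shape already used in the $\trf{\cdot}$ column of Figure~\ref{fig:trtf}; the clauses for $\isnul$, $\isempty$, the connectives and the operator $\mu$ are then literally those of Theorem~\ref{thm:main}. For item~(2), the direction $\semsql{\,}\to\semtwovlgrnd{\,}$, the mirror translation recovers SQL's three-valued behaviour inside any grounding by guarding every comparison with $\neg\isnul(t)\wedge\neg\isnul(t')\wedge(\cdots)$ — at top level directly, and within $\in,\any,\all$ through the same selection rewriting — which forces the all-arguments-non-null case, where every grounding agrees with SQL. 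This is the only change from Figure~\ref{fig:threetotwo}: its $\trt{\cdot}$ clause for a bare comparison must now also carry the null guards, since a general grounding may read $\NULL\eq\NULL$ as $\vt$ and the unguarded clause would no longer reproduce SQL.

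The one point that needs care, rather than any conceptual difficulty, is that the $\trf{\cdot}$ clause negates $\psi_{\bI}$, so $\theta_{P,\bI}$ must take a definite Boolean value — never $\vu$ — on the inputs it is applied to. This holds under the natural reading of expressibility, where $\theta_{P,\bI}$ is a condition on the non-null values $\pi_{\bar\bI}(\bar t)$ (as in the grounding examples, where it is a comparison such as $n \geq 0$) on which no comparison can be $\vu$; for an arbitrary $\theta_{P,\bI}$ one first two-values it with the translations of Theorem~\ref{thm:main}. It then remains to thread the two invariants through the operational definition of $\mu$, which as in Theorem~\ref{thm:main} calls for an auxiliary induction on the iteration stages alongside the structural induction. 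Theorem~\ref{thm:gr} finally drops out as the special case where $\ground(\eq,\langle1,2\rangle)=\{(\NULL,\NULL)\}$ and every other $\ground(P,\bI)$ is empty.
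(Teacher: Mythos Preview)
Your proposal is correct and follows essentially the same two-translation scheme as the paper: a case split over null patterns for atomic predicates in the $\semtwovlgrnd{\,}\!\to\!\semsql{\,}$ direction (the paper's Figure~\ref{fig:twogrtothree}), and null guards forcing the all-non-null branch in the reverse direction (the paper's Figure~\ref{fig:sqltogr}), with $\any/\all/\in$ rewritten through $\isempty$-of-selection in both. Your version is in fact slightly more careful than the paper's in two places: you make the null-pattern guards mutually exclusive by conjoining the $\neg\isnul$ clauses (the paper only writes $c_I=\bigwedge_{j\in I}\isnul(N_j)$), and you flag explicitly that $\theta_{P,\bI}$ must be Boolean-valued on its non-null inputs for the $\trf{\cdot}$ negation to be sound---a point the paper leaves implicit.
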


\begin{proof}
\paragraph*{From $\semtwovlgrnd{\,}$ to $\semsql{\,}$}

\newsavebox{\twogrtothree}
\sbox{\twogrtothree}{%
	\parbox{\textwidth}{%
\begin{center}
\textbf{Basic conditions}\\ $\;$ \\   
\begin{tabularx}{\textwidth}{rlrl}
$\trt{P(\brt)}$&$\df {\displaystyle \bigvee_{I\subseteq\{1,\cdots,k \}} \Big(c_I \wedge \theta_{P,\bI}(\brt)\Big)}$
&
$\trf{P(\brt)}$&$ \df \bigvee_{I\subseteq\{1,\cdots,k \}} \Big(c_I \wedge \neg\theta_{P,\bI}(\brt)\Big)$ 
\\
\multicolumn{4}{l}{
where 
$c_I \df \bigwedge_{j\in I} \isnul(t_j) \wedge  \bigwedge_{j\in \bar{I}} \neg \isnul(t_j)$
}
\\
$\trt{t \,\omega\,  t'}$ &$\df {\displaystyle  \bigvee_{I\subseteq\{1,2\} } \big( c_I \wedge \theta_{\omega,I}(t,t') \big)
}$
&$ \trf{t  \,\omega\,  t'}$
&$ \df{\displaystyle  \bigvee_{I\subseteq\{1,2\} } \big( c_I \wedge \neg \theta_{\omega,I}(t,t') \big)
}
    $ \\ 
    \multicolumn{4}{l}{
where 
$c_I \df \bigwedge_{j\in I} \isnul(N_j) \wedge \bigwedge_{j\in \bar{I}} \neg \isnul(N_j)  $ with $N_1 \df t, N_2\df t'$
}\\
\OMIT{
$\tcon{\brt \eq  \brt'}$ &$\df 
{\displaystyle  \bigwedge_{i=1}^n
\tcon{t_i \eq t'_i}
}
$
&
$ \fcon{\brt  \eq  \brt'}$
&$ \df {\displaystyle \bigvee_{i=1}^n} 
\fcon{t_i \eq t'_i}
$ \\ 
$\tcon{\brt \neq  \brt'}$ &$\df {\displaystyle  \bigvee_{i=1}^n
\fcon{t_i \eq t'_i}
}$
&
$ \fcon{\brt  \neq  \brt'}$
&$ \df {\displaystyle 
\bigwedge_{i=1}^n
\tcon{t_i \eq t'_i}
}
$}\\
\OMIT{
$\tcon{\brt \,\omega \, \brt'}$ &$\df
{\displaystyle
\begin{multlined}[t]
\bigvee_{i=1}^{n-1}\big(
\tcon{(t_1,\cdots , t_i) \eq (t'_1,\cdots , t'_i)}
\wedge
\tcon{ t_{i+1} \,\omega \, t'_{i+1}}\big)
\end{multlined}
}$
&
$ \fcon{\brt   \,\omega \, \brt'}$
&$ \df {\displaystyle
\begin{multlined}[t]
\bigwedge_{i=1}^{n-1}\big(
\tcon{(t_1,\cdots , t_i) \eq (t'_1,\cdots , t'_i)}
\vee \neg
\fcon{ t_{i+1} \,\omega\, t'_{i+1}}\big)
\end{multlined}
}$
\\
\multicolumn{4}{l}{$\text{where
}  \omega\in \{ <,\le,\ge,>\},\,\brt\df(t_1,\cdots,t_n),\,\,\brt' \df(t'_1,\cdots,t'_n) $}\\ $\;$ \\
$\tcon{\brt \in E}$&$ \df 
\neg \isempty \big(\sigma_{\theta}(G)\big)
$&$
\fcon{\brt\in E}$&$
\df
\isempty \big(\sigma_{\theta}(G)\big)$
\\ \multicolumn{4}{l}
{$
\text{
where } 
\theta\df  
\tcon{\brt \eq \ell(E)}$}
\\}
$\tcon{t\, \omega\, \any(E)}$ & $ \df 
\neg \isempty(\sigma_{ \theta } (G)) $
&
$\fcon{t\, \omega\, \any(E)}$&
$ \df 
\isempty(\sigma_{   \theta } (G)) 
$		
\\
$			\tcon{t\, \omega\, \all(E)}$&$ \df 
\isempty(\sigma_{ \neg \theta } (G))
$&$
\fcon{t\, \omega\, \all(E)}$&$
\df
\neg \isempty(\sigma_{\neg \theta } (G))$
\\ \multicolumn{4}{l}{$\text{
where }
\theta\df  
{\displaystyle  \tcon{\brt\,\omega\,\ell(E)}
}
$}
\end{tabularx}.
	\end{center}
}%
}%

\newcommand{\twogrtothreefig}{%
	\begin{figure*}[t]
		\centering
		\fbox{\usebox{\twogrtothree}}
		\caption{\mbox{$\semtwovlgrnd{}{\,}$}  to SQL semantics
			(new cases compared to Fig.~\ref{fig:trtf})}
		\label{fig:twogrtothree}
	\end{figure*}
}%
\twogrtothreefig

\begin{itemize}
    \item[(b)]
    \begin{enumerate}
        \item
        It holds that \[\semtwovlgrnd{P(\brt)}_{D,\eta} = \true\]
       if and only if (by definition)
       \[
       \semsql{\bigvee_{I\subseteq\{1,\cdots,k \}} \Big(c_I \wedge \theta_{P,\bI}(\brt)\Big)}_{D,\eta} =\true
       \]
\item
It holds that \[\semtwovlgrnd{t\,\omega\,t'}_{D,\eta} = \true\]
       if and only if (by definition)
       \[
       \semsql{\bigvee_{I\subseteq\{1,2 \}} \Big(c_I \wedge \theta_{\omega,\bI}(\brt)\Big)}_{D,\eta} =\true
       \]
\OMIT{
\item
It holds that 
\[
\semtwovlgrnd{\brt \neq \brt'}_{D,\eta} = \true
\]
if and only if (by definition of semantics $\neq$)
\[
\semtwovlgrnd{\bigvee_{i=1}^n (t_i\neq t'_i)}_{D,\eta} = \true
\]
if and only if (by definition of $\vee$)
\[
\bigvee_{i=1}^n \semtwovlgrnd{t_i\neq t'_i}_{D,\eta} = \true
\]
if and only if (by induction hypothesis)
\[
\bigvee_{i=1}^n \semsql{\tcon{t_i\neq t'_i}}_{D,\eta} = \true
\]
if and only if (by definition of $\vee$ and $\neq$)
\[
\semsql{\bigvee_{i=1}^n \tcon{t_i\neq t'_i}}_{D,\eta} = \true
\]
\item 
It holds that 
\[
\semtwovlgrnd{\brt \, \omega \, \brt'}_{D,\eta} = \true
\]
if and only if (by definition of lexicographic comparison)
\[
\semtwovlgrnd{
\bigvee_{i=1}^{n-1}\big(
{(t_1,\cdots , t_i) \eq (t'_1,\cdots , t'_i)}
\wedge
{ t_{i+1} \,\omega \, t'_{i+1}}\big)}_{D,\eta}=\true
\]
if and only if (by definition of logical connectors)
\[
\bigvee_{i=1}^{n-1}\big(
\semtwovlgrnd{{(t_1,\cdots , t_i) \eq (t'_1,\cdots , t'_i)}}_{D,\eta}
\wedge
\semtwovlgrnd{ t_{i+1} \,\omega \, t'_{i+1}}_{D,\eta}\big)=\true
\]
if and only if (by induction hypothesis)
\[
\bigvee_{i=1}^{n-1}\big(
\semsql{\tcon{(t_1,\cdots , t_i) \eq (t'_1,\cdots , t'_i)}}_{D,\eta}
\wedge
\semsql{\tcon{ t_{i+1} \,\omega \, t'_{i+1}}}_{D,\eta}\big)=\true
\]
if and only if (by definition of logical connectors and lexicographic comparison)
\[
\semsql{
\bigvee_{i=1}^{n-1}\big(
\tcon{(t_1,\cdots , t_i) \eq (t'_1,\cdots , t'_i)}
\wedge
\tcon{ t_{i+1} \,\omega \, t'_{i+1}}\big)}_{D,\eta }=\true
\]
\item
It holds that 
\[
\semtwovlgrnd{\brt \in E}_{D,\eta} = \true
\]
if and only if (by definition of $\in$)
\[
\bigvee_{\brt'\in \semtwovlgrnd{E}_{D,\eta}
} \semtwovlgrnd{\brt \eq \brt'}_{D,\eta}
= \true
\]
if and only if (by induction hypothesis (a) and previous items)
\[
\bigvee_{\brt'\in \semsql{G}_{D,\eta}
} \semsql{\tcon{\brt \eq \brt'}}_{D,\eta}
= \true
\]
if and only if (by definition of $\vee$)
\[\semsql{
\neg \isempty (\sigma_{\tcon{\brt \eq \ell(E)}}(G))
}_{D,\eta}
= \true
\]}
	\item
	It holds that  
\[\semtwovlgrnd{\brt \, \omega\,\any (E)}_{D,\eta} = \true\] 
if and only if (by definition)
\[ \bigvee_{\brt'\in \semtwovlgrnd{E}_{D,\eta}} \semtwovlgrnd{\brt \,\omega\, \brt'}_{D,\eta} = \true
\] 
if and only if 
(by induction hypothesis (a))
\[ \bigvee_{\brt'\in \semvl{G}_{D,\eta}} \semtwovlgrnd{\brt \,\omega\, \brt'}_{D,\eta} = \true
\] 
if and only if 
(by induction hypothesis (b))
\[ \bigvee_{\brt'\in \semvl{G}_{D,\eta}} \semvl{\trt{\brt \,\omega\, \brt'}}_{D,\eta} = \true
\] 
if and only if (by definition of $\trt{\brt \,\omega\, \brt'}$)
\[ \bigvee_{\brt'\in \semvl{G}_{D,\eta}} \semvl{\brt \,\omega\, \brt'}_{D,\eta} = \true
\] 
if and only if (by definition of $\any$)
\[\semvl{\brt \,\omega\, \any( G)}_{D,\eta} = \true\]
	\item
	It holds that  
\[\semtwovlgrnd{\brt \, \omega\,\all (E)}_{D,\eta} = \true\] 
if and only if (by definition)
\[ \bigwedge_{\brt'\in \semtwovlgrnd{E}_{D,\eta}} \semtwovlgrnd{\brt \,\omega\, \brt'}_{D,\eta} = \true
\] 
if and only if 
(by induction hypothesis (a))
\[ \bigwedge_{\brt'\in \semvl{G}_{D,\eta}} \semvl{\brt \,\omega\, \brt'}_{D,\eta} = \true
\] 
if and only if 
(by induction hypothesis (b))
\[ \bigwedge_{\brt'\in \semvl{G}_{D,\eta}} \semvl{\trt{\brt \,\omega\, \brt'}}_{D,\eta} = \true
\] 
if and only if (by definition of $\trt{\brt \,\omega\, \brt'}$)
\[ \bigwedge_{\brt'\in \semvl{G}_{D,\eta}} \semvl{{\brt \,\omega\, \brt'}}_{\eta} = \true
\] 
if and only if (by definition of $\all$)
\[\semvl{\brt \,\omega\, \all( G)}_{D,\eta} = \true\]
\OMIT{
\item
It holds that 
\[
\semtwovlgrnd{\brt \, \omega \, \any(E)}_{D,\eta} = \true
\] if and only if 
\[\semsql{
\neg \isempty (\sigma_{\tcon{\brt \,\omega\, \ell(E)}}(G))
}_{D,\eta}
= \true
\] we use similar arguments as previous item while replacing $\eq$ with $\omega$.
\item
It holds that 
\[
\semtwovlgrnd{\brt \,\omega\, \all(E)}_{D,\eta} = \true
\]
if and only if (by definition of $\all$)
\[
\bigwedge_{\brt'\in \semtwovlgrnd{E}_{D,\eta}
} \semtwovlgrnd{\brt \,\omega\, \brt'}_{D,\eta}
= \true
\]
if and only if (by induction hypothesis (a) and previous items)
\[
\bigwedge_{\brt'\in \semsql{G}_{D,\eta}
} \semsql{\tcon{\brt \,\omega\, \brt'}}_{D,\eta}
= \true
\]
if and only if (by definition of $\wedge$)
\[\semsql{
\isempty (\sigma_{\neg \tcon{\brt \,\omega\, \ell(E)}}(G))
}_{D,\eta}
= \true
\]}
    \end{enumerate}
    
    \item[(c)]
\begin{enumerate}
        \item
        It holds that \[\semtwovlgrnd{P(\brt)}_{D,\eta} = \false\]
       if and only if (by definition)
       \[
       \semsql{\bigvee_{I\subseteq\{1,\cdots,k \}} \Big(c_I \wedge \neg\theta_{P,\bI}(\brt)\Big)}_{D,\eta} =\true
       \]
\item
It holds that \[\semtwovlgrnd{t\,\omega\,t'}_{D,\eta} = \false\]
       if and only if (by definition)
       \[
       \semsql{\bigvee_{I\subseteq\{1,2 \}} \Big(c_I \wedge \neg\theta_{\omega,\bI}(\brt)\Big)}_{D,\eta} =\true
       \]
\OMIT{\item
It holds that 
\[
\semtwovlgrnd{\brt \eq \brt'}_{D,\eta} = \false
\]
if and only if (by definition of semantics $\eq$)
\[
\semtwovlgrnd{\bigwedge_{i=1}^n (t_i\eq t'_i)}_{D,\eta} = \false
\]
if and only if (by definition of $\wedge$)
\[
\bigvee_{i=1}^n \semtwovlgrnd{t_i\eq t'_i}_{D,\eta} = \false
\]
if and only if (by induction hypothesis)
\[
\bigvee_{i=1}^n \semsql{\fcon{t_i\eq t'_i}}_{D,\eta} = \true
\]
if and only if (by definition of $\wedge$ and $\eq$)
\[
\semsql{\bigvee_{i=1}^n \fcon{t_i\eq t'_i}}_{D,\eta} = \true
\]
\item%
It holds that 
\[
\semtwovlgrnd{\brt \neq \brt'}_{D,\eta} = \false
\]
if and only if (by definition of semantics $\neq$)
\[
\semtwovlgrnd{\brt \eq \brt'}_{D,\eta} = \true
\]
if and only if (by claim (b))
\[
\semsql{\bigwedge_{i=1}^n (\tcon{t_i \eq t'_i})}
\]
\item 
It holds that 
\[
\semtwovlgrnd{\brt \, \omega \, \brt'}_{D,\eta} = \false
\]
if and only if (by definition of lexicographic comparison)
\[
\semtwovlgrnd{
\bigwedge_{i=1}^{n-1}\big(
{(t_1,\cdots , t_i) \eq (t'_1,\cdots , t'_i)}
\vee \neg(\neg 
{ t_{i+1} \,\omega \, t'_{i+1}})\big)}_{D,\eta}=\true
\]
if and only if (by definition of logical connectors)
\[
\bigwedge_{i=1}^{n-1}\big(
\semtwovlgrnd{{(t_1,\cdots , t_i) \eq (t'_1,\cdots , t'_i)}}_{D,\eta}
\vee \neg
\semtwovlgrnd{ \neg t_{i+1} \,\omega \, t'_{i+1}}_{D,\eta}\big)=\true
\]
if and only if (by induction hypothesis)
\[
\bigwedge_{i=1}^{n-1}\big(
\semsql{\tcon{(t_1,\cdots , t_i) \eq (t'_1,\cdots , t'_i)}}_{D,\eta}
\vee \neg
\semsql{\fcon{ t_{i+1} \,\omega \, t'_{i+1}}}_{D,\eta}\big)=\true
\]
if and only if (by definition of logical connectors and lexicographic comparison)
\[
\semsql{
\bigwedge_{i=1}^{n-1}\big(
\tcon{(t_1,\cdots , t_i) \eq (t'_1,\cdots , t'_i)}
\vee \neg
\fcon{ t_{i+1} \,\omega \, t'_{i+1}}\big)}_{D,\eta }=\true
\]
\item
It holds that 
\[
\semtwovlgrnd{\brt \in E}_{D,\eta} = \false
\]
if and only if (by definition of $\in$)
\[
\bigvee_{\brt'\in \semtwovlgrnd{E}_{D,\eta}
} \semtwovlgrnd{\brt \eq \brt'}_{D,\eta}
= \false
\]
if and only if (by induction hypothesis (a) and previous items)
\[
\bigwedge_{\brt'\in \semsql{G}_{D,\eta}
} \semsql{\fcon{\brt \eq \brt'}}_{D,\eta}
= \true
\]
if and only if (by definition of $\wedge$)
\[\semsql{
 \isempty (\sigma_{\tcon{\brt \eq \ell(E)}}(G))
}_{D,\eta}
= \true
\]}
\item 
It holds that 
\[\semtwovlgrnd{\brt \,\omega\, \any(E)}_{D,\eta} = \false\] 
if and only if 
(by definition of the semantics of $\any$)
\[ \bigvee_{\brt'\in \semtwovlgrnd{E}_{D,\eta}} \semtwovlgrnd{\brt\, \omega \,\brt'}_{D,\eta} = \false\]
if and only if 
(by induction hypothesis (a))
\[ \bigvee_{\brt'\in \semvl{G}_{D,\eta}} \semtwovlgrnd{\brt\, \omega \,\brt'}_{D,\eta} = \false\]
if and only if (by definition of $vee$)
\[ \text{for all }{\brt'\in \semvl{G}_{D,\eta}}: \semtwovlgrnd{\brt\, \omega \,\brt'}_{D,\eta} = \false\]
if and only if (by induction hypothesis (b))
\[ \text{for all }{\brt'\in \semvl{G}_{D,\eta}}:  \semvl{\trf{\brt\, \omega \,\brt'}}_{D,\eta} = \true\]
if and only if (by the definition of the translation and of $\isempty$)
\[
\semvl{\isempty(\sigma_{\neg \theta}(G))}=\true
\]
with $\theta \df \trf{\brt\, \omega \,\ell(G)}$
\item 
It holds that 
\[\semtwovlgrnd{\brt \,\omega\, \all(E)}_{D,\eta} = \false\] 
if and only if 
(by definition of the semantics of $\all$)
\[ \bigwedge_{\brt'\in \semtwovlgrnd{E}_{D,\eta}} \semtwovlgrnd{\brt\, \omega \,\brt'}_{D,\eta} = \false\]
if and only if 
(by induction hypothesis (a))
\[ \bigwedge_{\brt'\in \semvl{G}_{D,\eta}} \semtwovlgrnd{\brt\, \omega \,\brt'}_{D,\eta} = \false\]
if and only if (by definition of $\wedge$)
\[ \text{there is }{\brt'\in \semvl{G}_{D,\eta}}: \semtwovlgrnd{\brt\, \omega \,\brt'}_{D,\eta} = \false\]
if and only if (by induction hypothesis (b))
\[ \text{there is }{\brt'\in \semvl{G}_{D,\eta}}:  \semvl{\trf{\brt\, \omega \,\brt'}}_{D,\eta} = \true\]
if and only if (by the definition of the translation and of $\isempty$)
\[
\semvl{\neg\isempty(\sigma_{ \theta}(G))}=\true
\]
with $\theta \df \trf{\brt\, \omega \,\ell(G)}$
\OMIT{\item
To show that \[
\semtwovlgrnd{\brt \, \omega \, \any(E)}_{D,\eta} = \false
\] if and only if 
\[\semsql{
 \isempty (\sigma_{\tcon{\brt \,\omega\, \ell(E)}}(G))
}_{D,\eta}
= \true
\] we use similar arguments as previous item while replacing $\eq$ with $\omega$.
\item
It holds that 
\[
\semtwovlgrnd{\brt \,\omega\, \all(E)}_{D,\eta} = \false
\]
if and only if (by definition of $\all$)
\[
\bigwedge_{\brt'\in \semtwovlgrnd{E}_{D,\eta}
} \semtwovlgrnd{\brt \,\omega\, \brt'}_{D,\eta}
= \false
\]
if and only if (by induction hypothesis (a) and previous items)
\[
\bigwedge_{\brt'\in \semsql{G}_{D,\eta}
} \semsql{\tcon{\brt \,\omega\, \brt'}}_{D,\eta}
= \false
\]
if and only if (by definition of $\wedge$)
\[\semsql{
\neg \isempty (\sigma_{\neg \tcon{\brt \,\omega\, \ell(E)}}(G))
}_{D,\eta}
= \true
\]}
    \end{enumerate}    
    
\end{itemize}

\paragraph*{From $\semsql{\,}$ to $\semtwovlgrnd{\,}$}

\newsavebox{\sqltogr}
\sbox{\sqltogr}{%
	\parbox{\textwidth}{%
\begin{center}
\textbf{Basic conditions}\\ $\;$ \\   
\begin{tabularx}{\textwidth}{rlrl}
${P(\brt)}^{\true} $&$\df {\displaystyle 	 \bigwedge_{i=1}^n \neg \isnul(t_i) \wedge 
	P(\brt)}$
&
${P(\brt)}^{\false}$&$ \df {
\displaystyle 
\bigwedge_{i=1}^n \neg \isnul(t_i) \wedge \neg
	P(\brt)
}$ 
\\
\multicolumn{4}{l}{
where 
$\brt \df (t_1,\cdots , t_n)$
}
\\
$\tcon{t \,\omega\,  t'}$ &$\df {\displaystyle 	 \neg \isnul(t) \wedge  \neg \isnul(t') \wedge 
	\theta_{\omega,I}(t,t')}$
&$ \fcon{t  \,\omega\,  t'}$
&$ \df{\displaystyle 	  \neg \isnul(t) \wedge  \neg \isnul(t') \wedge  
\neg	\theta_{\omega,I}(t,t')}
    $
\end{tabularx}.
	\end{center}
}%
}%

\newcommand{\sqltogrfig}{%
	\begin{figure*}[t]
		\centering
		\fbox{\usebox{\sqltogr}}
		\caption{SQL semantics to \mbox{$\semtwovlgrnd{}{\,}$} 
			(new cases compared to Fig.~\ref{fig:threetotwo})}
		\label{fig:sqltogr}
	\end{figure*}
}%
\sqltogrfig
We use all the translations that appear in Figure~\ref{fig:twogrtothree}
.
\OMIT{
\begin{itemize}
	\item 
	${P(\brt)}^{\true} \df \bigwedge_{i=1}^k \neg \isnul(t_i) \wedge 
	P(\brt)$
	\item
	${P(\brt)}^{\false} \df \bigwedge_{i=1}^k \neg \isnul(t_i) \wedge \neg
	P(\brt)$
\end{itemize}
}

The proofs of this direction are shown similarly to the proofs of the translation from $\semvl{}$ to $\semtwovl{}$.
\end{proof}
Notice that Lemmas~\ref{lem:syneq} and~\ref{lem:twovl} are both direct consequence of Theorem~\ref{thm:grgen}.

\section{Appendix for Section~\ref{sec:restore}}

\repeatresult{proposition}{\ref{prop:restequiv}}{ \proprestequiv}
\begin{proof}
We start by showing the first part of the proposition:
\begin{enumerate}
	\item Let $\bara \in_k \semtwovlgen{\sigma_{\theta}(E)}_{D,\eta}$. 
	By definition, this holds if and only if 
	both  $\bara \in_k \semtwovlgen{E}_{D,\eta}$ and $\semtwovlgen{\theta}_{D,\eta;\eta_{\ell(E)}^{\bara}} = \true$. 
	Since $\star$ is two-valued logic and due to the truth tables of the connector $\neg$, the above happens if and only if both  $\bara \in_k \semtwovl{E}_{D,\eta}$ and $\semtwovl{\neg \theta}_{D,\eta;\eta_{\ell(E)}^{\bara}} = \false$.
	By the definition of the difference operator the above happens if and only if 
	 $\bara \in_k \semtwovl{E\setminus \sigma_{\neg \theta}(E)}_{D,\eta}$.
\item
$\semtwovlgen{\brt \in E}_{D,\eta} = \false$ if and only if (by definition of $\in$) the 2VL disjunction  $\bigvee_{\brt'\in \semtwovlgen{E}_{D,\eta}} \semtwovlgen{\brt \eq \brt'} = \false$. This holds if and only if (by definition of the connector $\vee$ in 2VL)
for every $\brt'\in \semtwovlgen{E}_{D,\eta}$ it holds that $\semtwovlgen{\brt \eq \brt'} = \false$. By definition of $\eta_{\ell(E)}^{\brt'}$, this holds if and only if for every $\brt'\in \semtwovlgen{E}_{D,\eta}$ it holds that
$\semtwovlgen{\brt \eq \ell(E)}_{\eta;\eta_{\ell(E)}^{\brt'}} = \false$.  
This holds if and only if (by definition of $\sigma_{\theta}$) 
 $\semtwovlgen{\sigma_{\brt \eq \ell(E) }(E))}_{D,\eta} = \emptyset$. 
\item
$\semtwovlgen{\brt \, \omega \, \any(E)}_{D,\eta} = \false$ if and only if (by definition of $\any$) the disjunction $\bigvee_{t'\in \semtwovl{E}_{D,\eta}} \semtwovlgen{t \, \omega \, t'}_{\eta} = \false$. This holds if and only if (by definition of the connector $\vee$ in 2VL)
for every $\brt'\in \semtwovlgen{E}_{D,\eta}$ it holds that $\semtwovlgen{\brt \, \omega \, \brt'}_{\eta} = \true$. 
By definition of $\eta_{\ell(E)}^{\brt'}$, this holds if and only if for every $\brt'\in \semtwovlgen{E}_{D,\eta}$ it holds that
$\semtwovlgen{\brt \, \omega \, \ell(E)}_{\eta;\eta_{\ell(E)}^{\brt'}} = \false$.  
This holds if and only if (by definition of $\sigma_{\theta}$) 
 $\semtwovlgen{\sigma_{\brt \, \omega\,\ell(E) }(E))}_{D,\eta} = \emptyset$. 
\item
$\semtwovlgen{\brt \, \omega \, \all(E)}_{D,\eta} = \true$ if and only if (by definition of $\all$) the conjunction  $\bigwedge_{\brt'\in \semtwovlgen{E}_{D,\eta}} \semtwovlgen{\brt \, \omega \, \brt'} = \true$. 
This holds if and only if (by definition of the connector $\wedge$ in 2VL)
for every $\brt'\in \semtwovlgen{E}_{D,\eta}$ it hold that $\semtwovlgen{\brt \, \omega \, \brt'} = \true$. 
By definition of $\eta_{\ell(E)}^{\brt'}$, this holds if and only if for every $\brt'\in \semtwovlgen{E}_{D,\eta}$ it holds that
$\semtwovlgen{\brt \, \omega \, \ell(E)}_{\eta;\eta_{\ell(E)}^{\brt'}} = \true$. By definition of $\neg$ in 2VL, this holds  
if and only if for every $\brt'\in \semtwovlgen{E}_{D,\eta}$ it holds that
$\semtwovlgen{\neg(\brt \, \omega \, \ell(E))}_{\eta;\eta_{\ell(E)}^{\brt'}} = \false$.
This holds 
if and only if (by definition of $\sigma_{\theta}$) 
$\semtwovlgen{\sigma_{\neg(\brt \, \omega\,\ell(E)) }(E)}_{D,\eta} = \emptyset$. 
\end{enumerate}
For the second part, we present counter-examples:
\begin{enumerate}
    \item Let $D$ consist of a single relation $R$ with $\ell(R)\df N$ that consists of the single tuple $\NULL$.
    Let us set $E\df R$ and $\theta \df N \eq 1$. By definition of the SQL semantics, it holds that $\semsql{\sigma_{N \eq 1}(R)}_{D,\emptyset} = \emptyset$. However, by definition we have
    $\semsql{R \setminus \sigma_{N \eq 1}(R)}_{D,\emptyset} =  \semsql{R}_{D,\emptyset} \setminus  \semsql{\sigma_{N \eq 1}(R)}_{D,\emptyset} $
and $\semsql{\sigma_{N \eq 1}(R)}_{D,\emptyset} =\emptyset $. Thus,  $\semsql{R \setminus \sigma_{N \eq 1}(R)}_{D,\emptyset} = R^D$ which concludes the proof.
\item
Let us consider the same settings as above, and let us set $\brt \df (1)$.
It holds that $\semsql{1 \in R}_{D,\emptyset} = \unknwn$ but $\semsql{\sigma_{1 \eq N }(R)}_{D,\emptyset} = \emptyset$.
\item
With the same settings as in the previous item, and with $\eq$ in the place of $\omega$, we get a similar contradiction. 
\item
With the same settings as previously, we get $\semsql{\sigma_{\neg(1\eq N) }(R)}_{D,\eta} = \emptyset$ whereas $\semsql{1 \eq \all(R)}_{D,\eta} = \false$.
\end{enumerate}
\end{proof}

\section{Appendix for Section~\ref{sec:coincide}
}
\subsection{Proof of Theorem~\ref{thm:trivtrans}}

\repeatresult{theorem}
{\ref{thm:trivtrans}}
{\thmtrivtrans}

We first present the following lemma that connects nullable attributes and their semantics.
\begin{lemma}\label{lem:nf}
For every \sqlrarec\ expression $E$, database $D$, environment $\eta$, and name $N\in \ell(E)$ and $N\not\in \nulable{E}$ the following holds: 
For every $\bara \in \semsql{E}_{D,\eta}$ it holds that $\semsql{N}_{D,\eta;\eta_{\ell(E)}^{\bara}} \ne \NULL$. 
\end{lemma}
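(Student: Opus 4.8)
The plan is to prove the statement by structural induction on $E$, proving it uniformly for all databases $D$ and environments $\eta$ at once; this uniformity is exactly what lets the hypothesis be reused on the subqueries occurring inside selection conditions and, most importantly, inside the recursion operator. The invariant maintained at each expression is that every position of $\ell(E)$ not belonging to $\nulable{E}$ carries a non-$\NULL$ value in every tuple of $\semsql{E}_{D,\eta}$. For the base case $E=R$, if $N\in\ell(R)$ and $N\notin\nulable{R}$ then, by the way $\nulable{R}$ is fixed by the schema, $N$ is part of $R$'s primary key or is declared \sqlkw{NOT NULL}; since $D$ is required to satisfy its schema, no tuple of $R$ has $\NULL$ in column $N$.

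Most inductive cases reduce to bookkeeping over positions followed by an appeal to the induction hypothesis. For $\sigma_\theta(E')$, $\epsilon(E')$ and $E_1\setminus E_2$ the result is a sub-bag of $\semsql{E'}_{D,\eta}$ (resp.\ of $\semsql{E_1}_{D,\eta}$), and both $\ell(\cdot)$ and $\nulable{\cdot}$ are inherited, so the claim is immediate. For $E_1\times E_2$ a non-nullable output position lies in exactly one factor, and every output tuple is a concatenation of a tuple of $E_1$ with a tuple of $E_2$, so the hypothesis on that factor applies. For $E_1\cup E_2$ and $E_1\cap E_2$ I would use the position-wise definition of $\nulable{\cdot}$: a position is non-nullable in a union iff it is non-nullable in both operands, and non-nullable in an intersection iff it is non-nullable in at least one operand. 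Since each output tuple of a union comes from one operand while each output tuple of an intersection lies in both, the matching hypothesis delivers non-nullness in every case.

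The two cases carrying real content are projection together with aggregation, and recursion. For $\pi_{t_1,\dots,t_m}(E')$, an output column $N$ not in $\nulable{\pi_{t_1,\dots,t_m}(E')}$ corresponds to a term $t_i$ mentioning no name of $\nulable{E'}$; by the hypothesis every such name evaluates to a non-$\NULL$ value on the underlying tuple, and since SQL's term semantics returns $\NULL$ only when some argument is $\NULL$, $t_i$ evaluates to a non-$\NULL$ value (the one delicate point is the degenerate projected term $t_i=\NULL$, which well-formedness is assumed to exclude). For $\Group_{\bar M,\langle F_1(N_1),\dots,F_m(N_m)\rangle}(E')$, a non-nullable grouping attribute is already non-nullable in $E'$ and hence carries no $\NULL$ into any group key; a non-nullable aggregate column $F_i(N_i)$ is, by the definition of $\nulable{\cdot}$, one whose source column $N_i$ is non-nullable in $E'$, so after the $\NULL$-removal built into aggregation the input bag is still the full, non-empty group, and an aggregate maps a non-empty bag of numbers into $\Num$, i.e.\ to a non-$\NULL$ value.

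The main obstacle is the recursion operator $\mu R.\,E_1\op E_2$, where $\nulable{\mu R.\,E_1\op E_2}=\nulable{E_1\cup E_2}=:\nu$. Here I would fix $\nulable{R}$, the nullable positions of the iterated symbol, to be $\nu$ itself, making the definition a consistent fixpoint, and then run a second, inner induction on the iteration count of the fixpoint computation. The inner invariant is that each approximation $R_i$ (and each $RES_i$) has all positions outside $\nu$ non-$\NULL$. The base stage is $\semsql{E_1}_{D,\eta}$ and uses $\nulable{E_1}\subseteq\nu$ with the outer hypothesis on $E_1$. For the step, one applies the outer hypothesis to $E_2$ over the database $D\cup R_i$ — which is legitimate precisely because $R_i$ respects $\nulable{R}=\nu$ by the inner hypothesis — and uses $\nulable{E_2}\subseteq\nu$ to conclude that $R_{i+1}$, and hence $RES_{i+1}=RES_i\cup R_{i+1}$, keep all positions outside $\nu$ non-$\NULL$; the $\ecup$ variant only removes tuples and is handled identically. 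Since every $RES_i$ satisfies the invariant, so does the returned result. The place where the argument is genuinely not mechanical is justifying that treating $R$ as a base relation with nullable set $\nu$ is sound, i.e.\ that the outer structural induction and the inner iteration induction interlock correctly.
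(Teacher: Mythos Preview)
Your proposal is correct and follows the same approach as the paper, which simply states that ``the proof of this lemma is based on a straightforward induction on $E$ and thus is omitted.'' You have filled in the details of that induction, and in doing so you correctly flag two subtleties the paper glosses over: the degenerate projected term $t_i=\NULL$ (which indeed escapes the paper's definition of $\nulable{\pi_{\dots}(E)}$), and the circularity in defining $\nulable{R}$ for the fresh recursion symbol, which you resolve by taking a fixpoint and running an inner induction on the iteration count.
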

\OMIT{
\begin{corollary}
For every \sqlrarec\ expression $E$, database $D$, environment $\eta$, and null-free subexpression $\sigma_{\theta}(F)$ of $E$  for all $\bara\in \semtwovl{F}$ it holds that
$\semtwovl{\theta}_{D,\eta;\eta_{\ell(F)}^{\bara}} = \semvl{\theta}_{D,\eta;\eta_{\ell(F)}^{\bara}}$
\end{corollary}
}\OMIT{\begin{proof}
We prove this by induction on $E$:
\begin{itemize}
    \item If $E\df R$ then the claim follows from the definition.
    \item
    If $E\df \nulable{\epsilon(E')}$ then $\ell(E) = \ell(\epsilon(E'))$ and for every $\bara\in \epsilon(E')$ it also holds that  $\bara\in E'$. Thus, the claim follows from the induction hypothesis. For $E\df \sigma_{\theta}(E')$ the proof is similar.
    \item If $E\df E_1\times E_2$ then it holds that $\ell(E)\df \ell(E_1)\cdot \ell(E_2)$. 
    Let $N \in \ell(E_1\times E_2) \setminus \nulable{E_1\times E_2}$.
    Since every element in $\semvl{E_1\times E_2}_{D,\eta}$ is of the form $(\bara ,\bar{b})$ with $\bara \in \semvl{E_1}_{D,\eta}$ and $\bar b \in \semvl{E_2}_{D,\eta}$, it holds that 
    $\semvl{N}_{D,\eta;\eta_{\ell(E)}^{(\bara, \bar b)}} \ne \NULL$.
    \item
    \item
    \item
\end{itemize}
\end{proof}}
The proof of this lemma is based on a straightforward induction on $E$ and thus is omitted. 

\OMIT{We can conclude that 
\begin{corollary}
\label{lem:nf}
Let $E$ be an \sqlrarec\ expression, $D$ a database, and $\eta$ environment. If every subexpression of $E$ of the form $\sigma_{\theta}(F)$ is null-free then for every subcondition $\theta$ of $E$ it holds that 
\[
\semvl{\theta}_{D,\eta} = \semtwovl{\theta}
\]
\end{corollary}
}

It suffices to show that Theorem~\ref{thm:trivtrans} holds for \sqlrarec\ expressions $E$ such that for all of their subcoditions of the form $\neg \theta $ it holds that $\theta$ is atomic. 
This is due to the fact that De-Morgan laws hold for Kleene's logic. Thus, whenever nagation appears on a condition that is not atomic we can reformulate the condition so that the condition holds (i.e., negation would appear only before atomic conditions).  

Let $E$ be an \sqlrarec\ expression such that all of its subcondtioins of the form $\neg \theta$ are such that $\theta$ is atomic, and all of its subexpressions of the form $\sigma_{\theta}(F)$ are null-free. 
We prove mutually the claims by induction on $E$ and $\theta$:
\begin{itemize}
    \item[$(a)$]
    For every subexpression $F$ of $E$ it holds that $\semtwovl{F}_{D,\eta} = \semsql{F}_{D,\eta}$.
     \item[$(b)$]
     For every null-free subexpression $\sigma_{\theta}(F)$ of $E$ and for every subcondition $\theta'$ of $\theta$ it holds that
    $\semtwovl{\theta'}_{D,\eta;\eta_{\ell(F)}^{\bara}} =\true$ if and only if $ \semvl{\theta'}_{D,\eta;\eta_{\ell(F)}^{\bara}}= \true$ for all $\bara\in \semtwovl{F}_{D,\eta}$.
\end{itemize}

\paragraph{Induction basis}
\begin{itemize}
    \item[(a)] By definition $\semtwovl{R}_{D,\eta} = R^D =  \semsql{R}_{D,\eta}$
    \item[(b)]
Assume that $\theta$ and $F$ are atomic.
By induction basis (a) it holds that $\semtwovl{F}_{D,\eta} = \semvl{F}_{D,\eta}$. 
Let $\bara\in \semvl{F}_{D,\eta}$. We want to show that $\semtwovl{\theta}_{D,\eta;\eta_{\ell(F)}^{\bara}} = \true$ if and only if $\semvl{\theta}_{D,\eta;\eta_{\ell(F)}^{\bara}} = \true$.
This direction can be proved similarly to the proof of correctness of the translation $\trt$ from $\semvl{}$ to $\semtwovl{}$ or from $\semtwovl{}$ to $\semvl{}$.\OMIT{ 
We distinguish between the different options for the atomic $\theta$:
\begin{itemize}
   \item If $\theta \df \true \,|\, \false$
    then $\semtwovl{\theta}_{D,\eta} = \theta$ for every $\eta$ and thus the claim holds.
    \item If $\theta \df \isnul(t)$
    then $\semtwovl{\theta}_{D,\eta;\eta_{\ell(E)}^{\bara}} = \true$ if and only if $\semvl{\theta}_{D,\eta;\eta_{\ell(F)}^{\bara}} = \true$ by definition of the semantics of $\isnul$.
    \item If $\theta \df \brt \,\omega\,\brt'$ then 
    it holds that $\semtwovl{\theta}_{D,\eta} =\true$ if and only if $\semvl{\theta}_{D,\eta} =\true$.
    \OMIT{
    $ \semtwovl{t_i \, \omega \, t'_i}_{D,\eta;\eta_{\ell(F)}^{\bara}}$
    since $\sigma_{\theta}(F)$ is null-free and due to Lemma~\ref{lem:nf}, it holds that for every $i$ it holds that $\semvl{t_i \, \omega \, t'_i}_{D,\eta;\eta_{\ell(F)}^{\bara}} = \semtwovl{t_i \, \omega \, t'_i}_{D,\eta;\eta_{\ell(F)}^{\bara}}$, which implies
    the desired claim.}
    \item If $\theta \df \brt \in F'$ then by induction hypothesis (b) it holds that $\semtwovl{F'}_{D,\eta; \eta_{\ell(F)}^{\bara}} = \semvl{F'}_{D,\eta; \eta_{\ell(F)}^{\bara}}$. It also holds that $\semtwovl{\theta}_{D,\eta} =\true$ if and only if $\semvl{\theta}_{D,\eta} =\true$, 
    and therefore the claim holds.
    \item If $\theta \df \brt \,\omega\, \any(F')\,|\,\brt \,\omega\, \all(F') $ then we can show the claim similarly to the previous item.
    \item
    If $\theta \df \isempty(F')$ then the claim is implied from induction hypothesis (a).
\end{itemize}}
\end{itemize}
\paragraph{Induction step}
\begin{enumerate}
    \item[$(a)$] 
    \begin{enumerate}
        \item[1.]
        If $E\df \pi_{t_1,\cdots, t_m} E' \,|\, E_1\op E_2
        \,|\, \epsilon(E) \,|\, \Group_{\bar M,\langle F_1(N_1),\cdots, F_m(N_m)\rangle} (E)
        $ (possible renaming are omitted to simplify the presentation) then the claim follows directly from induction hypothesis (a).
        \item[2.]
         If $E\df \sigma_{\theta}(E')$ then
    by applying induction hypothesis (a) we get that for every $\semvl{E'}_{D,\eta}= 
    \semtwovl{E'}_{D,\eta}$. 
    For all $\bara \in \semtwovl{E'}_{D,\eta}$:
    \begin{itemize}
        \item If $\theta \df \theta_1 \op \theta_2$ then by induction hypothesis (b) we have that 
        $\semtwovl{\theta_1}_{D,\eta; \eta_{\ell(E')}^{\bara}} =\true$  if and only if  $\semvl{\theta_1}_{D,\eta; \eta_{\ell(E')}^{\bara}}=\true$ and similarly for $\theta_2$. Thus, by the truth table of connectors we can conclude that  $\semtwovl{\theta}_{D,\eta; \eta_{\ell(E')}^{\bara}} =\true$  if and only if  $\semvl{\theta}_{D,\eta; \eta_{\ell(E')}^{\bara}}=\true$ which completes the proof of this case. 
        \item If $\theta\df \neg \theta'$ then we do case analysis and show that for all $\bara \in \semtwovl{E'}_{D,\eta}$: 
        \begin{itemize}
         \item If $\theta' \df \true \,|\, \false $ then
 $\semtwovl{\theta'}_{D,\eta'} = \theta'$ for every $\eta'$ and thus the claim holds. 
       \item If $\theta' \df \isnul(t)$ then
     since $\sigma_{\theta}(E')$ is null-free, it holds that  $\semtwovl{\theta'}_{D,\eta;\eta_{\ell(E)}^{\bara}} = \false$ and thus  $\semvl{\theta}_{D,\eta;\eta_{\ell(F)}^{\bara}} = \false$.
    \item If $\theta' \df \brt \,\omega\,\brt'$ then 
    due to Lemma~\ref{lem:nf} and to the fact $\sigma_{\theta}(E')$ is null-free, we get $\semtwovl{\theta'}_{D,\eta;\eta_{\ell(E)}^{\bara}} = \semvl{\theta}_{D,\eta;\eta_{\ell(F)}^{\bara}}$.
    \item If $\theta' \df \brt \in F'$ then by induction hypothesis (a) we get
    $\semtwovl{F'}_{D,\eta; \eta_{\ell(F)}^{\bara}} = \semvl{F'}_{D,\eta; \eta_{\ell(F)}^{\bara}}$. 
    Since $\sigma_{\theta}(E')$ is null-free, and due to Lemma~\ref{lem:nf}, it holds that $\semtwovl{t_i\eq N_i}_{D,\eta; \eta_{\ell(F)}^{\bara}} =  \semvl{t_i\eq N_i}_{D,\eta; \eta_{\ell(F)}^{\bara}}$ where $\ell(F)\df (N_1,\cdots, N_n)$,
    and therefore the claim holds.
    \item If $\theta' \df \brt \,\omega\, \any(F')\,|\,\brt \,\omega\, \all(F') $ then we can show the claim similarly to the previous item.
    \item
    If $\theta \df \isempty(F')$ then the claim is implied from induction hypothesis (a).
        \end{itemize}
    \end{itemize}
    \end{enumerate}
    \item[$(b)$]
    \begin{enumerate}
        \item[1.] By definition,
        $\semsql{\theta_1 \vee \theta_2}_{D,\eta}  \semsql{\theta_1 }_{D,\eta} \vee \semsql{\theta_2}_{D,\eta}$. By induction hypothesis, $\semsql{\theta_1 }_{D,\eta} =\true \text{ if and only if } \semtwovl{\theta_1 }_{D,\eta}=\true$ and $\semsql{\theta_2 }_{D,\eta} =\true \text{ if and only if }  \semtwovl{\theta_2 }_{D,\eta} =\true$. Thus, $\semsql{\theta_1 \vee \theta_2}_{D,\eta} =\true \text{ if and only if }  \semtwovl{\theta_1 \vee \theta_2}_{D,\eta}=\true$.
        \item[2.] 
        By definition,
        $\semsql{\theta_1 \wedge \theta_2}_{D,\eta} = \semsql{\theta_1 }_{D,\eta} \wedge \semsql{\theta_2}_{D,\eta}$. By induction hypothesis, $\semsql{\theta_1 }_{D,\eta}=\true \text{ if and only if } \semtwovl{\theta_1 }_{D,\eta}=\true$ and $\semsql{\theta_2 }_{D,\eta} =\true \text{ if and only if } \semtwovl{\theta_2 }_{D,\eta}= \true$. Thus, $\semsql{\theta_1 \wedge \theta_2}_{D,\eta} = \true \text{ if and only if }  \semtwovl{\theta_1 \wedge \theta_2}_{D,\eta}= \true$.
    \end{enumerate}
\end{enumerate}

\OMIT{
\subsection{Two-valued semantics in SQL queries}

Theorem~\ref{thm:trivtrans} talks about relational algebra; we now see what it means for SQL queries. We consider queries given by the grammar 
\begin{align*}
	Q \df \,
	 \mathbf{Q}\,\, \vline\,\,
	 Q \, \op \, Q
\end{align*}
with $R$ base relation, and $\op$ ranges over the keywords \sqlkw{UNION}, \sqlkw{UNION
ALL}, \sqlkw{EXCEPT}, \sqlkw{EXCEPT
ALL}, \sqlkw{INTERSECT}, \sqlkw{INTERSECT ALL} and $\mathbf{Q}$ given by 
\[\begin{array}{l}
\sqlkw{SELECT}	\,\, [\sqlkw{DISTINCT}]\ \ 
t_1\ \sqlkw{AS}\, \alpha_1, \cdots, t_m \ \sqlkw{AS}\, \alpha_m,\\
\hspace*{2.8cm}
 Q'_1 \ \sqlkw{AS}\, \beta_1 ,\cdots, Q'_{\ell} \ \sqlkw{AS}\, \beta_{\ell} \ \mid\ \mbox{\lstinline{*}} \\
\sqlkw{FROM}	\ \ Q_1 \ \sqlkw{AS}\ \gamma_1,\cdots,
Q_n\ \sqlkw{AS}\ \gamma_n \\
\mbox{[\sqlkw{WHERE}}\ \ \theta] \\
\mbox{[\sqlkw{GROUP
BY} \ $\xi_1,\cdots, \xi_k]$} \\ 
\mbox{[\sqlkw{HAVING}\ \ $\theta'$]} 
\end{array}
\]
where 
 $Q_1,\cdots, Q_n$, for $n>0$, are 
 base relations or 
 queries, and $Q'_1,\cdots, Q'_{\ell}$ for
$\ell\geq 0$ are queries returning singleton values (for example, they could be queries returning an aggregate value). Subqueries in 
\sqlkw{SELECT} are optional, and thus $\ell$ could be zero. Other optional parts within squared brackets.

The $\nulable{\cdot}$ function on SQL queries is as follows:
\begin{itemize}
	\item For a base relation $R$ the set $\nulable{R}$ consists
	of attributes in $\ell(R)$ which are not 
	declared as \sqlkw{NOT NULL} nor form a part of a 
       \sqlkw{PRIMARY KEY}.
\item
For the query $\mathbf{Q}$ above,
  \begin{itemize}
   \item $\alpha_i$, for $i \leq m$, is in 
 $\nulable{\mathbf Q}$ if the $t_i$ either explicitly mentions \NULL, or an attribute of the form $\gamma_j.A$ where $A$ is in 
   $\nulable{{Q}_j}$ for some $1\leq j\leq n$;
   \item $\beta_i$, for $i\leq \ell$, is  in 
 $\nulable{\mathbf Q}$ if the single column returned by $Q_i'$ is in  
 $\nulable{Q_i'}$.
  
   \item in the case of \sqlkw{SELECT} \lstinline{*}, the list $\nulable{\mathbf Q}$ contains all the $\gamma_j.A$ where $A$ is in $\nulable{Q_j}$, for $j\leq n$\footnote{We note in passing that in SQL, {\tt *} does not always expand to the explicit list of all attributes in \sqlkw{FROM}, but could be replaced by a constant instead if it occurs in an \sqlkw{EXISTS} subquery. To account for this, we assume without loss of generality that the transformation changing {\tt *} to a fixed constant has already been applied in such a case.}.
   \end{itemize}
\end{itemize}

	For the composite queries 
	$
	Q\df
Q_1 \op Q_2
$
we set $\nulable{Q}\df\nulable{Q_1}$. Note that $\nulable{Q}$ is 
a subset of attributes returned by $Q$. 
	
Next  we  adapt
the definition of being null-free from $\sqlra$ to SQL. 
%
A SQL query $Q$, given by the grammar above, is {\em null-free} if for
every subquery $Q'$ of $Q$ that 
has a 
condition of the form $\sqlkw{NOT} \,\theta$ in its \sqlkw{WHERE} or \sqlkw{HAVING}
clause, 
the 
following hold:
\begin{itemize}
	\item the constant $\NULL$ does not appear anywhere in
	$\theta$; 

\item for subconditions of $\tau$ of the form
$$\brt \, \omega \, \brt', \ \ \brt\, \sqlkw{IN}\, F, \ \ 
\brt\, \omega \, \sqlkw{ANY}\, (F),  \ \ \brt\, \omega \, \sqlkw{ALL} \,(F) $$
 where $\omega$ is a comparison operator (e.g.,
$\le$ or $=$) it holds that $(i)$ none of the names in $\brt,\brt'$ is allowed to be 
in $\nulable{R}$, where $R$ is any base relation or a subquery in
the \sqlkw{FROM} clause of $Q'$, and $(ii)$ the set $\nulable{F}$ is empty.
\end{itemize}

}

\section{Appendix for Section~\ref{sec:mvl}}
\OMIT{
\subsection{Proof of Theorem~\ref{thm:gr}}
\repeatresult{theorem}{\ref{thm:gr}}
{\thmgr}
We show that this theoren is a corollary of a stonger theorem

In Figure~\ref{fig:twosyneqtothree}, we present the translations from $\semtwovlsyneq{\,}$ to $\semsql{\,}$, and in Figure~\ref{fig:threetotwosyneq}, from $\semsql{\,}$ to $\semtwovlsyneq{\,}$.

\newsavebox{\twosyneqtothree}
\sbox{\twosyneqtothree}{%
	\parbox{0.77\textwidth}{%
\begin{center}
\textbf{Basic conditions}\\ $\;$ \\   
\begin{tabularx}{\textwidth}{rlrl}
$\tcon{\brt \,\omega\,  \brt'}$ &$\df {\displaystyle  \bigwedge_{i=1}^n \theta_i }$&$
\fcon{\brt  \,\omega\,  \brt'}$&$ \df {\displaystyle \bigvee_{i=1}^n}  \neg \theta_i
$ \\ 
\multicolumn{4}{l}{$\text{where
}  \omega\in \{ \eq,\le,\ge\},\,\brt\df(t_1,\cdots,t_n),\,\,\brt' \df(t'_1,\cdots,t'_n),\,\, \theta_i\df\big((\isnul(t_i)\wedge \isnul(t'_i) )\vee( t_i \,\omega\,
t'_i)\big) $}\\ $\;$ \\
$\tcon{\brt \in E}$&$ \df 
\neg \isempty \big(\sigma_{\theta}(G)\big)
$&$
\fcon{\brt\in E}$&$
\df
\isempty \big(\sigma_{\theta}(G)\big)$
\\ \multicolumn{4}{l}{$\text{
where } \ell(E)\df N, \text{ and }
\theta\df {\displaystyle \bigwedge_{i=1}^{n}} \Big((\isnul(t_i) \wedge \isnul(N_i) ) \vee
(N_i \eq t_i)
\Big)$}
\\
$\tcon{t\, \omega\, \any(E)}$ & $ \df 
\neg \isempty(\sigma_{ \theta } (G)) $
&
$\fcon{t\, \omega\, \any(E)}$&
$ \df 
\isempty(\sigma_{   \theta } (G)) 
$		
\\ \multicolumn{4}{l}{$\text{where } 	\omega\in \{ \eq,\le,\ge \} 
\text{ and }
\theta \df {\displaystyle \bigwedge_{i=1}^{n}}( (\isnul(t_i) \wedge \isnul(N_i) )\vee(t_i \,\omega\, N_i) )$
}\\
$			\tcon{t\, \omega\, \all(E)}$&$ \df 
\isempty(\sigma_{ \neg \theta } (G))
$&$
\fcon{t\, \omega\, \all(E)}$&$
\df
\neg \isempty(\sigma_{\neg \theta } (G))$
\\ \multicolumn{4}{l}{
$\text{where }\ell(E) \df N, \text{ and } \theta\df{\displaystyle \bigwedge_{i=1}^{n}}( (\isnul(t_i) \wedge \isnul(N_i) )\vee(t_i \,\omega\, N_i) ).$}
\end{tabularx}.
	\end{center}
}%
}%

\newcommand{\twosyneqtothreefig}{%
	\begin{figure*}[t]
		\centering
		\fbox{\usebox{\twosyneqtothree}}
		\caption{\mbox{$\semtwovlsyneq{\,}$}  to SQL semantics
			(new cases compared to Fig.~\ref{fig:trtf})}
		\label{fig:twosyneqtothree}
	\end{figure*}
}%
\twosyneqtothreefig

\newsavebox{\threetotwosyneq}
\sbox{\threetotwosyneq}{%
	\parbox{0.90\textwidth}{%
		\begin{center}
			\textbf{Basic conditions}\\  $\;$ \\ 
			\begin{tabularx}{\textwidth}{rlrl}
				$\tcon{\brt \eq \brt'}$& $\df
				{\displaystyle \bigwedge_{i=1}^n} (t_i \eq t'_i \wedge \neg \isnul(t_i) \wedge \neg \isnul (t'_i) )$  &	$\fcon{\brt \eq \brt'}$& 
				$\df  {\displaystyle \bigvee_{i=1}^n} (\neg t_i \eq t'_i \wedge \neg \isnul(t_i) \wedge \neg \isnul (t'_i) )
				$\\
				$	\tcon{\brt \in E}$&$ \df 
				\neg \isempty \big(\sigma_{\theta}(F)\big)
				$&$
				\fcon{\brt\in E}$&$
				\df
				\isempty \big(\sigma_{\theta}(F)\big)
				$\\ \multicolumn{4}{l}{$\text{
						where } \ell(E)\df (N_1,\cdots, N_n), \text{ and }
					\theta\df {\displaystyle \bigwedge_{i=1}^{n}} (\neg \isnul(t_i) \wedge \neg \isnul(N_i) \wedge \neg
					(N_i \eq t_i)
					)$}
				\\ $\;$ \\
				$\tcon{t\, \omega\, \any(E)}$&$ \df 
				\left
				\{
				\begin{matrix}
				\neg \isempty(\sigma_{ \theta } (F))  &  \omega\in \{ = \} \\ 
				\neg \isempty(\sigma_{ \neg \theta } (F))  &   \omega\in \{ \ne \}
				\end{matrix}
				\right.$
				&
				$\fcon{t\, \omega\, \any(E)}$&
				$ \df 
				\left
				\{
				\begin{matrix}
				\isempty(\sigma_{   \theta } (F))  &  \omega\in \{ = \} \\ 
				\isempty(\sigma_{  \neg \theta } (F))  &   \omega\in \{ \ne \} 
				\end{matrix}
				\right.	
				$		
				\\ $\;$\\
				$			\tcon{t\, \omega\, \all(E)}$&$ \df 
				\left
				\{
				\begin{matrix}
				\isempty(\sigma_{ \neg \theta } (F))  &  \omega\in \{ = \} \\ 
				\isempty(\sigma_{ \theta } (F))  &   \omega\in \{ \ne \}
				\end{matrix}
				\right.$
				&$
				\fcon{t\, \omega\, \all(E)}$&$
				\df
				\left
				\{
				\begin{matrix}
				\neg \isempty(\sigma_{\neg \theta } (F))  &  \omega\in \{ = \} \\ 
				\neg	\isempty(\sigma_{\theta } (F))  &   \omega\in \{ \ne \} 
				\end{matrix}
				\right.$
				\\ \multicolumn{4}{l}{
					$\text{where }\ell(E) \df N, \text{ and }\theta\df( \neg \isnul(N) \wedge \neg \isnul(t) \wedge  (N\,\omega \, t) ).$}
			\end{tabularx}
		\end{center}
	}%
}%

\newcommand{\threetotwosyneqfig}{%
	\begin{figure*}[t]
		\centering
		\fbox{\usebox{\threetotwosyneq}}
		\caption{\mbox{SQL to $\semtwovlsyneq{\,}$}  semantics
			(new cases compared to Fig.~\ref{fig:threetotwo})}
		\label{fig:threetotwosyneq}
	\end{figure*}
}%

\subsubsection*{$\semtwovlsyneq{\,}$ to $\semsql{\,}$}
We now proof the correctness of the construction in Figure~\ref{fig:twosyneqtothree}.
The proof is similar to that from $\semtwovl{\,}$ to $\semsql{\,}$ with the following changes in the induction basis.
\paragraph{Induction basis:}
\begin{itemize}
\item[$(b)$]
\begin{enumerate}
\item
For $\omega \in \{\eq, \le, \ge \}$, it holds that 
\[\semtwovlsyneq{\brt\,\omega\, \brt'}_{D,\eta} = \true\] 
if and only if (case analysis)
\[
\forall 1\le i \le n:  
(\semvl{t_i}_{\eta} = \NULL \wedge  \semvl{t'_i}_{\eta}= \NULL  
) \vee 
(\semvl{t_i}_{\eta} \,\omega\, \semvl{t'_i}_{\eta}  
) 
\]
if and only if (definition)
\[\semthreevl
{ \bigwedge_{i=1}^{n}\big( (\isnul(t_i) \wedge \isnul(t'_i) )\vee(t_i \,\omega\, t'_i) \big)
} = \true\]	
\OMIT{		
\item
It holds that 
\[\semtwovlsyneq{\brt \ne \brt'}_{D,\eta} = \true\] 
if and only if (case analysis)
\[
\exists 1\le i \le n:  
\big(	(\semvl{t_i}_{\eta} = \NULL \wedge  \semvl{t'_i}_{\eta}\ne \NULL  ) \vee 	(\semvl{t_i}_{\eta} \ne \NULL \wedge  \semvl{t'_i}_{\eta}= \NULL)
\vee 
\neg \semvl{t_i}_{\eta} =  \semvl{t'_i}_{\eta}  
\big) 
\]
if and only if (definition)
\[\semthreevl
{ \bigvee_{i=1}^{n}\big( (\isnul(t_i) \wedge \neg  \isnul(t'_i) )\vee
(\neg\isnul(t_i) \wedge  \isnul(t'_i) ) \vee \neg
(t_i \eq t'_i) \big)
	} = \true\]}	
\item 
It holds that 
\[\semtwovlsyneq{\brt\in E}_{D,\eta} = \true\] 
if and only if 
(by definition of the semantics of $\in$)
\[ \bigvee_{\brt'\in \semtwovlsyneq{E}_{D,\eta}} \semtwovlsyneq{\brt\eq \brt'}_{D,\eta} = \true\]
with $\vee $ interpreted by 2VL's truth tables,
if and only if 
(by definition of Kleene's $\vee$, induction hypothesis $(a)$ and previous items)
\[ \bigvee_{\brt'\in \semsql{G}_{D,\eta}} \semsql{
  \bigwedge_{i=1}^{n}\big( (\isnul(t_i) \wedge \isnul(t'_i) )\vee(t_i \eq t'_i) \big)
}_{D,\eta} = \true\]
with $\vee$ interpreted by Kleene's truth tables, if and only if (by definition)
\[\semvl{ \neg \isempty(\sigma_{  \bigwedge_{i=1}^{n}\big( (\isnul(t_i) \wedge \isnul(N_i) )\vee(t_i \eq N_i) \big)} (G)) }_{D,\eta} = \true\]
\item 
For $\omega\in \{\eq,\le,\ge \}$, it holds that 
\[\semtwovlsyneq{\brt \, \omega \, \any(E)}_{D,\eta} = \true\] 
if and only if 
(by definition of the semantics of $\any$)
\[ \bigvee_{\brt'\in \semtwovlsyneq{E}_{D,\eta}} \semtwovlsyneq{\brt\, \omega \,\brt'}_{D,\eta} = \true\]
with $\vee $ interpreted by 2VL's truth tables,
if and only if 
(by definition of Kleene's $\vee$, induction hypothesis $(a)$ and previous items)
\[ \bigvee_{\brt'\in \semsql{G}_{D,\eta}} \semsql{
\bigwedge_{i=1}^{n}( (\isnul(t_i) \wedge \isnul(t'_i) )\vee(t_i \,\omega\, t'_i) )
}_{D,\eta} = \true\]
with $\vee$ interpreted by Kleene's truth tables, if and only if (by definition)
\[\semvl{ \neg \isempty(\sigma_{\bigwedge_{i=1}^{n}( (\isnul(t_i) \wedge \isnul(N_i) )\vee(t_i \,\omega\, N_i) )} (G)) }_{D,\eta} = \true\] \OMIT{
\item 
It holds that 
\[\semtwovlsyneq{\brt \neq \any(E)}_{D,\eta} = \true\] 
if and only if 
(by definition of the semantics of $\any$)
\[ \bigvee_{\brt'\in \semtwovlsyneq{E}_{D,\eta}} \semtwovlsyneq{\brt\neq \brt'}_{D,\eta} = \true\]
with $\vee $ interpreted by 2VL's truth tables,
if and only if 
(by definition of Kleene's $\vee$, induction hypothesis $(a)$ and previous items)
\[ \bigvee_{\brt'\in \semsql{G}_{D,\eta}} \semsql{
\bigvee_{i=1}^{n}\big( (\isnul(t_i) \wedge \neg  \isnul(t'_i) )\vee
(\neg\isnul(t_i) \wedge  \isnul(t'_i) ) \vee \neg
(t_i \eq t'_i) \big)
}_{D,\eta} = \true\]
with $\vee$ interpreted by Kleene's truth tables, if and only if (by definition)
\[\semvl{ \neg \isempty(\sigma_{\bigvee_{i=1}^{n}\big( (\isnul(t_i) \wedge \neg  \isnul(t'_i) )\vee
(\neg\isnul(t_i) \wedge  \isnul(t'_i) ) \vee \neg
(t_i \eq t'_i) \big)
} (G)) }_{D,\eta} = \true\] 
}
\item
For $\omega\in\{\eq, \le,\ge\}$, it holds that
 \[\semtwovlsyneq{\brt \, \omega \, \all(E)}_{D,\eta} = \true\] 
 	if and only if 	(by definition of the semantics of $\all$) 
 	\[ \bigwedge_{\brt'\in \semtwovlsyneq{E}_{D,\eta}} \semtwovlsyneq{\brt\, \omega \,\brt'}_{D,\eta} = \true\]
with $\wedge$ interpreted by 2VL's truth tables,
if and only if 
(by definition of Kleene's $\wedge$, induction hypothesis $(a)$ and previous items)
\[ \bigwedge_{\brt'\in \semvl{G}_{D,\eta}} \semvl{
\bigwedge_{i=1}^{n}\big( (\isnul(t_i) \wedge \isnul(t'_i) )\vee(t_i \,\omega\, t'_i) \big)
}_{D,\eta} = \true\]
with $\wedge$ interpreted by Kleene's truth tables, if and only if (by definition)
\[\semvl{\isempty(\sigma_{\neg
\bigwedge_{i=1}^{n}\big( (\isnul(t_i) \wedge \isnul(t'_i) )\vee(t_i \,\omega\, t'_i) \big)
}(G))}_{D,\eta} = \true\] 
\OMIT{ 
\item
It holds that
 \[\semtwovlsyneq{\brt \neq \all(E)}_{D,\eta} = \true\] 
 	if and only if 	(by definition of the semantics of $\all$) 
 	\[ \bigwedge_{\brt'\in \semtwovlsyneq{E}_{D,\eta}} \semtwovlsyneq{\brt \neq\brt'}_{D,\eta} = \true\]
with $\wedge$ interpreted by 2VL's truth tables,
if and only if 
(by definition of Kleene's $\wedge$, induction hypothesis $(a)$ and previous items)
\[ \bigwedge_{\brt'\in \semvl{G}_{D,\eta}} \semvl{
 \bigvee_{i=1}^{n}\big( (\isnul(t_i) \wedge \neg  \isnul(t'_i) )\vee
(\neg\isnul(t_i) \wedge  \isnul(t'_i) ) \vee \neg
(t_i \eq t'_i) \big)
}_{D,\eta} = \true\]
with $\wedge$ interpreted by Kleene's truth tables, if and only if (by definition)
\[\semvl{\isempty(\sigma_{\neg
 \bigvee_{i=1}^{n}\big( (\isnul(t_i) \wedge \neg  \isnul(t'_i) )\vee
(\neg\isnul(t_i) \wedge  \isnul(t'_i) ) \vee \neg
(t_i \eq t'_i) \big)
}(G))}_{D,\eta} = \true\] 
}
\end{enumerate}
	\item[$(c)$]
\begin{enumerate}
\item
For $\omega \in \{\eq, \le, \ge \}$, it holds that 
\[\semtwovlsyneq{\brt\,\omega\, \brt'}_{D,\eta} = \false\] 
if and only if (case analysis)
\[
\exists 1\le i \le n:  
((\semvl{t_i}_{\eta} = \NULL \wedge  \semvl{t'_i}_{\eta}\ne \NULL  
) \vee 
(\semvl{t_i}_{\eta} \ne \NULL \wedge  \semvl{t'_i}_{\eta}= \NULL
)
\vee \neg
(\semvl{t_i}_{\eta} \,\omega\, \semvl{t'_i}_{\eta}  
) )
\]
if and only if (definition)
\[\semthreevl
{ \bigvee_{i=1}^{n}\big( (\isnul(t_i) \wedge\neg \isnul(t'_i) )\vee
(\neg\isnul(t_i) \wedge \isnul(t'_i) ) \vee
\neg (t_i \,\omega\, t'_i) \big)
} = \true\]	

\item 
It holds that 
\[\semtwovlsyneq{\brt\in E}_{D,\eta} = \false\] 
if and only if 
(by definition of the semantics of $\in$)
\[ \bigvee_{\brt'\in \semtwovlsyneq{E}_{D,\eta}} \semtwovlsyneq{\brt\eq \brt'}_{D,\eta} = \false\]
with $\vee $ interpreted by 2VL's truth tables,
if and only if 
(by definition of Kleene's $\vee$, induction hypothesis $(a)$ and previous items)
\[ \bigwedge_{\brt'\in \semsql{G}_{D,\eta}} \semsql{
  \bigvee_{i=1}^{n}\big( (\isnul(t_i) \wedge\neg \isnul(t'_i) )\vee
(\neg\isnul(t_i) \wedge \isnul(t'_i) ) \vee
\neg (t_i \,\omega\, t'_i) \big)
}_{D,\eta} = \true\]
with $\wedge$ interpreted by Kleene's truth tables, if and only if (by definition)
\[\semvl{  \isempty(\sigma_{ \neg 
\bigvee_{i=1}^{n}\big( (\isnul(t_i) \wedge\neg \isnul(t'_i) )\vee
(\neg\isnul(t_i) \wedge \isnul(t'_i) ) \vee
\neg (t_i \eq t'_i) \big)
} (G)) }_{D,\eta} = \true\]

\item 
For $\omega\in \{\eq,\le,\ge \}$, it holds that 
\[\semtwovlsyneq{\brt \, \omega \, \any(E)}_{D,\eta} = \false\] 
if and only if 
(by definition of the semantics of $\any$)
\[ \bigvee_{\brt'\in \semtwovlsyneq{E}_{D,\eta}} \semtwovlsyneq{\brt\, \omega \,\brt'}_{D,\eta} = \false\]
with $\vee $ interpreted by 2VL's truth tables,
if and only if 
(by definition of Kleene's $\vee$, induction hypothesis $(a)$ and previous items)
\[ \bigwedge_{\brt'\in \semsql{G}_{D,\eta}} \semsql{
\bigvee_{i=1}^{n}\big( (\isnul(t_i) \wedge\neg \isnul(t'_i) )\vee
(\neg\isnul(t_i) \wedge \isnul(t'_i) ) \vee
\neg (t_i \,\omega\, t'_i) \big)}_{D,\eta} = \true\]
with $\vee$ interpreted by Kleene's truth tables, if and only if (by definition)
\[\semvl{  \isempty(\sigma_{
\neg 
\bigvee_{i=1}^{n}\big( (\isnul(t_i) \wedge\neg \isnul(t'_i) )\vee
(\neg\isnul(t_i) \wedge \isnul(t'_i) ) \vee
\neg (t_i \,\omega\, t'_i) \big)
} (G)) }_{D,\eta} = \true\] 
\item
For $\omega\in\{\eq, \le,\ge\}$, it holds that
 \[\semtwovlsyneq{\brt \, \omega \, \all(E)}_{D,\eta} = \false\] 
 	if and only if 	(by definition of the semantics of $\all$) 
 	\[ \bigwedge_{\brt'\in \semtwovlsyneq{E}_{D,\eta}} \semtwovlsyneq{\brt\, \omega \,\brt'}_{D,\eta} = \false\]
with $\wedge$ interpreted by 2VL's truth tables,
if and only if 
(by definition of Kleene's $\wedge$, induction hypothesis $(a)$ and previous items)
\[ \bigvee_{\brt'\in \semvl{G}_{D,\eta}} \semvl{
\bigvee_{i=1}^{n}\big( (\isnul(t_i) \wedge\neg \isnul(t'_i) )\vee
(\neg\isnul(t_i) \wedge \isnul(t'_i) ) \vee
\neg (t_i \,\omega\, t'_i) \big)
}_{D,\eta} = \true\]
with $\vee$ interpreted by Kleene's truth tables, if and only if (by definition)
\[\semvl{\neg \isempty(\sigma_{
\bigvee_{i=1}^{n}\big( (\isnul(t_i) \wedge\neg \isnul(t'_i) )\vee
(\neg\isnul(t_i) \wedge \isnul(t'_i) ) \vee
\neg (t_i \,\omega\, t'_i) \big)
}(G))}_{D,\eta} = \true\]

\end{enumerate}
\end{itemize}

\subsubsection*{$\semsql{\,}$ to $\semtwovlsyneq{\,}$}
\threetotwosyneqfig
\paragraph{Induction basis:}
\begin{itemize}
	\item[$(b)$]
	\begin{enumerate}
	\item
	For $\omega \in \{\eq, \le, \ge\}$, it holds that 
	$\semthreevl{\brt\,\omega\, \brt'}_{D,\eta} = \true$ 	
	if and only if (by definition)  
	\[\forall 1\le i \le n : (\semvl{t_i}_{\eta} = \semvl{t'_i}_{\eta} \wedge \semvl{t_i}_{\eta} \ne \NULL \wedge \semvl{t'_i}_{\eta}\ne \NULL)\] 
	if and only if (by definition)  
	\[\semtwovlsyneq{\bigwedge_{i=1}^n 
		(t_i \eq t'_i \wedge \neg \isnul(t_i) \wedge \neg \isnul(t'_i) )} = \true\]	
	\item[$6.$] 
	It holds that 
	$\semthreevl{\brt\in E}_{D,\eta} = \true$ 	if and only if (by definition)  
	\[ \exists \brt'\in \semthreevl{E}_{D,\eta}: \forall 1\le i \le n: (\semvl{\brt'_i}_{\eta} = \semvl{\brt_i}_{\eta} \wedge \semvl{t_i}_{\eta}\ne \NULL \wedge \semvl{t'_i}_{\eta}\ne \NULL) \]
	if and only if (by induction hypothesis)
	\[ \exists \brt'\in \semthreevl{F}_{D,\eta}: \forall 1\le i \le n: (\semvl{t'_i}_{\eta} = \semvl{t_i}_{\eta} \wedge \semvl{t_i}_{\eta}\ne \NULL \wedge \semvl{t'_i}_{\eta}\ne \NULL) \]
if and only if (by definition) 
	\[\semtwovlsyneq{
		\neg \isempty(\sigma_{{N_i} \eq {t_i} \neg 
			\isnul (t_i) \wedge \neg 
			\isnul (N_i)} (F))
		} = \true	\]
		\item[$8.$]
		We divide the proof into two cases:
		\begin{itemize}
			\item If $\omega \in \{= \}$ then
			it holds that  
			$\semthreevl{t \, = \, \any(E)}_{D,\eta} = \true$ 
			if and only if (by definition)
		\[	 \exists t'\in\semthreevl{E}_{D,\eta}: \semvl{t}_{\eta} = \semvl{t'}_{\eta} \wedge \semvl{t}_{\eta}\ne \NULL \wedge \semvl{t'}_{\eta}\ne \NULL
		\] 
	if and only if  (by induction hypothesis)
	\[	 \exists t'\in\semthreevl{E}_{D,\eta}: \semvl{t}_{\eta} = \semvl{t'}_{\eta} \wedge \semvl{t}_{\eta}\ne \NULL \wedge \semvl{t'}_{\eta}\ne \NULL
	\] 
	if and only if (by definition) \[
	\semtwovlsyneq
	{
		\neg \isempty
		(\sigma_{
		t\eq N \wedge \neg \isnul(t) 
		\wedge \neg \isnul(N) 
			}(F) )
		} = \true\]
	\item If $\omega \in \{\ne \}$ then 
	it holds that $\semthreevl{t \,\ne \, \any(E)}_{D,\eta} = \true$ 
		if and only if (by definition)
	 \[ \exists t'\in \semthreevl{E}_{D,\eta} : \semvl{t}_{\eta} \ne \semvl{t'}_{\eta} \vee \semvl{t}_{\eta}= \NULL \vee \semvl{t'}_{\eta} = \NULL \]
	  	if and only if (by induction hypothesis)
	 \[ \exists t'\in \semthreevl{F}_{D,\eta} : \semvl{t}_{\eta} \ne \semvl{t'}_{\eta} \vee \semvl{t}_{\eta}= \NULL \vee \semvl{t'}_{\eta} = \NULL \]
	 		if and only if (by definition)
\[	\semtwovlsyneq
{ \neg \isempty(\sigma_{
		\neg ( \neg \isnul (N) \wedge \neg \isnul(t) \wedge (t\eq N ))
		} (F)) }
 = \true\]
\end{itemize}
\item[$9.$]
We divide the proof into two cases:
\begin{itemize}
	\item If $\omega \in \{= \}$ then it holds that
	$\semthreevl{t \, = \, \all(E)}_{D,\eta} = \true$ if and only if (by definition) 
\[	\forall t'\in \semthreevl{E}_{D,\eta}: \big( \semvl{t}_{\eta} = \semvl{t'}_{\eta} \wedge \semvl{t}_{\eta} \ne \NULL \wedge \semvl{t'}_{\eta}\ne \NULL\big)
\] 
if and only if (by induction hypothesis)
\[	\forall t'\in \semthreevl{F}_{D,\eta}: \big( \semvl{t}_{\eta} = \semvl{t'}_{\eta} \wedge \semvl{t}_{\eta} \ne \NULL \wedge \semvl{t'}_{\eta}\ne \NULL\big)
\]   if and only if (by definition)  \[\semtwovlsyneq{ 
	\sigma_{
\neg\big(  \neg \isnul(N) \wedge \neg \isnul(t) \wedge (N\eq t)  \big)		
		}(F)
	} = \true\]
	\item 
	If $\omega \in \{\ne \}$ then it holds that 
	$\semthreevl{t \,\ne \, \all(E)}_{D,\eta} = \true$ if and only if (by definition)   
\[	 \forall t'\in \semthreevl{E}_{D,\eta}: \semvl{t}_{\eta} \ne \semvl{t'}_{\eta} \wedge \semvl{t}_{\eta} \ne \NULL \wedge  \semvl{t'}_{\eta} \ne \NULL \]
if and only if (by induction hypothesis)
\[	 \forall t'\in \semthreevl{F}_{D,\eta}: \semvl{t}_{\eta} \ne \semvl{t'}_{\eta} \wedge \semvl{t}_{\eta} \ne \NULL \wedge  \semvl{t'}_{\eta} \ne \NULL \]
if and only if (by definition)   
\[\semtwovlsyneq{ 
	\sigma_{
	  \neg \isnul(N) \wedge \neg \isnul(t) \wedge (N\eq t) 	
	}(F)
} = \true\]
\end{itemize}
\end{enumerate}
\item[$(c)$]
\begin{itemize}
\item[$4.$]
It holds that 
\[\semthreevl{\brt = \brt'}_{D,\eta} = \false\] 
if and only if (by definition)   
\[ \exists 1\le i \le n : \semvl{t_i}_{\eta} \ne \semvl{t'_i}_{\eta} \wedge \semvl{t_i}_{\eta}\ne \NULL \wedge \semvl{t'_i}_{\eta} \ne \NULL \]
if and only if (by definition)   
$\semtwovlsyneq{
	\bigvee_{i=1}^{n} (\neg t_i\eq t'_i \wedge \neg \isnul(t_i) \wedge \neg \isnul(t'_i) )
	}_{D,\eta} = \true$.	
\item[$6.$] It holds that
\[\semthreevl{\brt\in E}_{D,\eta} = \false\] 
if and only if (by definition)    
\[ \neg \exists \brt'\in \semthreevl{E}_{D,\eta}: \forall 1\le i \le n:
\semvl{\brt'_i}_{\eta} = \semvl{\brt_i}_{\eta} \wedge \semvl{t_i}_{\eta}\ne \NULL \wedge \semvl{t'_i}_{\eta} \ne \NULL\] 
if and only if (by induction hypothesis)
\[ \neg \exists \brt'\in \semthreevl{F}_{D,\eta}: \forall 1\le i \le n:
\semvl{\brt'_i}_{\eta} = \semvl{\brt_i}_{\eta} \wedge \semvl{t_i}_{\eta}\ne \NULL \wedge \semvl{t'_i}_{\eta} \ne \NULL\] 
if and only if (by definition)    
\[\semtwovlsyneq{
	\isempty(\sigma_{ \bigwedge_{i=1}^n(
		{t'_i} \eq t_i \wedge \neg \isnul(t_i) \wedge   \neg \isnul(N_i))}(F)
	}_{D,\eta} = \true\]
\item[$8.$]
We divide the proof into two cases:
\begin{itemize}
	\item If $\omega \in \{= \}$ then it holds that
	\[\semthreevl{t \, = \, \any(E)}_{D,\eta} = \false\] 
	 if and only if (by definition)    
\[	 \neg \exists t'\in\semthreevl{E}_{D,\eta}: (\semvl{t}_{\eta} = \semvl{t'}_{\eta} \wedge   \semvl{t}_{\eta}\ne \NULL \wedge \semvl{t'}_{\eta} \ne \NULL)
\] 
if and only if (by induction hypothesis)
\[	 \neg \exists t'\in\semthreevl{F}_{D,\eta}: (\semvl{t}_{\eta} = \semvl{t'}_{\eta} \wedge   \semvl{t}_{\eta}\ne \NULL \wedge \semvl{t'}_{\eta} \ne \NULL)
\] 
	 if and only if (by definition)    
		\[\semtwovlsyneq{
		\isempty(\sigma_{ t\eq N \wedge \neg \isnul(t) \wedge \neg \isnul(N) }(F))
			}_{D,\eta} = \true\] 
	\item If $\omega \in \{\ne \}$ then it holds that
	\[\semthreevl{t \, \ne \, \any(E)}_{D,\eta} = \false\] 
	if and only if (by definition)    
	\[	 \exists t'\in\semthreevl{E}_{D,\eta}: \neg (\semvl{t}_{\eta} \ne \semvl{t'}_{\eta} \wedge   \semvl{t}_{\eta}\ne \NULL \wedge \semvl{t'}_{\eta} \ne \NULL)
	\] 
	if and only if (by induction hypothesis)
	\[	 \exists t'\in\semthreevl{F}_{D,\eta}: \neg (\semvl{t}_{\eta} \ne \semvl{t'}_{\eta} \wedge   \semvl{t}_{\eta}\ne \NULL \wedge \semvl{t'}_{\eta} \ne \NULL)
	\] 
	if and only if (by definition)    
	\[\semtwovlsyneq{ \isempty
	(	\sigma_{ t \ne N \wedge \neg \isnul(t) \wedge \neg \isnul(N) }(F))
	}_{D,\eta} = \true\] 
\end{itemize}
\item[$9.$]
We divide the proof into two cases:
\begin{itemize}
	\item If $\omega \in \{= \}$ then it holds that
	\[\semthreevl{t \, = \, \all(E)}_{D,\eta} = \false\] 
	if and only if (by definition)    
	\[	  \exists t'\in\semthreevl{E}_{D,\eta}: \neg (\semvl{t}_{\eta} = \semvl{t'}_{\eta} \wedge   \semvl{t}_{\eta}\ne \NULL \wedge \semvl{t'}_{\eta} \ne \NULL)
	\] 
	if and only if (by induction hypothesis)
	\[	  \exists t'\in\semthreevl{E}_{D,\eta}: \neg (\semvl{t}_{\eta} = \semvl{t'}_{\eta} \wedge   \semvl{t}_{\eta}\ne \NULL \wedge \semvl{t'}_{\eta} \ne \NULL)
	\] 
	if and only if (by definition)    
	\[\semtwovlsyneq{ 
	\neg \isempty(\sigma_{\neg  (t\eq N \wedge \neg \isnul(t) \wedge \neg \isnul(N)) }(F))
	}_{D,\eta} = \true\] 
	\item If $\omega \in \{\ne \}$ then it holds that
	\[\semthreevl{t \, \ne \, \all(E)}_{D,\eta} = \false\] 
	if and only if (by definition)    
	\[	 \exists t'\in\semthreevl{E}_{D,\eta}: (\semvl{t}_{\eta} \ne \semvl{t'}_{\eta} \wedge   \semvl{t}_{\eta}\ne \NULL \wedge \semvl{t'}_{\eta} \ne \NULL)
	\] 
	if and only if (by induction hypothesis)
	\[	 \exists t'\in\semthreevl{F}_{D,\eta}: (\semvl{t}_{\eta} \ne \semvl{t'}_{\eta} \wedge   \semvl{t}_{\eta}\ne \NULL \wedge \semvl{t'}_{\eta} \ne \NULL)
	\]
	if and only if (by definition)    
	\[\semtwovlsyneq{
		\neg \isempty (\sigma_{ t \ne N \wedge \neg \isnul(t) \wedge \neg \isnul(N) }(F))
	}_{D,\eta} = \true\]
\end{itemize} 
\end{itemize}
\end{itemize}

\begin{example}
\label{synteq-ex}
In this example we return to queries $Q_3$ and $Q_4$ from the
introduction, that were proven to be identical by the HoTTSQL
prover \cite{hottsql} while on databases with nulls they give
different results. Now assume that these queries are written under the
syntactic equality 
$\semtwovlsyneq{\,}$ semantics. In this case it is actually easy to
show that they are equivalent, on all databases. Next we look how to
translate $Q_3$ and $Q_4$ from the syntactic equality semantics to
the usual SQL semantics. Query $Q_4$ 
remains the same while $Q_3$ changes to 
\[
Q'_3\df \epsilon\Big(\pi_{X.A}\big(\sigma_{(X.A=Y.A) \vee (\isnul(X.A)\wedge \isnul(Y.A)) }\\\big(\rho_{R.A\to X.A}(R) \times 
\rho_{R.A\to Y.A}(R)\big)\big)\Big)
\]
or in SQL:

\begin{sql}
SELECT DISTINCT X.A FROM R X, R Y
WHERE X.A=Y.A OR (X.A IS NULL AND Y.A IS NULL)
\end{sql}
\OMIT{
\smallskip
\noindent
\begin{tabular}{ll}\label{sql-two-omit-two}
	(Q'_3): \hspace*{4mm} & 
	\sqlkw{SELECT DISTINCT} \mbox{\lstinline{X.A}} \\
	& \sqlkw{FROM} \mbox{\lstinline{R X, R Y}} \sqlkw{WHERE} \mbox{\lstinline{X.A=Y.A}} \sqlkw{OR}\\& \mbox{\lstinline{(X.A}} 
	\sqlkw{IS NULL AND}
	\mbox{\lstinline{Y.A}} 
	\sqlkw{IS NULL)}
\end{tabular}
\medskip
}

This tells us therefore what is needed to achieve equivalence of these
queries under SQ: semantics.

Next we reverse the situation and assume that $Q_3$ and $Q_4$ are
interpreted under SQL semantics, and see how to express them
under the syntactic equality semantics. Again, $Q_4$ does not
change, and $Q_3$ becomes
\[	
Q''_3\df \epsilon\Big(\pi_{X.A}\big(\sigma_{(X.A=Y.A)\wedge \neg \isnul(X.A) \wedge \neg \isnul(Y.A) 
}\big(\pi_{R.A[\to X.A]}(R) \times 
\pi_{R.A[\to Y.A]}(R)\big)\big)\Big)
\]
or in SQL:

\begin{sql}
SELECT DISTINCT X.A FROM R X, R Y
WHERE X.A=Y.A AND (X.A, Y.A) IS NOT NULL
\end{sql}	
\OMIT{
In SQL
		
		\medskip
		\noindent
		\begin{tabular}{ll}\label{sql-two-omit-one}
			(Q'_3): \hspace*{4mm} & 
			\sqlkw{SELECT DISTINCT} \mbox{\lstinline{X.A}} \\
			& \sqlkw{FROM} \mbox{\lstinline{R X, R Y}} \sqlkw{WHERE} \mbox{\lstinline{X.A=Y.A}} \sqlkw{AND}\\& \mbox{\lstinline{X.A}} 
			\sqlkw{IS NOT NULL AND}
			\mbox{\lstinline{Y.A}}
			\sqlkw{IS NOT NULL}
		\end{tabular}
		\medskip
		
		The same translation does not change $Q_4$.
		???	This is to exemplify that a user can use this direction of the
		translation to check
		whether queries are really equivalent.
	}
	Note that in this case the use of a two-valued semantics makes it
	immediately clear that queries $Q_3$ and $Q_4$ are not equivalent
	under SQL semantics. Indeed, the non-equivalence of $Q_3''$ and $Q_4$
	is very easy to see for anyone familiar with SQL. 
\end{example}

}

\newsavebox{\sqlq}
{\sbox{\sqlq}{%
	\parbox{\columnwidth}{%
		\begin{align*}
		Q_{\theta}&\df
		\sqlkw{SELECT} 	\,\,\alpha	\,\,
		\sqlkw{FROM}	\,\, R	\,\,
		\sqlkw{WHERE}	\,\, \theta \\
		Q_{\mathrm{not\_in}}&\df 
		\begin{multlined}[t]
		\sqlkw{SELECT} 
		\,\,	\mbox{\lstinline{*}}\,\,
		\sqlkw{FROM}\, \, R \,\,
		\sqlkw{WHERE} \,\,\ell(R)\,\, 
		\sqlkw{NOT IN}\\
		(	\sqlkw{SELECT}\,\, \mbox{\lstinline{*}}\,\,
		\sqlkw{FROM}\,\, S)
		\end{multlined}\\
		Q_{\mathrm{empty},\omega}&\df
		\begin{multlined}[t]
		\sqlkw{SELECT} \,\,		\mbox{\lstinline{*}}\,\,
		\sqlkw{FROM}\,\, R\,\,
		\sqlkw{WHERE} \,\,
		\sqlkw{NOT EXISTS}\\
		(	\sqlkw{SELECT}\,\, 		\mbox{\lstinline{*}}
		\,\,\sqlkw{FROM} \,\,S\,\,
		\sqlkw{WHERE}\,\, \ell(R) \,\omega \,\ell(S))
		\end{multlined}\\
		Q_{\mathrm{not}\_\any, \omega}& \df
		\begin{multlined}[t]
		\sqlkw{SELECT} \,\,
		\mbox{\lstinline{*}} \,\,
		\sqlkw{FROM} \,\,R\,\,
		\sqlkw{WHERE NOT}\,\, 
		\ell(R)\, \omega \\
		\sqlkw{ANY}\,\,
		(	\sqlkw{SELECT}\,\, \mbox{\lstinline{*}}\,\,
		\sqlkw{FROM}\,\, S)
		\end{multlined}\\
		Q_{\all,\omega} &\df
		\begin{multlined}[t]
		\sqlkw{SELECT} \,\,
		\mbox{\lstinline{*}}\,\,
		\sqlkw{FROM} \,\,R
		\,\,
		\sqlkw{WHERE} \,\,\ell(R)\, \omega\\ 
		\sqlkw{ALL}
		(	\sqlkw{SELECT}\,\, \mbox{\lstinline{*}}\,\,
		\sqlkw{FROM} \,\,S)
		\end{multlined}\\
		Q_{\mathrm{empty},\mathrm{not}\, \omega}
		&\df\!
		\begin{multlined}[t]
		\sqlkw{SELECT} 	\,\,	\mbox{\lstinline{*}}\,\,
		\sqlkw{FROM} \,\,R\,\,
		\sqlkw{WHERE} \,\,
		\sqlkw{NOT EXISTS}\\
		(	\sqlkw{SELECT} 	\,\,	\mbox{\lstinline{*}}
		\,\,\sqlkw{FROM} \,\,S
		\,\,		\sqlkw{WHERE}
		\\
		\sqlkw{NOT} \,\, \ell(R) \,\omega\, \ell(S) )
		\end{multlined}
		\end{align*}
	}
}
}
\sbox{\sqlq}{%
		\parbox{0.95\columnwidth}{%
\begin{align*}
\!\!\!\!\begin{array}{lr}
\sqlkw{SELECT}\!\!\! &\alpha\\
\sqlkw{FROM}   &R	\\
\sqlkw{WHERE}  &\theta \\
\end{array}
 =
\left(\!\!\begin{array}{lr}
\sqlkw{SELECT} \!\!\!	&\alpha	\\
\sqlkw{FROM}	& R	\\
\end{array}\!\!\right)
\hspace{-1pt}
\begin{array}{c}
\sqlkw{EXCEPT}\\
\sqlkw{ALL}
\end{array}
\hspace{-1pt}
\left(\!\!\begin{array}{lr}
\sqlkw{SELECT}\!\!\! &\alpha \\
\sqlkw{FROM}   &R \\
\sqlkw{WHERE}  &\neg \theta \\
\end{array}\!\!\right)
\end{align*}\!\!\!\!
\begin{align*}
	\begin{aligned}
	&\sqlkw{SELECT} \ \alpha\
	\sqlkw{FROM}\, \, R\\
	&\sqlkw{WHERE}\,\,\beta\,\, 
	\sqlkw{NOT IN}\\
	& \hspace{17pt}
	(\sqlkw{SELECT}\ \gamma\ 
	\sqlkw{FROM}\,\, S)
	\end{aligned}
&\,\,=\,\,
	\begin{aligned}
&\sqlkw{SELECT}\ \alpha\ 
\sqlkw{FROM}\,\, R
\\&\sqlkw{WHERE }
\sqlkw{NOT }
\sqlkw{EXISTS}\\
&\hspace{17pt}(\sqlkw{SELECT}\,\, 		\mbox{\lstinline{*}}
\,\,\sqlkw{FROM} \,\,S\\
&\hspace{17pt}\sqlkw{WHERE}\ \beta=\gamma)
\end{aligned}\\
\\
\begin{aligned}
&\sqlkw{SELECT} \ \alpha\ 
\sqlkw{FROM} \,\,R\\
&\sqlkw{WHERE }
\sqlkw{NOT}\,\, 
\beta\ \omega\
\sqlkw{ANY}\\
&\hspace{17pt} (\sqlkw{SELECT}\ \gamma\ 
\sqlkw{FROM}\,\, S)
\end{aligned}
&\,\,\,\,=\,\,\,\,
\begin{aligned}
&\sqlkw{SELECT}\ \alpha\
\sqlkw{FROM}\,\, R\\
&\sqlkw{WHERE} \,\,
\sqlkw{NOT EXISTS}\\
&\hspace{17pt}(\sqlkw{SELECT}\,\, 		\mbox{\lstinline{*}}
\,\,\sqlkw{FROM} \,\,S\\
&\hspace{17pt}\sqlkw{WHERE}\,\, \beta\ \omega \ \gamma)
\end{aligned}\\
\\
\begin{aligned}
&\sqlkw{SELECT} \ \alpha\ 
\sqlkw{FROM} \,\,R
\\&
\sqlkw{WHERE} \ \beta\ \omega\,
\sqlkw{ALL}\\
&\hspace{17pt}(\sqlkw{SELECT}\ \gamma\
\sqlkw{FROM} \,\,S)
\end{aligned}
&\,\,\,\,=\,\,\,\,
\begin{aligned}
	&\sqlkw{SELECT}\ \alpha\
	\sqlkw{FROM} \,\,R\\
&\sqlkw{WHERE} \,\,
	\sqlkw{NOT EXISTS}\\
	&\hspace{17pt} (\sqlkw{SELECT} 	\,\,	\mbox{\lstinline{*}}
	\,\,\sqlkw{FROM} \,\,S
	\\&	\hspace{17pt}	\sqlkw{WHERE }
	\sqlkw{NOT} \ \beta\ \omega\ \gamma)
\end{aligned}
\end{align*}
		}
	}

\newcommand{\figsqlq}
{\centering
	\begin{figure}[t]
	\fbox{{\usebox{\sqlq}}}
	\caption{	\label{fig:sqlq} SQL queries equivalent under
	the two-valued semantics but not SQL's three-valued
	semantics ($\alpha,\beta,\gamma$ are attribute lists and
	$\omega$ is a comparison)}
	\end{figure}
}
{\figsqlq}

\subsection{Proof of Theorem~\ref{thm:mvl}}

\repeatresult{theorem}{\ref{thm:mvl}}{ \thmmvl}

Before presenting the translations, we make some observations.
Let us denote the truth values of any MVL by $\tau_1, \cdots , \tau_M$. 

\begin{lemma}\label{lem:per}
For any truth value $\tau_i$ there exists integers $l^{\vee}_i,p^{\vee}_i, l^{\wedge}_i, p^{\wedge}_i$ such that 
the following holds for any $j$:
\begin{itemize}
	\item $\underbrace{\tau_i \vee \cdots \vee \tau_i}_{j\text{ times}} = 
	\underbrace{\tau_i \vee \cdots \vee \tau_i}_{l_i+\Big( (j-l^{\vee}_i) \mod (p^{\vee}_i  - l^{\vee}_i)  \Big) 
		\text{ times}}
	$
	\item
	$\underbrace{\tau_i \wedge \cdots \wedge \tau_i}_{j\text{ times}} = 
	\underbrace{\tau_i \wedge \cdots \wedge \tau_i}_{l_i+\Big( (j-l^{\wedge}_i) \mod (p^{\wedge}_i  - l^{\wedge}_i)  \Big) 
		\text{ times}}
	$
\end{itemize}
\end{lemma}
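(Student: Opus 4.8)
The plan is to recognize that each of the two sequences appearing in the lemma is the forward orbit of a single point under a fixed unary map on the finite set $\TV$, and that any such orbit is eventually periodic---the classical ``tail followed by a cycle'' shape. Only associativity is needed here (commutativity is automatic, since all the operands coincide).

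First I would fix $\tau_i$ and, for the disjunctive case, define $a_j \df \underbrace{\tau_i \vee \cdots \vee \tau_i}_{j\text{ times}}$ for $j \ge 1$. Associativity makes $a_j$ well defined and yields the recurrence $a_{j+1} = a_j \vee \tau_i$. Thus, setting $f \colon \TV \to \TV$ by $f(x) \df x \vee \tau_i$, we have $a_j = f^{\,j-1}(\tau_i)$, so $(a_j)_{j \ge 1}$ is exactly the orbit of $\tau_i$ under iterating $f$.

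Next I would invoke finiteness. Since $|\TV| = M$, the $M+1$ values $a_1, \dots, a_{M+1}$ cannot all be distinct, so there are indices $s < t \le M+1$ with $a_s = a_t$. Because $a_{j+1} = f(a_j)$, applying $f$ repeatedly gives $a_{s+k} = a_{t+k}$ for every $k \ge 0$ by a trivial induction on $k$; that is, the sequence is periodic with period $t - s$ once the index reaches $s$. I would then set $l_i^\vee \df s$ and $p_i^\vee \df t$. For $j \ge l_i^\vee$, writing $r \df (j - l_i^\vee) \bmod (p_i^\vee - l_i^\vee)$ and peeling off full periods gives $a_j = a_{l_i^\vee + r}$, which is precisely the claimed identity. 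The conjunctive case is identical with $g(x) \df x \wedge \tau_i$ in place of $f$, producing $l_i^\wedge$ and $p_i^\wedge$.

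There is no genuinely hard step; the only point requiring care is the range of $j$. The modular representative $l_i^\vee + \big((j - l_i^\vee) \bmod (p_i^\vee - l_i^\vee)\big)$ reproduces $a_j$ only once $j$ has entered the periodic tail, i.e.\ for $j \ge l_i^\vee$; the finitely many smaller multiplicities are kept as is (equivalently, one may take $l_i^\vee = 1$ whenever the orbit is purely periodic). This is exactly what the later use of the lemma requires: it only needs to collapse an arbitrarily large multiplicity $j$---the kind produced by the \sqlkw{COUNT} aggregate---to one of finitely many canonical representatives, whose truth value is then recoverable from $j$ by modular arithmetic.
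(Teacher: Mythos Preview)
Your proof is correct and follows essentially the same approach as the paper's: both use pigeonhole on the finite set $\TV$ to locate a repeated value in the sequence $(\tau_i^j)_j$, then invoke associativity of $\vee$ (resp.\ $\wedge$) to conclude eventual periodicity with the stated tail length and period. Your orbit-under-$f(x)=x\vee\tau_i$ framing makes the induction step explicit, and your caveat about $j < l_i^\vee$ is a legitimate observation that the paper's statement glosses over but which, as you note, is irrelevant for the intended application.
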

\begin{proof}
	We start by proving the existence of $l_i^{\vee},p_i^{\vee}$.
	For any $i,j$ we denote by $\tau_i^j$ the disjunction 
	\[
	\underbrace{\tau_i \vee \cdots \vee \tau_i}_{j\text{ times}}.
	\]
Since we have finite number of truth values,
 the pigeonhole principle ensures that for any truth value $\tau_i$ there are two natural numbers $l < p $ such that the following holds:
\begin{itemize}
	\item $\tau^{j'} \ne \tau^j$ for any $1\le j' < j < p$, and
	\item
	 $\tau^{l}= \tau^{p}$.
\end{itemize}

Due to associativity of $\vee$, it follows that the claim holds with 
$l_{i}^{\vee} \df l$   and  $p_{i}^{\vee}\df p$.

Similar arguments show the existence of $l_i^{\wedge}$ and $p_i^{\wedge}$.
\end{proof}

\rm 
Since the operator $\vee$ is both associative and commutative,  any disjunction can be written as follows:
\[
 \underbrace{\tau_1 \vee \cdots \vee \tau_1}_{i_1\text{ times}}
 \vee 
 \cdots
 \vee 
  \underbrace{\tau_M \vee \cdots \vee \tau_M}_{i_M\text{ times}}
\]
Alternatively, we can denote disjunctions
 by $f_{\vee}(i_1, \cdots , i_M)$ where each $i_j$ is the number of times the truth value $\tau_j$ appears in the disjunction.
For similar reasons, we can denote any conjunction as $f_{\wedge}(i_1, \cdots , i_M)$ where $i_j$ is the number of times the truth value $\tau_j$ appears in the conjunction.
The previous lemma leads us to the following straightforward corollary.
\begin{corollary}\label{cor:mod}
	for any $n_1,\cdots, n_M$ it holds that:
	\begin{itemize}
		\item 
		$f_{\vee}(n_1,\cdots,n_M) = 
		f_{\vee}(n'_1,\cdots,n'_M)$
		where $n'_{\iota} = 
		l_{\iota}+\Big( 
		(n_{\iota} - l^{\vee}_{\iota}) \mod 
		(p_{\iota}^{\vee}-l^{\vee}_{\iota})
		\Big)
		$, and
		\item
		$f_{\wedge}(n_1,\cdots,n_M)
		=
		f_{\wedge}(n''_1,\cdots,n''_M)$
		where $n''_{\iota} = 
		l_{\iota}+\Big( 
		(n_{\iota} - l^{\wedge}_{\iota}) \mod 
		(p_{\iota}^{\wedge}-l^{\wedge}_{\iota})
		\Big)
		$
	\end{itemize}
	for all $1 \le \iota \le M$.
\end{corollary}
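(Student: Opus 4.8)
The plan is to derive the corollary from Lemma~\ref{lem:per} by a coordinate-wise substitution, exploiting that $\vee$ and $\wedge$ are, by our standing assumption, associative and commutative. First I would make the grouped normal form explicit. Writing $\tau_\iota^{(m)}$ for the $m$-fold disjunction $\underbrace{\tau_\iota \vee \cdots \vee \tau_\iota}_{m}$, associativity and commutativity let me collect all occurrences of each truth value, so that by definition
\[
f_{\vee}(n_1,\cdots,n_M)=\tau_1^{(n_1)} \vee \cdots \vee \tau_M^{(n_M)},
\]
where a factor with $n_\iota=0$ is simply absent from the disjunction. This is exactly the $f_\vee$ notation introduced just before the corollary, so no work is needed here beyond recording it.

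Next I would apply Lemma~\ref{lem:per} one coordinate at a time. For each $\iota$ the lemma gives $\tau_\iota^{(n_\iota)}=\tau_\iota^{(n'_\iota)}$ with $n'_\iota = l^{\vee}_\iota + \big((n_\iota - l^{\vee}_\iota)\mod (p^{\vee}_\iota - l^{\vee}_\iota)\big)$. The decisive step is then a congruence argument: since $\vee$ is a genuine total function $\vee:\TV^2\to\TV$ on the finite truth-value set $\TV$, replacing any one argument of the top-level disjunction by an equal truth value leaves the value of the whole disjunction unchanged. Performing this substitution in every coordinate $\iota$ turns $\tau_1^{(n_1)} \vee \cdots \vee \tau_M^{(n_M)}$ into $\tau_1^{(n'_1)} \vee \cdots \vee \tau_M^{(n'_M)}=f_{\vee}(n'_1,\cdots,n'_M)$, which is the first claim. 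The conjunction case is verbatim the same, reading $\wedge$, $l^{\wedge}_\iota$, $p^{\wedge}_\iota$ in place of $\vee$, $l^{\vee}_\iota$, $p^{\vee}_\iota$.

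I do not expect a serious obstacle, since this is a corollary; the only point needing care is the bookkeeping in the modular reduction. The periodicity of Lemma~\ref{lem:per} only governs the tail of the sequence $\tau_\iota^{(1)},\tau_\iota^{(2)},\dots$, beyond the pre-period threshold $l^{\vee}_\iota$; for small counts $n_\iota < l^{\vee}_\iota$ (in particular $n_\iota=0$) the reduction must be read as leaving $n_\iota$ unchanged rather than literally evaluating the $\mod$, so that no spurious copies of $\tau_\iota$ are introduced into the disjunction. Concretely, I would verify that the $n'_\iota$ handed back satisfies $\tau_\iota^{(n_\iota)}=\tau_\iota^{(n'_\iota)}$ in every regime, which is precisely the content of Lemma~\ref{lem:per}, and that the empty-factor convention for $n_\iota=0$ is respected on both sides. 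With these conventions fixed, the three steps above go through unchanged and yield both identities.
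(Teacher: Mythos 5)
Your proposal is correct and follows exactly the route the paper intends: the paper states Corollary~\ref{cor:mod} as a ``straightforward corollary'' of Lemma~\ref{lem:per} with no written proof, and the intended argument is precisely your three steps --- group equal truth values via associativity and commutativity, reduce each block's count by Lemma~\ref{lem:per}, and conclude by congruence (replacing a subterm by an equal truth value). Your extra care about the regime $n_\iota < l^{\vee}_\iota$ (and the empty block $n_\iota=0$), where the modular formula must be read as leaving the count unchanged, is a legitimate tightening of an edge case the paper glosses over, but it does not change the substance of the argument.
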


%
%
%
%
%
%
%
%
%
%
%
\paragraph*{From $\semmvl{\,}$ to $\semsql{\,}$}
The translation is presented in Figure~\ref{fig:mvltothree}.

 \newsavebox{\mvltothree}
 \sbox{\mvltothree}{%
 	\parbox{0.5\textwidth}{%
 		\begin{center}
 			\textbf{Basic conditions}\\ $\;$ \\   
 \begin{tabularx}{\textwidth}{rl}
 	$(t\,\omega\, t')^{\tau} \df$& $\theta_{\omega,\tau}(t, t')$
 	\text{where} $\omega\in\{\eq, < ,\le , >,\ge \}$
 	\\
 \OMIT{	$(\brt \eq \brt')^{\tau} \df$& $\bigvee_{\tau_1,\cdots,\tau_n 
 		:\, \tau_1 \wedge\cdots\wedge\tau_n = \tau} 
 	\Big(
 	\wedge_{i=1}^n
 	\theta_{\eq,\tau_i}(t_i, t'_i)\Big)$
 	\\}
 \OMIT{	\\
 \text{ where}	&$\brt \df (t_1,\cdots, t_n), \,\,\brt' \df (t'_1,\cdots, t'_n) $\\
 	$\left(P(\brt)\right)^{\tau} \df$& $\theta_{P,\tau}(\brt)$
 	}\\
 	$(\brt \in E)^{\tau} \df$&
 	$ (f_{\vee}(n'_1,\cdots, n'_k) \eq \tau) $ \text{ where }
 	\\
 	$n'_{\iota} \df $ & $l_{\iota}+\Big( 
 	(n_{\iota} - l^{\vee}_{\iota}) \mod 
 	(p_{\iota}^{\vee}-l^{\vee}_{\iota})
 	\Big) $\text{ where } 
 	\\
 	$n_i \df $&
 	$\Group_{\emptyset}\langle \Count(\textasteriskcentered)\rangle \Big(\sigma_{\theta_i = \true}(E_i)\Big)$ \text{ where } 
 	\\
 	$E_i \df $&
 	$\pi_{\theta_i(\bar N) \to\theta_i} (F)$, $\ell(F) \df \bar N$, $\theta_i(\bar N) \df (\bar N \eq \brt )^{\tau_i}$
 	\\
 	$(t \,\omega \, \any(E))^{\tau} \df$&
 	$ (f_{\vee}(n'_1,\cdots, n'_k) \eq \tau) $
 	\\
 	$(t \,\omega \, \all(E))^{\tau} \df$&
 	$ (f_{\wedge}(n''_1,\cdots, n''_k) \eq \tau) $ \text{ where } 
 	\\
 	$n'_{\iota} \df $ & $l_{\iota}+\Big( 
 	(n_{\iota} - l^{\vee}_{\iota}) \mod 
 	(p_{\iota}^{\vee}-l^{\vee}_{\iota})
 	\Big) $, \\
 	$n''_{\iota} \df $ & $l_{\iota}+\Big( 
 	(n_{\iota} - l^{\wedge}_{\iota}) \mod 
 	(p_{\iota}^{\wedge}-l^{\wedge}_{\iota})
 	\Big) $,
 	\\
 	$n_i \df $&
 	$\Group_{\emptyset}\langle \Count(\textasteriskcentered)\rangle \Big(\sigma_{ (t\,\omega\,\ell(F))\eq \tau_i}(F)\Big)$,
 \end{tabularx}
 			$\;$ \\ 
 			\textbf{Composite conditions}\\ $\;$ \\   
 			\begin{tabularx}{\textwidth}{rl}
 				$(\theta_1\vee \theta_2)^{\tau} \df$
 				&
 				$\bigvee_{\tau_1,\tau_2:\tau_1\vee\tau_2 = \tau}\Big((\theta_1)^{\tau_1}\vee (\theta_2)^{\tau_2} \Big)
 				$
 				\\
 				$(\theta_1\wedge \theta_2)^{\tau} \df$
 				&
 				$\bigvee_{\tau_1,\tau_2:\tau_1\wedge\tau_2 = \tau}\Big((\theta_1)^{\tau_1}\wedge (\theta_2)^{\tau_2} \Big)
 				$\\
 				$(\neg \theta)^{\tau} \df$
 				&
 				$\bigvee_{\tau':\neg\tau' = \tau}
 				(\theta)^{\tau'}
 				$
 			\end{tabularx}
 		\end{center}
 	}%
 }%

 \newcommand{\mvltothreefig}{%
 	\begin{figure}[t]
 		\centering
 		\fbox{\usebox{\mvltothree}}
 		\caption{\mbox{$\semmvl{\,}$}  to SQL semantics
 		}
 		\label{fig:mvltothree}
 	\end{figure}
 }%
 \mvltothreefig

We note that the existence of the conditions $\theta_{\omega, \tau}$ and $\theta_{P,\tau}$ is a consequence of expressibility.

The proof structure is similar to other translations, and we use similar notations. Note that the following is straightforward from definition:
\begin{itemize}
\item 
$\semmvl{t \, \omega \, t'}_{D,\eta} = \tau$ if and only if 
$\semsql{\theta_{\omega,\tau}(t,t')}_{D,\eta} = \true$.
\item
$\semmvl{P(\brt)}_{D,\eta} = \tau$ if and only if
$\semsql{\theta_{P,\tau}(\brt)}_{D,\eta} = \true$.
\end{itemize}

Before presenting the translations for $\brt \,\omega\, \any(E)$ and $\brt \,\omega\, \all(E)$, we explain it intuitively. We note that even though the semantics of these
conditions are defined as a disjunction, we cannot use the translation of composite conditions since the disjunction’s size (and elements) depend on the actual semantics of $E$. 
We thus compute the disjunction by using the ability of \sqlra\ to count, and the fact the order of elements in the disjunction is not significant.
We use the following observation that is straightforward from the definition of the semantics of \sqlra.

\begin{proposition}\label{prop:mvl}
For every database $D$ and environment $\eta$ it holds that the integer $n_i \df \semmvl{\Group_{\emptyset}\langle \Count(\textasteriskcentered)\rangle \Big(\sigma_{ (t\,\omega\,\ell(F))\eq \tau_i}(F)\Big)}_{D,\eta}$ is the number of tuples $\brt'\in\semmvl{F}_{D,\eta}$ for which $\semmvl{\brt \,\omega\, \brt' }_{D,\eta} = \tau_i$.
\end{proposition}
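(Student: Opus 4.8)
The plan is to prove the identity by unwinding the semantics of the three constructs that build up the expression --- the empty-grouping aggregate $\Group_{\emptyset}\langle\Count(\star)\rangle$, the selection $\sigma$, and the condition $(t\,\omega\,\ell(F))\eq\tau_i$ --- and reducing the statement to a single equivalence describing which tuples of $F$ survive the selection. The argument leans throughout on the fact built into the definition of $\semmvl{\,}$ that every operation of \sqlra\ other than the evaluation of conditions follows SQL's semantics verbatim; only atomic comparisons and the connectives are reinterpreted through $\mvl$. Consequently the grouping, counting and filtering behave exactly as specified in Figure~\ref{fig:seme}, and the single place where $\mvl$ enters is the truth value assigned to the selection condition.

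First I would read off from the semantics of grouping and aggregation in Figure~\ref{fig:seme} that, for any $G$, the value $\semmvl{\Group_{\emptyset}\langle\Count(\star)\rangle(G)}_{D,\eta}$ is the single number $\Card(\semmvl{G}_{D,\eta})$, where $\Card$ counts tuples with their multiplicities --- which is precisely why $\Count(\star)$, and not $\Count$ over a column, is used, so that no row is discarded. Setting $G\df\sigma_{(t\,\omega\,\ell(F))\eq\tau_i}(F)$, the semantics of selection then gives $\brt'\in_k\semmvl{G}_{D,\eta}$ iff $\brt'\in_k\semmvl{F}_{D,\eta}$ and $\semmvl{(t\,\omega\,\ell(F))\eq\tau_i}_{D,\eta;\eta_{\ell(F)}^{\brt'}}=\true$; since that condition depends only on the tuple value $\brt'$, either all $k$ copies of $\brt'$ are kept or all are dropped, so multiplicities are preserved. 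Thus $n_i=\Card(\semmvl{G}_{D,\eta})$ is exactly the total multiplicity in $\semmvl{F}_{D,\eta}$ of those $\brt'$ that satisfy the selection condition, which is what ``the number of tuples $\brt'$'' means in a bag.

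Next I would use the semantics of terms to identify the surviving tuples. In the overridden environment $\eta;\eta_{\ell(F)}^{\brt'}$ the names $\ell(F)$ evaluate to $\brt'$, while the terms of $t$ evaluate as under $\eta$ (the attributes of $F$ do not clash with the free names of $t$, by the global no-repeated-names convention). Binding $\ell(F)$ to the values $\brt'$ is therefore the same as comparing $\brt$ (i.e.\ $t$) against the constant tuple $\brt'$, so that $\semmvl{t\,\omega\,\ell(F)}_{D,\eta;\eta_{\ell(F)}^{\brt'}}=\semmvl{\brt\,\omega\,\brt'}_{D,\eta}$.

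What remains, and is the heart of the matter, is that $(t\,\omega\,\ell(F))\eq\tau_i$ is a genuine two-valued indicator of this value being $\tau_i$: evaluated under $\semmvl{\,}$ it computes the $\mvl$ truth value of the comparison and tests its equality with the constant $\tau_i$, a comparison between two ordinary non-\NULL\ truth values, and therefore returns $\true$ exactly when $\semmvl{\brt\,\omega\,\brt'}_{D,\eta}=\tau_i$ and $\false$ otherwise. Combining this with the two unwindings above yields that $n_i$ equals the total multiplicity in $\semmvl{F}_{D,\eta}$ of the tuples $\brt'$ with $\semmvl{\brt\,\omega\,\brt'}_{D,\eta}=\tau_i$, which is the claim. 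The main obstacle is exactly this indicator step, and it is also the reason the statement is phrased with $\semmvl{\,}$ rather than $\semsql{\,}$: only under the $\mvl$ semantics does $t\,\omega\,\ell(F)$ produce the value $\tau_i$ against which we test. If one prefers to avoid reifying truth values, one appeals instead to SQL-expressibility for atomic predicates, which furnishes a condition $\theta_{\omega,\tau_i}(t,t')$ realizing the indicator; either reading makes the equivalence hold essentially by definition, needing nothing about $\mvl$ beyond the well-definedness of its comparison semantics. The remaining multiplicity bookkeeping and the no-clash assumption are routine and follow from the same induction that establishes the semantics of terms.
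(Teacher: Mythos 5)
Your proof is correct and takes essentially the same approach as the paper: the paper in fact gives no explicit proof, stating only that the proposition is ``straightforward from the definition of the semantics of \sqlra'', and your unwinding of the $\Count(\star)$ aggregate over empty grouping, the multiplicity-preserving selection, the environment override turning $t\,\omega\,\ell(F)$ into $\brt\,\omega\,\brt'$, and the indicator reading of $(t\,\omega\,\ell(F))\eq\tau_i$ is precisely that intended argument spelled out. Your two clarifications --- that the indicator can be justified either by reifying truth values or via SQL-expressibility, and that counting must respect bag multiplicities (which matters since $\mvl$ need not be idempotent) --- are exactly the details the paper leaves implicit.
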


With this, we can move to the proof:
\begin{enumerate}
    \item 
    It holds that 
    \[
    \semmvl{\brt\, \omega \, \any (E)}_{D,\eta} = \tau 
    \] if and only if (by definition)
    \[
    \bigvee_{\brt'\in \semmvl{E}_{D,\eta}} \semmvl{\brt \,\omega\, \brt'}_{D,\eta} = \tau
    \]
    if and only if (by induction hypothesis)
        \[
    \bigvee_{\brt'\in \semvl{F}_{D,\eta}} \semmvl{\brt \,\omega\, \brt'}_{D,\eta} = \tau
    \]
    if and only if (due to associativity and transitivity) 
    \[
    f_{\vee}(m_1,\cdots, m_M) = \tau
    \]
    where $m_i$ is the number of tuples $\brt'\in \semvl{F}_{D,\eta}$ for which $\semmvl{\brt \,\omega\, \brt'}_{D,\eta} = \tau_i$, if and only if (by Corollary~\ref{cor:mod} and Proposition~\ref{prop:mvl})
    \[
    \semvl{
    f_{\vee}(n'_1,\cdots, n'_M) \eq \tau} = \true
    \]
    if and only if (by definition)
 \[
    \semvl{{\brt\, \omega \, \any (E)}^{\tau}}_{D,\eta} = \true 
    \]
    \item
    The proof for $(\brt\,\omega\,\all(E))^{\tau}$ can be shown similarly. 
\end{enumerate}

 For composite conditions the proof follows directly. 

\paragraph*{From $\semsql{\,}$ to $\semmvl{\,}$} 
This direction uses the same translation as from $\semsql{\,}$ to $\semtwovl{\,}$. The correctness follows directly from the corresponding proof and from the fact that the many-valued connectors are extensions of the corresponding two-valued ones.

\OMIT{
We prove this similarly to before, we rewrite both the expression $E$ and the conditions  $\theta$ that appear in it recursively as follows:
\begin{itemize}
	\item We replace $\theta$ with $\theta^{\true}$, and
	\item We replace $E$ with $E'$ 
\end{itemize}
where $\theta^{\true}$ and $E'$ are defined also in a mutual recursion as described now.
In fact, to define $\theta^{\true}$, 
for every value $\tau_1 , \cdots, \tau_k$ in $\mvl$, we define  $\theta^{\tau_1}, \cdots , \theta^{\tau_k}$ such that for every $1\le i \le k$ the following hold:
\[
\semtwovl{\theta^{\tau_i}}_{D,\eta} = \true \text{ if and only if } \semmvl{\theta }_{D,\eta}= \tau_i
\]
for any database $D$ and environment $\eta$. 

For the proof, we denote the MVL semantics of the condition $t\eq t'$ where $t,t'$  are terms as follows:
\[
\semmvl{t \eq t'}
\df 
\left\{\begin{matrix}
\tau_1& c_1\\ 
\tau_2 & c_2 \\ 
\hdots& \hdots \\ 
\tau_k& c_k 
\end{matrix}\right.
\]
\paragraph{Propositional Conditions}
\paragraph*{Basis:}
\begin{enumerate}
	\item
	If $\theta = \true$ then $\theta^{\true} \df \true$ and $\theta^{\tau_j} \df \false$ for every $j\ne 1$;
	\item
	If $\theta = \false$ then $\theta^{\false} \df \true$ and $\theta^{\tau_j} \df \false$ for every $j\ne 2$;
	\item
	If $\theta = \isnul(t)$ then  $\theta^{\true} \df \isnul(t)$,  and  $\theta^{\false} \df \neg \isnul(t)$, and $\theta^{\tau_j} \df \false$  for every $j\ne 1 , 2$; 
	\item
	If $\theta \df \brt \eq \brt'$ then 
	If $\theta = (t = t')$ then $\theta^{\tau_i} \df c_i$ for every $1\le i \le k$ an
	\item 
	If $\theta = (t \not = t')$ then $\theta^{\true} \df (t \not = t')$  and $\theta^{\false} \df  (t  = t')$;
\end{enumerate}
\paragraph*{Step:}
If $\theta = f( \theta_1, \cdots, \theta_m)$ where $f$ is an $m$-ary connector in the multi-valued logic. 
For every $1\le i \le k$:
\[
\theta^{\tau_i} \df \bigvee_{\sigma_1,\cdots, \sigma_m : f(\sigma_1,\cdots, \sigma_m)= \tau_i} \left( \bigwedge_{j=1}^{m} \theta^{\sigma_j} \right)
\]

We have defined $\theta^{\tau_i}$ for every propositional $\theta$.

\paragraph{Other Conditions}
\begin{itemize}
	\item If $\theta \df (\brt \eq \brt')$ where
	$\brt \df (t_1,\cdots, t_n)$ and $\brt' \df (t'_1,\cdots, t'_n)$ then
	$\theta$ can be written as a propositional formula over atoms that might involve the terms  $t_1,\cdots, t_n, t'_1,\cdots, t'_n$.
	We can therefore define $\theta^{\tau_i}$ similarly to the step of the propositional case. 
	\item
	If $\theta  \df \bigvee_{\bar{t}'\in E} (\bar t \eq \bar t') $ then we we set\[ \theta^{\tau_i}  = \left\{ \begin{matrix}
	\true & \text{ if $\tau = \tau_i$}\\
	\false & \text{otherwise}
	\end{matrix}  \right. \] where
	$ 
	\tau \in \pi_{f_{\vee}} \Big(
	\Group_{\emptyset}[f_\vee(\theta')] 
	\big(  \pi_{\theta'}  (\App[\theta'(\star)](E') ) \big)\Big) 
	$
	and $\theta'$ is defined by $\theta'(t') \df \semtwovl{ t \eq t'}$
	\item 
	If $\theta \df \bigvee_{\bar{t}'\in P} (\bar t \eq \bar t') $  then we set 
	do similarly to the previous case while replacing $E'$ with $P$.
	\item
	If $\theta \df \isempty(E)$ then we set  \LP{???}
	%
\end{itemize}
\subsubsection{From two-valued to multi-valued}

\begin{lemma}
	For every RA expression $E$ there exists an expressions $E'$  such that for every database $D$ and environment $\eta$ it holds that 
	$\semtwovl{E}_{D,\eta}  = \semmvl{E'}_{D,\eta} $
	whenever the connectors $f_{\vee}, f_{\wedge}$ and $f_{\neg}$ in $\mvl$ are associative and commutative.
\end{lemma}

To prove this lemma, we rewrite both the expression $E$ and the conditions  $\theta$ that appear in it recursively as follows:
\begin{itemize}
	\item We replace $\theta$ with $\theta'$, and
	\item We replace $E$ with $E'$ 
\end{itemize}
where $\theta'$ and $E'$ are defined also in a mutual recursion as described now.

Let $\theta$ be a condition that is interpreted with the two-valued semantics. 
We define recursively $\theta'$ such that for every $D$ and $\eta$ the following holds:
\[
\semtwovl{\theta }_{D,\eta} = \true \text{ if and only if }
\semmvl{\theta ' }_{D,\eta} = \true
\]

\paragraph*{Basis:}
\paragraph{Propositional Conditions}
\paragraph*{Basis:}
\begin{enumerate}
	\item
	If $\theta = \true$ then $\theta' \df \true$;
	\item
	If $\theta = \false$ then $\theta' \df \false$;
	\item
	If $\theta = \isnul(t)$ then  $\theta' \df \isnul(t)$; 
	\item
	If $\theta = \isnotnul(t)$ then  $\theta' \df \isnotnul(t)$; 
	\item
	If $\theta = (t = t')$ then $\theta' \df (t = t')$;
	\item 
	If $\theta = (t \not = t')$ then $\theta' \df (t \not = t')$.
\end{enumerate}

\paragraph*{Step:}
We distinguish between the cases:
\begin{itemize}
	\item if $\theta \df \theta_1 \vee \theta_2$ then  $\theta' \df f_{\vee}(\theta'_1 , \theta'_2)$;
	\item if $\theta \df \theta_1 \wedge \theta_2$ then  $\theta' \df f_{\wedge}(\theta'_1 , \theta'_2)$;
	\item if $\theta \df \neg \theta_1 $ then  $\theta' \df f_{\neg}(\theta'_1)$;
\end{itemize}
(Recall that $f_{\vee}, f_{\wedge}$ and $f_{\neg}$ are the multi-valued extensions of their corresponding counterparts in the two-valued semantics.) 
}

\rm 

\section{Appendix: Details of the User Survey}
In this section we present the survey design. Other than asking respondents about classifying themselves as practitioners vs academics, the survey consists of 10 questions. 

\begin{itemize}
    \item Two questions look at queries where 3VL and 2VL outputs coincide. One of them is a query that finds the intersection of relations $R$ and $S$ by means of a \sqlkw{IN} subquery, and the other one looks for tuples in a relation where the value of two attributes is the same, while one attribute is \NULL. Here users are presented with the SQL output and an alternative and they overwhelmingly choose SQL.
    \item We then change the \sqlkw{IN} condition to \sqlkw{NOT IN} and give users the query \ref{sql:in} from the Introduction, as well as a database with a nonempty relation $R$ and $S=\{\NULL\}$. Output options are $R$ under 2VL and $\emptyset$ under 3VL; users strong preference is for 2VL. Specifically, 82\% opt for 2VL, 13\% for 3VL, and 5\% do not have a preference. 
    \item In the next question equality between attributes is changed for inequality, i.e., we are asking for tuples $(a,b)\in T$ with $\neg(a=b)$. If one component is \NULL, under 2VL such tuples will be selected but under 3VL they will not. Then 62\% of users prefer 2VL, while 22\% prefer 3VL and 15\% do not know. 
    \item Our next question concerns the behavior of \sqlkw{ANY} in the presence of nulls. We give the following query
    \begin{sql}
    SELECT R.A FROM R 
    WHERE NOT (R.A > ANY (SELECT S.A FROM S))
    \end{sql}
    where again $S=\{\NULL\}$ and offer as answers $R$ itself (returned  under 2VL) and $\emptyset$ returned by 3VL. The preferences are: 58\% for 2VL, 33\% for 3VL, and 9\% do not know.  
    \item We then switch to equivalences of queries that encode those mentioned in Proposition \ref{prop:restequiv}. These are valid under 2VL but not valid 3VL. The first pair of queries are queries (\ref{sql:in}) and (\ref{sql:exists}). By the 66\% to 44\% vote the users prefer 2VL.
    \item The next pair of queries is 
    \begin{sql}
    SELECT * FROM R
    \end{sql}
    and 
    \begin{sql}
    SELECT * FROM R WHERE R.A=1 OR R.A <> 1
    \end{sql}
    The respondents by the vote of 58\% to 42\% prefer them to be equivalent, i.e., follow 2VL. 
    \item The final pair of queries is 
    \begin{sql}
    SELECT R.A FROM R 
    WHERE NOT (R.A > ANY (SELECT S.A FROM S))
    \end{sql}
    and 
    \begin{sql}
    SELECT R.A FROM R 
    WHERE NOT EXISTS
       (SELECT * FROM S WHERE R.A > S.A)
      \end{sql}
      These queries are equivalent under 2VL, and this is the behavior favored by 58\% of respondents (with no don't-knows).
     \item As a final question, we have relations $R(A)$ and $S(A)$ with $S=\{1\}$ and $R=\{1,\NULL\}$ where $R.A$ is a foreign key referencing $R$. Under 2VL, the foreign key constraint will not be valid, but in SQL's 3VL it is. This is the most confusing example, with 31\% preferring 2VL, 38\% preferring 3VL, and 31\% not knowing whether the constraint should hold.  
\end{itemize}

\end{document}